\newtheorem{theorem}{Theorem}[section]
\newtheorem{lemma}[theorem]{Lemma}
\theoremstyle{definition}
\newtheorem{example}[theorem]{Example}
\newtheorem*{assumption}{Assumption}
\newcommand{\showme}{yes}
\newcommand{\fullversion}[1]{%
  \ifdefstring{\showme}{yes}{#1}{}%
}
\newcommand{\shortversion}[1]{%
  \ifdefstring{\showme}{no}{#1}{}%
}
\newcommand{\Changed}[1]{#1}
\DeclareSIUnit{\txn}{txn}
\DeclareSIUnit{\batch}{batch}
\newcommand{\n}{\mathbf{n}}
\newcommand{\f}{\mathbf{f}}
\newcommand{\Name}[1]{\textnormal{\textsc{#1}}}
\newcommand{\PBFT}{\Name{Pbft}}
\newcommand{\RCC}{\Name{RCC}}
\newcommand{\Tusk}{\Name{Tusk}}
\newcommand{\MName}[1]{\textsc{#1}}
\newcommand{\digest}[1]{d_{#1}}
\newcommand{\twat}{\textit{txns\_w\_assigned\_oi}}
\newcommand{\hts}{highest\_oi\_list}
\newcommand{\lotm}{local\_timer}
\newcommand{\GETS}{:=}
\newenvironment{myprotocol}{
    \hrule
    \small
    \smallskip
    \algsetup{linenosize=\footnotesize}
    \begin{algorithmic}[1]
        
        \newcommand{\SPACE}{\item[]}
        \newcommand{\TITLE}[2]{\item[] \textbf{\underline{##1}} (##2) \textbf{:}\\[2pt]}
        \makeatletter
            \newcommand{\EVENT}[1]{\STATE \textbf{event} ##1 \textbf{do}\begin{ALC@g}}
            \newcommand{\ENDEVENT}{\end{ALC@g}}
        \makeatother
        
        \makeatletter
            \newcommand{\FUNCTION}[2]{\STATE \textbf{function} \Name{##1}(##2) \textbf{do}\begin{ALC@g}}
            \newcommand{\ENDFUNCTION}{\end{ALC@g}}
        \makeatother
}{
    \end{algorithmic}%
    \hrule
}
\tikzset{
  spotlessedge/.style   = {black, ->, >=stealth},
}
\tikzset{
    >=Stealth,
    smalltext/.append style={scale=0.7},
    dot/.style={circle,scale=0.35,draw=black,fill=black}
}
\tikzset{
    >=Stealth,
    plot/.append style={baseline,scale=0.475},
    label/.append style={font=\strut\footnotesize},
    dot/.style={circle,scale=0.35,draw=black,fill=black},
}
\pgfplotsset{
    tick label style={font=\large},
    legend style={font=\Large,cells={anchor=west}},
    title style={font=\Large},
    label style={font=\Large},
    width=262.5pt,
    height=185pt,
    every axis/.append style={
        ylabel near ticks,
        xlabel near ticks,
        mark size=2.5pt,
        cycle list name=mycyclelist,
        font=\Large,
        y tick label style={
                        /pgf/number format/precision=1,
                        /pgf/number format/fixed,
                        /pgf/number format/fixed zerofill
                    }
    },
    barstyle/.append style={
        ybar,
        bar width={0.5cm},
        enlarge x limits=0.4,
        enlarge y limits={upper=0.025},
        ymin=0,
        xtick=data
    }
}
\newcommand{\FDAG}{\Name{FairDAG}}
\newcommand{\FairDAG}{\Name{FairDAG-AB}}
\newcommand{\FairDAGRL}{\Name{FairDAG-RL}}
\newcommand{\Pompe}{\Name{Pompe}}
\newcommand{\Themis}{\Name{Themis}}
\newcommand{\Rashnu}{\Name{Rashnu}}
\newcommand\vldbdoi{10.14778/3773749.3773763}
\newcommand\vldbpages{265-278}
\newcommand\vldbvolume{19}
\newcommand\vldbissue{2}
\newcommand\vldbyear{2025}
\newcommand\vldbtitle{\shorttitle} 
\newcommand\vldbavailabilityurl{https://github.com/apache/incubator-resilientdb/tree/fairdag}
\newcommand\vldbpagestyle{empty}
\begin{document}
\title{FairDAG: Consensus Fairness over Multi-Proposer Causal Design}

\author{Dakai Kang, Junchao Chen, Tien Tuan Anh Dinh$^{\dagger}$, Mohammad Sadoghi}
\affiliation{
\institution{Exploratory Systems Lab, University of California, Davis}
	\institution{$\dagger$Deakin University}
}

%%
%% The abstract is a short summary of the work to be presented in the
%% article.
\begin{abstract}
The rise of cryptocurrencies like Bitcoin and Ethereum has driven interest in blockchain database technology, with smart contracts enabling the growth of decentralized finance (DeFi). However, research has shown that adversaries exploit transaction ordering to extract profits through attacks like front-running, sandwich attacks, and liquidation manipulation. This issue affects blockchains where block proposers have full control over transaction ordering. To address this, a more fair transaction ordering mechanism is essential.

Existing fairness protocols, such as \Pompe{} and \Themis{}, operate on leader-based consensus protocols, which not only suffer from low throughput caused by single-leader bottleneck, but also give adversarial block proposers to manipulate transaction ordering. To address these limitations, we propose a new framework \FDAG{} that runs fairness protocols on top of DAG-based consensus protocols, which improves protocol performance in both throughput and fairness quality, leveraging the multi-proposer design and validity property of DAG-based consensus protocols.

We conducted a comprehensive analytical and experimental evaluation of two \FDAG{} variants—\FairDAG{} and \FairDAGRL{}. Our results demonstrate that \FDAG{} outperforms prior fairness protocols in both throughput and fairness quality.

\end{abstract}

\maketitle

%%% do not modify the following VLDB block %%
%%% VLDB block start %%%
\pagestyle{\vldbpagestyle}
\begingroup\small\noindent\raggedright\textbf{PVLDB Reference Format:}\\
Dakai Kang, Junchao Chen, Tien Tuan Anh Dinh, Mohammad Sadoghi. \vldbtitle. PVLDB, \vldbvolume(\vldbissue): \vldbpages, \vldbyear.\\
\href{https://doi.org/\vldbdoi}{doi:\vldbdoi}
\endgroup
\begingroup
\renewcommand\thefootnote{}\footnote{\noindent
This work is licensed under the Creative Commons BY-NC-ND 4.0 International License. Visit \url{https://creativecommons.org/licenses/by-nc-nd/4.0/} to view a copy of this license. For any use beyond those covered by this license, obtain permission by emailing \href{mailto:info@vldb.org}{info@vldb.org}. Copyright is held by the owner/author(s). Publication rights licensed to the VLDB Endowment. \\
\raggedright Proceedings of the VLDB Endowment, Vol. \vldbvolume, No. \vldbissue\ %
ISSN 2150-8097. \\
\href{https://doi.org/\vldbdoi}{doi:\vldbdoi} \\
}\addtocounter{footnote}{-1}\endgroup
%%% VLDB block end %%%

%%% do not modify the following VLDB block %%
%%% VLDB block start %%%
\ifdefempty{\vldbavailabilityurl}{}{
\vspace{.3cm}
\begingroup\small\noindent\raggedright\textbf{PVLDB Artifact Availability:}\\
The source code, data, and/or other artifacts have been made available at \url{\vldbavailabilityurl}.
\endgroup
}
%%% VLDB block end %%%

\section{Introduction}\label{sec:intro}

The emergence of cryptocurrencies, including Bitcoin~\cite{bitcoin} and Ethereum~\cite{ethereum}, has sparked broad interest in blockchain database technology~\cite{mysten, solana, aptos, chainlink}. Blockchain enables a new class of applications, namely decentralized finance (DeFi)~\cite{defi,defismartcontract,defieth,defiartical, heimbach2022risks}, whose market capitalization exceeds $70$ billion. DeFi requires consistency and fairness in transaction ordering. The former ensures that all participants agree on the same transaction order, which has been addressed under crash-failure settings of traditional distributed databases, for example~\cite{caerus, lowlatency, dynamast, calvin, slog}, and under Byzantine-failure settings in blockchains and verifiable databases~\cite{ethereum, blockchainmeetsdb,spitz} where Byzantine participants can have arbitrary malicious behavior. In the presence of Byzantine participants, even though transaction ordering is consistent, its fairness remains vulnerable to ordering manipulation attacks. Such an “order manipulation crisis” is possible because block proposers have full control over transaction selection and ordering. Byzantine proposers can censor or reorder transactions to extract \emph{Maximal Extractable Value (MEV)} from blocks~\cite{flashboysfrontrunning, yang2022sok, intime, richricher,fairnessmatters, zhang2024no, malkhi2022maximal, weintraub2022flash}, which is unfair to other participants. Attacks include front-running, back-running, sandwich attacks, liquidation manipulation, and time-bandit attacks~\cite{flashboysfrontrunning,sandwichattacks,liquiditymanipulation,orderattacks,qin2022quantifying}.

The key to achieving fair transaction ordering lies in preventing proposers from dominating the ordering. Existing studies~\cite{wendyfairness,themis,pompe,rashnu, quickorder,constantinescu2023fair,mu2024separation} have proposed various fairness protocols. Unlike traditional protocols where each block contains a list of transactions, in fairness protocols, each block contains local orderings from a set of participants.
Once blocks are committed, a final transaction ordering is derived from the local orderings. Different fairness protocols guarantee different fairness properties, reflecting the preferences of correct participants who honestly report the order in which they receive transactions. For example, \Pompe{}~\cite{pompe} calculates the \emph{assigned ordering indicator} for each transaction and orders transactions based on it. It guarantees that transaction $T_1$ is ordered before $T_2$ if every correct participant receives $T_1$ before any correct participant receives $T_2$, a property named \emph{Ordering Linearizability}. \Themis{}~\cite{themis} constructs a dependency graph among transactions and determines the ordering according to the edges of the graph. It guarantees that if a $\gamma$ proportion of correct participants receive $T_1$ before $T_2$, then $T_1$ will be ordered \emph{no later} than $T_2$ -a property named \emph{$\gamma$-Batch-Order-Fairness}. 

We observe that a well-designed fairness protocol should achieve the following goals: 

\begin{enumerate}[label=G\arabic*]
    \item \textbf{Resilience to Ordering Manipulation.} The protocol should limit Byzantine participants’ influence on the final transaction order to preserve fairness properties.
    \item \textbf{Minimal Correct Participants Requirement.} The protocol should preserve fairness properties while relying on as few correct participants as possible.
    \item \textbf{High Performance.} Fairness protocols should minimize the overhead introduced by fair ordering, achieving high throughput and low latency.
\end{enumerate}

Unfortunately, previous fairness protocols~\cite{aequitas, themis,pompe, quickorder, rashnu, wendyfairness} leverage leader-based consensus protocols~\cite{pbftj,hotstuff,poe}, relying on a single leader to collect local orderings from other participants. A Byzantine leader can manipulate transaction ordering by selectively collecting local orderings to maximize its \emph{MEV}. Moreover, even without Byzantine behavior, a single leader may become a performance bottleneck when the workload exceeds its capacity, as the message complexity of broadcasting collected local orderings grows quadratically with the number of participants.

To address the challenges posed by the underlying leader-based consensus protocols, we propose \FDAG{}, a novel framework that runs fairness protocols on top of DAG-based consensus protocols~\cite{narwhal,dagrider,bullshark, shoal, shoalpp, mysticeti}, which provide the following features that can enhance ordering fairness:

\begin{itemize}
    \item \textbf{Multi-Proposer High-throughput Design}: DAG-based protocols allow all participants to propose blocks in parallel. This approach improves system performance by alleviating the bottleneck introduced by a single leader.
    \item \textbf{Validity through Causal Design}: Blocks in DAG-based protocols reference blocks from other participants, forming a \emph{Directed Acyclic Graph (DAG)}. The causal relationship in DAG guarantees that vertices from correct participants are eventually committed by all correct participants.
\end{itemize}

The \emph{Multi-Proposer High-throughput Design} removes the bottleneck caused by a single leader (\textbf{G3}). The \emph{validity} constrains the Byzantine participants' ability to selectively collect local orderings (\textbf{G1}), and thus lowers the requirement on the number of correct participants (\textbf{G2}).

In \FDAG{}, each participant proposes its block containing the local ordering as a DAG vertex and reliably broadcasts it, and a final transaction ordering is deterministically generated based on the vertices committed by the underlying DAG-based consensus protocols. We designed two variants of \FDAG{}, namely \emph{absolute-ordering} \FairDAG{} and \emph{relative-ordering} \FairDAGRL{}.

There are two main challenges in employing DAG-based protocols as the underlying consensus layer for fairness. First, to mitigate the high latency of DAG-based consensus protocols, participants should leverage uncommitted DAG vertices to reduce the latency for fairness decisions, but participants may hold inconsistent views of uncommitted vertices. Second, Byzantine participants may attempt to manipulate the ordering by selectively ignoring vertices proposed by specific participants.
\FDAG{} addresses these challenges through a novel ordering indicator manager, adaptive fairness thresholds, and new DAG construction rules that collectively ensure fairness.

We make the following contributions: 
\begin{enumerate}
    \item We propose \FairDAG{}, a \emph{absolute} fairness protocol that guarantees \emph{Ordering Linearizability}. We propose a \emph{Ordering Indicator Manager} and an adaptive fairness threshold called \emph{LPAOI} that are compatible with the multi-proposer design and commit rules of the DAG-based consensus protocols.
    \item We propose \FairDAGRL{}, a \emph{relative} fairness protocol that guarantees \emph{$\gamma$-batch-order-fairness}. Leveraging the validity property of DAG-based consensus protocols, we adopt new thresholds for dependency graph construction to improve system performance and reduce the requirement of minimal correct participant number.
    \item In \FDAG{}, we apply new rules for constructing a DAG to guarantee fairness against adversarial participants.
    \item We conducted comprehensive analytical and experimental evaluation of our protocols. The results show that: compared to \Pompe{} and \Themis{}, \FairDAG{} and \FairDAGRL{} outperform in both throughput and fairness quality, which reflects the fairness protocols' resilience against adversarial ordering manipulations.
\end{enumerate}

This paper is structured as follows. Section~\ref{sec:prelim} presents the background. Section~\ref{sec:model} introduces the system model. Section~\ref{sec:overviewfairdag} \Changed{presents an overview of \FDAG{} protocols. Section~\ref{sec:design} and Section~\ref{sec:design2} describe the details of \FairDAG{} and \FairDAGRL{}\shortversion{\footnote{\Changed{see Section 8 in our extended report~\cite{extended-report} for correctness proofs}}.}} Section~\ref{sec:newcomparison} analytically compares \FDAG{} with prior fairness protocols \Pompe{} and \Themis{}. \fullversion{Section~\ref{sec:proof} presents the correctness proof.} Section~\ref{sec:eval} presents the experimental evaluation of \FDAG{} and baseline protocols. Section~\ref{sec:related} discusses other related works, and Section~\ref{sec:conclusion} concludes. 
\section{Background}\label{sec:prelim}

\FairDAG{} and \FairDAGRL{} execute fairness protocols atop DAG-based consensus protocols. Beginning with this section, our discussion is framed within the context of \emph{Byzantine Fault Tolerant (BFT)} protocols, where participants are referred to as replicas. \emph{Byzantine replicas}, corrupted by an adversary, may exhibit arbitrary malicious behavior, whereas the remaining \emph{correct replicas} behave benignly.
In this section, we present the definitions of three fairness properties and introduce the DAG-based consensus protocols.

\begin{figure}[t]
    \begin{tikzpicture}[yscale=0.6]
        \node at (-2,1) {$R_1: \{T_1, T_2, T_3, T_4\}$};
        \node at (-2,0.4) {$R_2: \{T_2, T_3, T_4, T_1\}$};
        \node at (-2,-0.2) {$R_3: \{T_3, T_4, T_1, T_2\}$};
        \node at (-2,-0.8) {$R_4: \{T_4, T_1, T_2, T_3\}$};
    
        \node[minimum width=0.4cm,draw,circle,font=\tiny] (a) at (0.4,0) {$T_1$};
        \node[minimum width=0.4cm,draw,circle,font=\tiny] (c) at (2,0.8) {$T_2$};
        \node[minimum width=0.4cm,draw,circle,font=\tiny] (d) at (3.6,0) {$T_3$};
        \node[minimum width=0.4cm,draw,circle,font=\tiny] (f) at (2,-0.8) {$T_4$};
        
        \draw[->] (a) -- (c);
        \draw[->] (c) -- (d);
        \draw[->] (d) -- (f);
        \draw[->] (f) -- (a);
    \end{tikzpicture}
    \caption{Condorcet Cycle}
    \label{fig:cycle}
    %\vspace{-8mm}
\end{figure}

\subsection{Receive-Order-Fairness}\label{ssec:receiveorder}

\begin{definition}\label{def:Receive-Order-Fairness}
\textbf{Receive-Order-Fairness.} For any two transactions $T_1$ and $T_2$, if all correct replicas receive $T_1$ before $T_2$, then $T_1$ must be ordered \emph{before} $T_2$ in the final ordering.
\end{definition}

We show that it is impossible to always guarantee \emph{Receive-Order-Fairness} in the presence of Byzantine replicas for \emph{Condorcet Cycles}. 
Figure~\ref{fig:cycle} illustrates this impossibility with a concrete example. Consider four replicas ${R_1, R_2, R_3, R_4}$, among which at most one may be Byzantine. Let there be four transactions ${T_1, T_2, T_3, T_4}$. As shown on the left side of the figure, for any two commands $T_i$ and $T_{i+1}$ (modulo 4) , three replicas receive $T_i$ before $T_{i+1}$ but the fourth differs. Since the identity of the Byzantine replica is unknown, we respect all majority-endorsed orderings supported by at least three replicas when determining the final ordering. The right side of Figure~\ref{fig:cycle} depicts a directed graph where an edge $T_i \rightarrow T_j$ indicates that at least three replicas received $T_i$ before $T_j$. The resulting graph contains a Condorcet Cycle~\cite{brandt2016handbook,condorcet1785essay}—a cycle of pairwise preferences that cannot be linearly extended without violating at least one of them. Hence, the \emph{Receive-Order-Fairness} is impossible in this case.

\subsection{Ordering Linearizability}\label{ssec:Linearizability}

\emph{Ordering Linearizability} is a fairness property \Changed{introduced} by \Pompe{}~\cite{pompe} and \Changed{adopted} by our protocol, \FairDAG{}. To achieve this property, each replica assigns a monotonically increasing \emph{ordering indicator} \Changed{(denoted {$oi$}) to each transaction reflecting the order it receives the transactions.} The final ordering is then derived from ordering indicators collected from a majority of replicas.

Denoting by $ois_i^C$ the set of ordering indicators for the transaction $T_i$ from the correct replicas, we define \emph{Ordering Linearizability}:

\begin{definition}\label{def:Linearizability}
\textbf{Ordering Linearizability.} For any two transactions $T_1$ and $T_2$, if all ordering indicators in ${ois}_1^C$ are smaller than all those in ${ois}_2^C$, i.e.,
\[
\forall oi_1 \in {ois}_1^C, \forall oi_2 \in {ois}_2^C: oi_1 < oi_2.
\]
then $T_1$ must be ordered before $T_2$ \Changed{in the final ordering.}
\end{definition}

Figure~\ref{fig:oi} illustrates this definition with an example involving four replicas, where replicas $R_1$, $R_2$, $R_3$ are correct, and $R_4$ is Byzantine. Each replica assigns local ordering indicators to four transactions. For transactions $T_1$ and $T_4$, the correct replicas assign ${ois}_1^C = \{2, 1, 1\}$ and ${ois}_4^C = \{3, 4, 4\}$, respectively. Since \Changed{all} indicators in ${ois}_1^C$ are smaller than \Changed{those} in ${ois}_4^C$, \Changed{any protocol satisfying Ordering Linearizability—such as \Pompe{} and \FairDAG{}—must place $T_1$ before $T_4$ in the final ordering, regardless of the Byzantine replica’s input.}

\begin{figure}[t]
\begin{equation}
    \begin{split}
        R_1 &:\ \{(T_2, 1), (T_1,2), (T_4,3), (T_3,4)\} \\[-4pt]
        R_2 &:\ \{(T_1, 1), (T_3,2), (T_2,3), (T_4,4)\} \\[-4pt]
        R_3 &:\ \{(T_1, 1), (T_2,1), (T_3,3), (T_4,4)\} \\[-4pt]
        \textcolor{red}{R_4} &:\textcolor{red}{\ \{(T_1, 1), (T_2,2), (T_3,3), (T_4,4)\}\notag}
    \end{split}
\end{equation}
    
    \caption{$T_1$ will be ordered before $T_4$ if \emph{Ordering Linearizability} holds regardless of the local ordering from $R_4$.}
    \label{fig:oi}
\end{figure}

\subsection{$\gamma$-Batch-Order-Fairness}

In Section~\ref{ssec:receiveorder}, we showed that it is impossible to guarantee \emph{Receive-Order-Fairness} \Changed{in the presence of unknown Byzantine replicas.} However, a weaker yet practical variant, \emph{$\gamma$-Batch-Order-Fairness}, can be achieved by both \Themis{}~\cite{themis} and our protocol \FairDAGRL{}.

We say that a transaction $T_2$ is \textbf{dependent} on transaction $T_1$, denoted as $T_1 \rightarrow T_2$, if $T_1$ must be ordered before $T_2$ in the final ordering. Due to the presence of \emph{Condorcet cycles}, \Changed{such dependencies can form cycles among transactions.}

\begin{definition}
    A batch $S$ of transactions is \emph{cyclic dependent} if for any two transactions $T_1, T_2$ in $S$, there is a list of transactions that form a dependency path from $T_1$ to $T_2$.
\end{definition}

The final ordering can be \Changed{partitioned} into a sequence of non-overlapping \Changed{batches} $S_1, S_2, \dots$, where each $S_i$ is a \emph{maximal cyclically dependent batch},  i.e., for any $i$, $S_i \cup S_{i+1}$ is not cyclically dependent.

We say that $T_1$ is ordered \textbf{no later than} $T_2$ if $T_1$ is in the same or an earlier batch than $T_2$. And we define \emph{$\gamma$-Batch-Order-Fairness} as follows:

\begin{definition}\label{def:batchorder}
    \textbf{$\gamma$-Batch-Order-Fairness.} For any two transactions $T_1$ and $T_2$, if at least a fraction $\gamma$ of correct replicas receive $T_1$ before $T_2$, then $T_1$ must be ordered \emph{no later than} $T_2$ in the final ordering.
\end{definition}

Figure~\ref{fig:batch} presents an example satisfying \emph{$\gamma$-Batch-Order-Fairness}, where we have $\n=4$ replicas (at most $\f=1$ replica can be Byzantine) and 6 transactions. Assuming that the fairness protocol generates a final ordering that can be split into three \emph{cyclic dependent batches}, for any two transactions, \emph{$\gamma$-Batch-Order-Fairness} holds. For example, $T_0$ is received earlier than $T_1$ by correct replicas of $\gamma(\n{-}\f)=3$ and $T_0$ is ordered in a batch earlier than $T_1$.

\begin{figure}[t]
    
    \begin{tikzpicture}[yscale=0.7]
        \node at (-2.2,1.5) {$R_1: \{T_0, T_1, T_2, T_3, T_4, T_5\}$};
        \node at (-2.2,0.75) {$R_2: \{T_0, T_2, T_3, T_4, T_1, T_5\}$};
        \node at (-2.2,0) {$R_3: \{T_0, T_3, T_4, T_1, T_2, T_5\}$};
        \node at (-2.2,-0.75) {\textcolor{red}{$R_4: \{T_0, T_4, T_1, T_2, T_3, T_5\}$}};
        \node at (-2.2,-1.5) {Final Ordering: $\{T_0, T_1, T_2, T_3, T_4, T_5\}$};

        \node[minimum width=0.8cm, minimum height=0.6cm, draw, rectangle] (b1) at (2.1,1.2) {$b_1:\{T_0\}$};
        \node[minimum width=0.8cm, minimum height=0.6cm, draw, rectangle] (b2) at (2.1,0) {$b_2:\{T_1,T_2,T_3,T_4\}$};
        \node[minimum width=0.8cm, minimum height=0.6cm, draw, rectangle] (b3) at (2.1,-1.2) {$b_3:\{T_5\}$};
        \draw[->] (b1) -- (b2);
        \draw[->] (b2) -- (b3);

        \node at (2,-2.3) {(b)};
        \node at (-2,-2.3) {(a)};

    \end{tikzpicture}
    \caption{A final ordering of six transactions that satisfies $\gamma$-Batch-Order-Fairness with \Changed{$\gamma = \frac{2}{3}$, 3 correct replicas, and 1 Byzantine replica.}}
    \label{fig:batch}
\end{figure}

\subsection{DAG-based Consensus Protocols}~\label{ss:dagintro}

DAG-based BFT consensus protocols\Changed{~\cite{narwhal,dagrider,bullshark}} operate in rounds. In each round $r$, every replica proposes a block, referred to as a DAG vertex. Each vertex references multiple vertices from the previous round $r{-}1$, represented as edges that encode the causal dependencies between DAG vertices, forming a \emph{Directed Acyclic Graph (DAG)}. The \emph{causal history} of a DAG vertex includes all vertices reachable via the reference paths.

Every $k$ rounds (e.g. $k=2$ in \Tusk{}~\cite{narwhal}), a random or predetermined leader vertex is elected. These leader vertices are committed in an ascending order of round, and their causal histories are committed in a deterministic order. To ensure reliable dissemination, most DAG-based protocols employ reliable broadcast (RBC) mechanisms. With RBC and carefully designed commit rules, DAG-based protocols guarantee the following properties even in an \emph{asynchronous network} without message delay bound:

\begin{itemize}
\item \textbf{Agreement}: If a correct replica commits a vertex $v$, then all correct replicas eventually commit $v$.
\item \textbf{Total Order}: If a correct replica commits $v$ before $v'$, then every correct replica commits $v$ before $v'$.
\item \textbf{Validity}: If a correct replica broadcasts a vertex $v$, then all correct replicas eventually commit $v$.
\end{itemize}

Compared to consensus protocols with a single leader, the multi-proposer design of DAG-based protocols enables higher throughput. But this comes at the cost of higher commit latency due to the overhead of RBC and multi-round commit rules.
\section{System Model}\label{sec:model}

We consider a distributed system consisting of a set of replicas and a potentially unbounded number of clients. The system is subject to Byzantine faults and operates under either asynchronous or partially synchronous network conditions. Our model covers client behavior, replica corruption, authentication assumptions, and fairness-specific threat considerations.

\subsection{Clients}

Clients issue transactions to replicas and wait for execution results. They may behave arbitrarily with no correctness assumptions.

\subsection{Replicas}

In the system, there are a total of $\n$ replicas and an \emph{adaptive adversary} capable of corrupting up to $\f$ replicas during execution. The corrupted replicas, referred to as Byzantine or malicious replicas, may exhibit arbitrary malicious behavior.

\fullversion{
\begin{figure}[t]

\centering
\begin{tikzpicture}[
    client/.style={draw, shape=ellipse, minimum size=0.8cm, text centered, fill=blue!20},
    replica/.style={draw, cylinder, shape border rotate=90, aspect=0.3, minimum height=0.5cm, minimum width=0.5cm, text centered, fill=green!20},
    external network/.style={dashed, blue},
    internal network/.style={dashed, red},
    arrow/.style={->, thick},
    textnode/.style={draw=none, rectangle, text centered},
    >=stealth,
    font=\small,
    scale=0.5
]

% Clients
\node (client1) at (0, 3
) {\includegraphics[width=1cm]{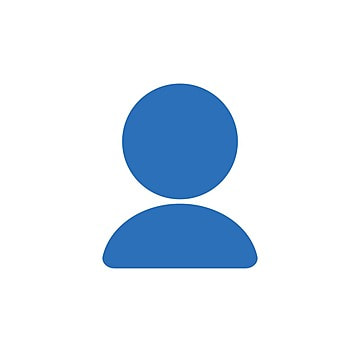}};
\node (client2) at (0, 0) {\includegraphics[width=1cm]{images/human_icon.jpg}};

\node (clientpoint1) at (1.2,3) {};
\node (clientpoint2) at (1.2,0) {};
\node (clientpoint3) at (1.2, 2.8) {};
\node (clientpoint4) at (1.2, -0.2) {};

\node (point1) at (3.5, 3) {};
\node (point2) at (3.5, 0) {};
\node (point3) at (3.5, 2.8) {};
\node (point4) at (3.5, -0.2) {};

% Replicas
\node[replica] (replica1) at (5, 3) {$R_1$};
% \node (lo1) at (6.5, 3) {$T_1, T_2, T_3, T_4$};
\node[draw, circle, minimum size=0.2cm, draw=white, text centered] (replicapoint1) at (6, 2.5) {};
\node[replica] (replica2) at (12, 3) {$R_2$};
\node[draw, circle, minimum size=0.2cm, draw=white, text centered] (replicapoint2) at (11, 2.5) {};
% \node (replicapoint2) at (6,2.5) {A};
% \node (lo2) at (10.5, 3) {$T_2, T_1, T_3, T_4$};
\node[replica] (replica3) at (5, 0) {$R_3$};
\node[draw, circle, minimum size=0.2cm, draw=white, text centered] (replicapoint3) at (6, 0.5) {};
% \node (lo3) at (6.5, 0) {$T_1, T_3, T_4, T_2$};
\node[replica] (replica4) at (12, 0) {$R_4$};
\node[draw, circle, minimum size=0.2cm, draw=white, text centered] (replicapoint4) at (11, 0.5) {};
% \node (lo4) at (10.5, 0) {$T_1, T_2, T_3, T_4$};

% External network (Clients to Replicas)
\node[textnode] at (0, -1.5) {Clients};
\draw[external network] (1, -1) rectangle (3.6, 4);
\node[textnode] at (2.3, 1.6) {\textbf{External}};
\node[textnode] at (2.3, 1.2) {\textbf{Network}};

% Internal network (Between Replicas)
\node[textnode] at (8.5, -1.5) {Replicas};
\draw[internal network] (5.8, -0.5) rectangle (11.2, 3.5);
\node[textnode] at (8.4, 3) {\textbf{Internal Network}};
\node[textnode] at (8.4, 0) {Consensus Messages};

\draw[line width = 1pt] (4.2, -1) rectangle (12.8, 4);
\draw[line width = 1pt] (-0.7, -1) rectangle (0.7, 4);

% Transactions from Clients to Replicas
\node[textnode] at (2.4, 3.5) {$T_1, T_2$};
\node[textnode] at (2.4, 2.3) {$Rsp_1, Rsp_2$};
\node[textnode] at (2.4, 0.4) {$T_3, T_4$};
\node[textnode] at (2.4, -0.5) {$Rsp_3, Rsp_4$};
\draw[->] (clientpoint1) --  (point1);
\draw[->] (clientpoint2) --  (point2);
\draw[->] (point3) --  (clientpoint3);
\draw[->] (point4) --  (clientpoint4);

\draw[<->,line width=1pt] (replicapoint1) --  (replicapoint2);
\draw[<->,line width=1pt] (replicapoint1) --  (replicapoint3);
\draw[<->,line width=1pt] (replicapoint1) --  (replicapoint4);
\draw[<->,line width=1pt] (replicapoint2) --  (replicapoint3);
\draw[<->,line width=1pt] (replicapoint2) --  (replicapoint4);
\draw[<->,line width=1pt] (replicapoint1) --  (replicapoint3);
\draw[<->,line width=1pt] (replicapoint3) --  (replicapoint4);

\end{tikzpicture}
\caption{Network topology with Clients, Replicas, and Networks. $Rsp_{i}$ represents a client response for transaction $T_i$, including the execution results of $T_i$.}
\label{fig:clients_replicas}
\end{figure}

}

Regarding transaction ordering, Byzantine replicas may reorder transactions in local orderings and ignore unfavorable local orderings from other replicas. Correct replicas are honest about local orderings in which the transactions are received.

Fairness protocols differ in resilience to Byzantine faults. Specifically, \FairDAG{} and \Pompe{} require $\n> 3\f$; \Themis{} requires $\n>\frac{(2\gamma+2)\f}{2\gamma-1}$; and \FairDAGRL{} requires $\n>\frac{(2\gamma+1)\f}{2\gamma-1}$, $\frac{1}{2} < \gamma \le 1$.

\subsection{Authentication}

We assume \emph{authenticated communication}, where Byzantine replicas cannot forge messages from correct replicas. Authentication is enforced through \emph{Public Key Infrastructure (PKI)}~\cite{cryptobook}. 

For integrity verification, each transaction $T_i$ is associated with a digest $d_i$, computed using a \emph{secure collision-resistant cryptographic hash function}~\cite{cryptobook}.

\subsection{Network}

\fullversion{
As illustrated in Figure~\ref{fig:clients_replicas}, we distinguish between two types of networks: (1) \emph{external communication} between clients and replicas, and (2) \emph{internal communication} among replicas.}

\begin{flushleft}
\textbf{External network (client–replica).}
We make no assumptions about synchrony but assume the external network is \emph{non-adversarial}. This assumption is necessary for preserving fairness, as an adversarial external network could arbitrarily control the order in which replicas receive transactions and then the final ordering.
\end{flushleft}

\begin{flushleft}
\textbf{Internal network (replica–replica).}
The internal network may operate either \emph{asynchronously} or \emph{partially synchronously}:
\end{flushleft}

\begin{itemize}
\item \textbf{Asynchronous network:} Messages are never lost and are eventually delivered, but there is no bound on the message delays.
\item \textbf{Partially synchronous network:} There exists an unknown \emph{Global Stabilization Time (GST)} after which the message delays are bounded by a known constant $\Delta$, i.e., any message sent at time $t$ will be delivered by $\max(t, \text{GST}) + \Delta$.
\end{itemize}

\section{Overview}\label{sec:overviewfairdag}
\begin{figure}[t]
\centering
\begin{tikzpicture}[
    client/.style={draw, shape=ellipse, minimum size=0.8cm, text centered, fill=blue!20},
    replica/.style={draw, cylinder, shape border rotate=90, aspect=0.3, minimum height=0.5cm, minimum width=0.5cm, text centered, fill=green!20},
    external network/.style={dashed, blue},
    internal network/.style={dashed, red},
    arrow/.style={->, thick},
    textnode/.style={draw=none, rectangle, text centered},
    >=stealth,
    yscale=0.9,
    font=\small
]
% Replicas

\node[replica] (replica1) at (0, 8.7) {$R_1$};
\node (broadcast1) at (0.7, 8.7) {\includegraphics[width=1cm]{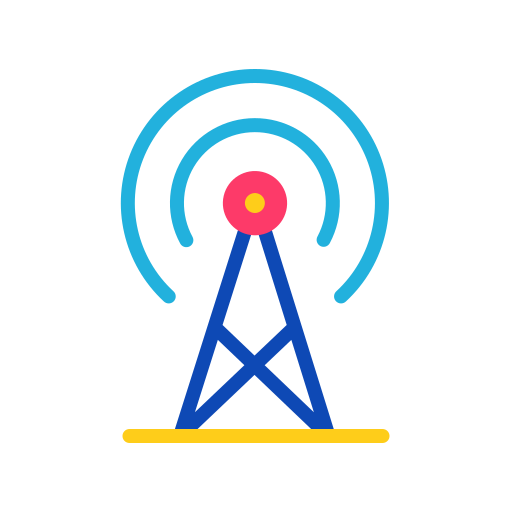}};
\node[font=\tiny,rectangle, draw] (lo11) at (-0.2, 8) {$v_{1,1}: LO_{1,1}$};
\node[font=\tiny,rectangle, draw] (lo12) at (0.5, 7.5) {$v_{1,2}: LO_{1,2}$};
\node[font=\tiny,rectangle, draw] (lo13) at (1.2, 8) {$v_{1,3}: LO_{1,3}$};

\node[replica] (replica2) at (4, 8.7) {$R_2$};
\node (broadcast2) at (3.3, 8.7) {\includegraphics[width=1cm]{images/broadcast.png}};
\node[font=\tiny,rectangle, draw] (lo21) at (3.3, 8) {$v_{2,1}: LO_{2,1}$};
\node[font=\tiny,rectangle, draw] (lo22) at (4, 7.5) {$v_{2,2}: LO_{2,2}$};
\node[font=\tiny,rectangle, draw] (lo23) at (4.7, 8) {$v_{2,3}: LO_{2,3}$};

\node[replica] (replica3) at (0, 6.7) {$R_3$};
\node (broadcast2) at (0.7, 6.7) {\includegraphics[width=1cm]{images/broadcast.png}};
\node[font=\tiny,rectangle, draw] (lo31) at (-0.2, 6) {$LO_{3,1}:\{d_3\}$};
\node[font=\tiny,rectangle, draw] (lo32) at (0.5, 5.5) {$LO_{3,2}:\{d_1,d_4\}$};
\node[font=\tiny,rectangle, draw] (lo33) at (1.2, 6) {$LO_{3,3}:\{d_2\}$};

\node[replica] (replica4) at (4, 6.7) {$R_4$};
\node (broadcast2) at (3.3, 6.7) {\includegraphics[width=1cm]{images/broadcast.png}};
\node[font=\tiny,rectangle, draw] (lo41) at (3.3, 6) {$LO_{4,1}:\{d_1\}$};
\node[font=\tiny,rectangle, draw] (lo42) at (4, 5.5) {$LO_{4,2}:\{d_4,d_2\}$};
\node[font=\tiny,rectangle, draw] (lo43) at (4.7, 6) {$LO_{4,3}:\{d_3\}$};

\node at (5, 7) {(a)};
\node at (4, 10) {(b)};
\node at (5, 12) {(c)};
\node at (5, 13.9) {(d)};
\node at (5, 15.0) {(e)};
\node at (5, 16.6) {(f)};

\draw[->, gray, line width=8pt] (2.2, 9) -- (2.2, 10.5);
\node (receiver) at (3, 10) {\includegraphics[width=0.8cm]{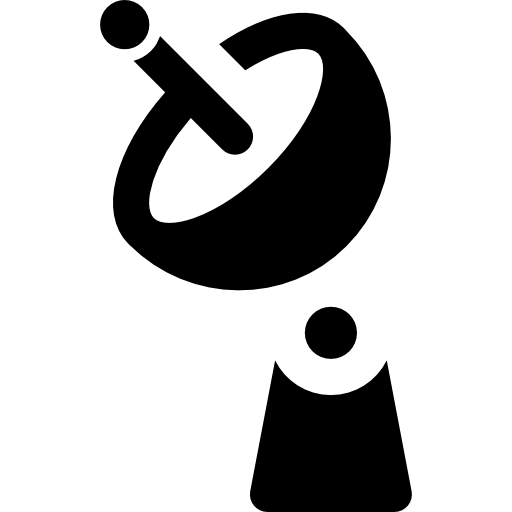}};

\node[font=\tiny,rectangle, draw, fill=cyan!20] (dag11) at (0, 13) {$LO_{1,1}$};
\node[font=\tiny,rectangle, draw, fill=green!20] (dag21) at (0, 12.3) {$LO_{2,1}$};
\node[font=\tiny,rectangle, draw, fill=cyan!20] (dag31) at (0, 11.6) {$LO_{3,1}$};
\node[font=\tiny,rectangle, draw, fill=cyan!20] (dag41) at (0, 10.9) {$LO_{4,1}$};

\node[font=\tiny,rectangle, draw, fill=cyan!20] (dag12) at (2, 13) {$LO_{1,2}$};
\node[font=\tiny,rectangle, draw, fill=cyan!20] (dag22) at (2, 12.3) {$LO_{2,2}$};
\node[font=\tiny,rectangle, draw, fill=cyan!20] (dag32) at (2, 11.6) {$LO_{3,2}$};

\node[font=\tiny,rectangle, draw] (dag13) at (4, 13) {$LO_{1,3}$};
\node[font=\tiny,rectangle, draw] (dag23) at (4, 12.3) {$LO_{2,3}$};
\node[font=\tiny,rectangle, draw, fill=cyan!20] (dag33) at (4, 11.6) {$LO_{3,3}$};

\draw[->, line width=0.6pt] (dag12) -- (dag11);
\draw[->, line width=0.6pt] (dag12.west) -- (dag21.east);
\draw[->, line width=0.6pt] (dag12.west) -- (dag31.east);

\draw[->, line width=0.6pt] (dag22) -- (dag21);
\draw[->, line width=0.6pt] (dag22.west) -- (dag31.east);
\draw[->, line width=0.6pt] (dag22.west) -- (dag11.east);

\draw[->, line width=0.6pt] (dag32) -- (dag31);
\draw[->, line width=0.6pt] (dag32.west) -- (dag21.east);
\draw[->, line width=0.6pt] (dag32.west) -- (dag11.east);

\draw[->, line width=0.6pt] (dag13) -- (dag12);
\draw[->, line width=0.6pt] (dag13.west) -- (dag22.east);
\draw[->, line width=0.6pt] (dag13.west) -- (dag32.east);

\draw[->, line width=0.6pt] (dag23) -- (dag22);
\draw[->, line width=0.6pt] (dag23.west) -- (dag12.east);

\draw[->, line width=0.6pt] (dag33) -- (dag32);
\draw[->, line width=0.6pt] (dag33.west) -- (dag22.east);

\draw[->, line width=0.6pt, blue!30] (dag33.west) -- (dag41.east);

\draw[->, gray, line width=6pt] (2.3, 13.7) -- (2.3, 14.5);

\node[fill=green!20, draw, rectangle, minimum width=2pt, minimum height=1pt] at (0.9, 14.2) {};  
\node[fill=green!20, draw, rectangle, minimum width=2pt, minimum height=1pt] at (1.2, 14.2) {};  
\node[fill=green!20, draw, rectangle, minimum width=2pt, minimum height=1pt] at (1.5, 14.2) {};  

\node[fill=cyan!20, draw, rectangle, minimum width=2pt, minimum height=1pt] at (0.9, 13.8) {};  
\node[fill=cyan!20, draw, rectangle, minimum width=2pt, minimum height=1pt] at (1.2, 13.8) {};  
\node[fill=cyan!20, draw, rectangle, minimum width=2pt, minimum height=1pt] at (1.5, 13.8) {};  

\node[font=\tiny] at (-0.2,14.2) {After $L_1$ is committed:};
\node[font=\tiny] at (-0.2,13.8) {After $L_3$ is committed:};

\node[font=\tiny] at (1.9,14.2) {$A_1$};
\node[font=\tiny] at (1.9,13.8) {$A_3$};

\node at (2.2, 16.7) {$T_1 \rightarrow T_2 \rightarrow T_3 \rightarrow T_4$};

\node[minimum width=0.4cm,draw,circle,font=\large, scale=0.5] (b2) at (2.8,15.4) {$d_1$};
\node[minimum width=0.4cm,draw,circle,font=\large, scale=0.5] (c2) at (3.3,15.1) {$d_2$};
\node[minimum width=0.4cm,draw,circle,font=\large, scale=0.5] (d2) at (3.8,14.8) {$d_3$};
\node[minimum width=0.4cm,draw,circle,font=\large, scale=0.5] (e2) at (3.8,15.4) {$d_4$};
\draw[->] (b2) -- (c2);
\draw[->] (c2) -- (d2);
\draw[->] (d2) -- (e2);
\draw[->] (e2) -- (c2);

\node (processor) at (0.8, 15.1) {\includegraphics[width=0.8cm]{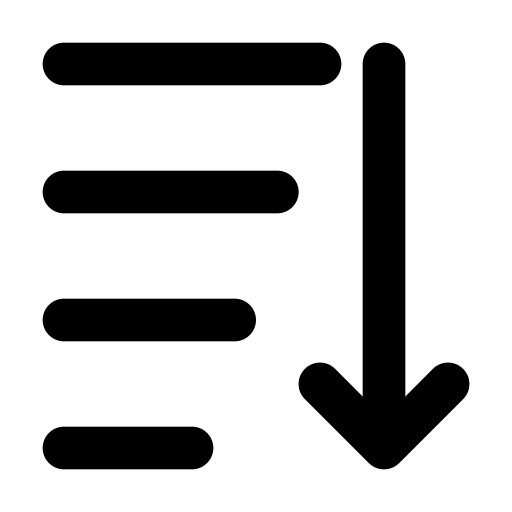}};

\node[font=\tiny] at (1.6, 15.4) {$d_1.AOI$};
\node[font=\tiny] at (1.6, 15.2) {$d_2.AOI$};
\node[font=\tiny] at (1.6, 15) {$d_3.AOI$};
\node[font=\tiny] at (1.6, 14.8) {$d_4.AOI$};

\draw[->, gray, line width=6pt] (2.2, 15.5) -- (2.2, 16.3);

\draw[dashed, blue] (-1, 5.2) rectangle (5.5, 13.5);
\node at (0, 10) {\textcolor{blue}{\textbf{DAG Layer}}};

\draw[dashed, orange] (-1, 14.5) rectangle (5.5, 17);
\node at (0, 16.1) {\textcolor{orange}{\textbf{Fairness Layer}}};
\end{tikzpicture}
\caption{Architecture of FairDAG: (a) Replicas reliably broadcast blocks containing their local ordering \Changed{fragments}. (b) Each replica receives blocks delivered through Reliable Broadcast. \Changed{(c) Each replica forms a local view of the DAG using received blocks and reference links, where different colors represent the subdags of different committed leader vertices. }(d) Local orderings in subdag $A_r$ are used as input of the fairness layer after $L_r$ is committed. \Changed{(e) Finalize transaction ordering using absolute ordering mechanism (left) or relative ordering mechanism (right), based on the committed local orderings. (f)} A final transaction ordering is generated.}
\label{fig:overview}
\end{figure}

% \textbf{Discuss the limitation of previous work.}

Previous fairness protocols, such as \Pompe{}~\cite{pompe} and \Themis{}~\cite{themis}, \Changed{are built atop} leader-based consensus. Although the final ordering is derived from local orderings collected from a majority of the replicas, existing fairness protocols have these problems: (1) Byzantine leaders manipulate orderings and compromise fairness by selectively picking up the local orderings. (2) Byzantine or slow leaders can become system performance bottlenecks. To address these problems, \FDAG{} runs fairness protocols on top of multi-proposer DAG-based consensus with two layers:

\begin{itemize}
    \item \emph{DAG Layer} handles the dissemination (Figure~\ref{fig:overview}(a,b)) and commitment (Figure~\ref{fig:overview}(c)) of local orderings.
    \item \emph{Fairness Layer} takes these committed local orderings (Figure~\ref{fig:overview}(d)) as input to a transaction ordering algorithm (Figure~\ref{fig:overview}(e)) to produce a final ordering (Figure~\ref{fig:overview}(f)).
\end{itemize}

\subsection{DAG Layer}\label{ssec:daglayer}

The DAG layer adopts existing DAG-based consensus protocols—such as Tusk~\cite{narwhal}, DAG-Rider~\cite{dagrider}, and Bullshark~\cite{bullshark}{—with minimal modifications to support fairness guarantees.

\begin{flushleft}
    \textbf{Directed-Acylic-Graph}
\end{flushleft}

DAG-based protocols proceed \Changed{in rounds where each replica concurrently} proposes one vertex per round $r$. \Changed{Each vertex references previous vertices, forming a DAG, where references are \textbf{edges}. Specifically, a vertex in round $r$ references at least $\n{-}\f$ vertices from round $r-1$ via \textbf{strong edges}, and optionally up to $\f$ vertices from earlier rounds via \textbf{weak edges}.}

Let $v_{i,r}$ denote the vertex proposed by replica $R_i$ in round $r$. A valid vertex satisfies:

\begin{itemize}
    \item \textbf{$strong\_edges$}: includes at least $\n-\f$ \emph{strong edges},
    \item \textbf{$weak\_edges$}: includes \Changed{up to} $\f$ \emph{weak edges}.
\end{itemize}

To guarantee the fairness properties in the presence of adversary, we extend existing DAG-based protocols by applying new rules of forming DAG vertices:
\begin{enumerate}
    \item In \FDAG{}, replicas include weak edges in vertices to incorporate local orderings from slower replicas, ensuring that all correct local orderings eventually get committed.
    \item To preserve the integrity of a replica's local ordering, each vertex $v_{i,r}$ must include a strong edge to its vertex in the previous round  $v_{i,r-1}$ (for $r > 0$).
\end{enumerate}
Figure~\ref{fig:overview}(c) shows an DAG example where black arrows represent strong edges, and blue arrows represent weak edges. 

\begin{flushleft}
    \textbf{\Changed{Reliably broadcasting} a vertex}
\end{flushleft}

To address issue (1), in \FDAG{}, instead of relying on a leader to collect local orderings, each replica $R_i$ autonomously reliably broadcast~\cite{narwhal,bracha1987asynchronous,cachin2005asynchronous,guerraoui2019scalable} its  local ordering $LO_i$.
As Figure~\ref{fig:overview}(a) shows, \FDAG{} vertex $v_{i,r}$ contains $r$-th slice of the replica $R_i$’s \emph{local ordering} ($LO_{i,r}$)—a sequence reflecting the order in which transactions were received by replica $R_i$. Each local ordering is represented as a sequence of transaction digests paired with monotonically increasing ordering indicators.

\begin{flushleft}
    \textbf{Constructing a DAG}
\end{flushleft}

As illustrated in \Changed{Figure~\ref{fig:overview} (a,b), the DAG layer of \FDAG{} enables replicas to reliably broadcast and deliver vertices from one another. This multi-proposer design alleviates the performance bottlenecks associated with a single leader, thereby addressing issue (2).

Using the delivered vertices and the references within them, each replica constructs its local view of the DAG,} as shown in Figure~\ref{fig:overview}(c). As mentioned in Section~\ref{ss:dagintro}, the DAG protocols guarantee that all correct replicas eventually have consistent DAG views.

\begin{flushleft}
    \textbf{Committing DAG vertices}
\end{flushleft}

The commit rules in DAG-based consensus protocols—leveraging the causal dependencies between vertices—further mitigate leader manipulation in selecting local orderings, addressing issue (1).

DAG protocols \Changed{group rounds} into \emph{waves}, \Changed{each containing one or more leader vertices. A leader vertex $L_{r_i}$ in round $r_i$ is \emph{committed} once specific conditions are met.} Let $C_{r_i}$ denote its \emph{causal history}—\Changed{the set of vertices reachable from $L_{r_i}$,} including $L_{r_i}$ itself. DAG protocols ensure that all correct replicas commit leader vertices in a consistent, round-increasing order $(L_{r_1}, L_{r_2}, \dots)$ such that $C_{r_i} \subset C_{r_{i+1}}$ for all $i$.

A vertex $v$ is said to be in the subdag of leader $L_{r_i}$ if $v \in C_{r_i}$ and $v \notin C_{r_j}$ for all $j < i$. Let $A_{r_i}$ denote the vertices in the subdag of $L_{r_i}$. The committed DAG can thus be partitioned into non-overlapping subdags $(A_{r_1}, A_{r_2}, \dots)$, each corresponding to a committed leader.

For example, in Figure~\ref{fig:overview}(c), $L_1$ is the green vertex $v_{2,1}$ and $L_3$ is the blue vertex $v_{3,3}$. Then $A_1$ consists of the green vertices, and $A_3$ consists of the blue ones.

As shown in Figure~\ref{fig:overview}(d), each time a leader $L_r$ is committed, the fairness layer \Changed{processes $A_r$ to compute} the final ordering.

The DAG layer satisfies a necessary condition to guarantee that final ordering aligns with the preferences of the majority of correct replicas, as each subdag $A_r$ aggregates local orderings from at least $\n{-}\f$ replicas—guaranteeing the inclusion of at least $\n{-}2\f$ correct replicas. (Note: $\n{-}2\f \ge \frac{\n{-}\f}{2}$ holds for all protocols.)

\subsection{Fairness Layer}

Taking local orderings within the committed subdags as input, the fairness layer runs a transaction ordering algorithm to calculate the final transaction ordering.

In \textbf{absolute} fairness protocols (Figure~\ref{fig:overview}(e), left), each transaction gets an \emph{assigned ordering indicator (AOI) based on the input local orderings.} The final ordering (Figure~\ref{fig:overview}(f)) is then derived, sorting transactions by their \emph{AOI}. \FairDAG{} is an \emph{absolute} fairness protocols and satisfies the \emph{Ordering Linearizability} property.

In \textbf{relative} fairness protocols (Figure~\ref{fig:overview}(e), right) construct a dependency graph of transaction nodes, where edges encode pairwise ordering dependencies derived from the input local orderings. The final ordering is a \emph{Hamiltonian path} within the graph (Figure~\ref{fig:overview}(f)). \FairDAGRL{} is an \emph{relative} fairness protocol and satisfies the \emph{$\gamma$-Batch-Order-Fairness} property.

\Changed{\section{FairDAG-AB}\label{sec:design}
}
\FairDAG{} is an \textbf{absolute} fairness protocol. In this section, we demonstrate how \FairDAG{} calculates the \emph{assigned ordering indicator (AOI)} for each transaction and orders the transactions based on the AOI values.

\subsection{Transaction Dissemination}

\Changed{To prevent leader-driven manipulation}, clients broadcast transactions to all replicas. This prevents adversarial replicas from censoring transactions to influence their position in final ordering.

Upon receiving a client transaction $T$, replica $R$ gives $T$ a monotonically increasing ordering indicator $oi$ from its local timer, appends digest $d$ of $T$ and $oi$ to its lists $dgs$ and $ois$ (Figure~\ref{fig:alg-ab}, Line~\ref{alg:dgs}), respectively, which are in-memory variables storing transaction digests and ordering indicators of pending transactions (Lines~\ref{alg:recv1}-\ref{alg:recv2}).

\subsection{FairDAG-AB Vertex}\label{ss:abvertex}

Built atop a DAG-based consensus protocol, each \FairDAG{} replica constructs and reliably broadcasts a DAG vertex when protocol conditions are met. In addition to DAG-specific metadata (as described in Section~\ref{ssec:daglayer}), a \FairDAG{} vertex includes \Changed{(and clears)} the replica’s local ordering of pending transactions it received after it broadcast the last vertex, encoded in $dgs$, $ois$ (Lines~\ref{alg:propose1}-\ref{alg:propose2}).

\subsection{Managing and Assigning Ordering Indicators}\label{ssec:assigned_oi}

\FairDAG{} leverages local orderings within committed DAG vertices to determine the final transaction ordering. However, a delay exists between receiving and committing the DAG vertices. To efficiently manage and utilize local orderings from both committed and uncommitted vertices, \FairDAG{} maintains an \emph{Ordering Indicator Manager} ($OIM(d)$) for each transaction digest $d$, which tracks the replica’s local view of the DAG and ordering indicators within the vertices.

\begin{figure}[t!]
    \begin{myprotocol}
    \STATE \textbf{\Changed{State Variables (per replica)}}\label{fig:ab-state0}
        \STATE $\twat{}\GETS \{\}$
        \STATE $\hts{}[1..\n] \GETS 0$\label{alg:hoi}
        \STATE $dgs, ois \GETS []$ \label{alg:dgs}
        \STATE $current\_round, replica\_id$\label{fig:ab-state1}
        
        \SPACE

    \STATE \textbf{\Changed{\textbf{Ordering Indicator Manager ($OIM$)\label{alg:oim}}}}
        \STATE \Changed{$seen\_{ois}[1..\n] \GETS \infty,\; committed\_{ois}[1..\n] \GETS \infty$}
        \STATE \Changed{$LPAOI\GETS\infty,\;AOI\GETS\infty$\label{alg:oim1}}
        
        \SPACE

        \TITLE{Client Thread}{processing client transactions}
        \EVENT{\Changed{On receive} transaction $T$}\label{alg:recv1}
            \IF{$T$ is valid}
                \STATE $oi \GETS \lotm{}$
                \STATE \Changed{$dgs.append(T.digest)$; $ois.append(oi)$}\label{alg:recv2}
            \ENDIF
        \ENDEVENT
        \SPACE

    \TITLE{DAG Layer Thread}{constructing the DAG}
    \EVENT{\Changed{On propose DAG vertex}\label{alg:propose1}}
        \STATE $v \GETS DAGVertex(replica\_id, current\_round)$
        \STATE Add pending local ordering $dgs$ and $ois$ into $v$
        \STATE Clear $dgs$ and $ois$
        \STATE Reliably broadcast $v$\label{alg:propose2}
    \ENDEVENT
        \SPACE

    \EVENT{On deliver $v_{i,r}$\label{alg:receive}}
        \STATE $\hts[i] \GETS$ the highest $oi$ in $v_{i,r}.ois$
        \FOR{$(d, oi) \in (v_{i,r}.dgs, v_{i,r}.ois)$}
            \STATE $OIM(d).seen\_ois[i] \GETS oi$\label{alg:receive1}
        \ENDFOR
    \ENDEVENT
        \SPACE

    \EVENT{\Changed{On commit $L_r$}\label{alg:commit}}
        \STATE Send $A_r$ to Fairness Layer\label{alg:commit1}
    \ENDEVENT
        \SPACE

        \TITLE{Fairness Layer Thread}{Ordering transactions}
        \STATE \COMMENT{Update $committed\_ois$}
        \EVENT{\Changed{On receive $A_r$\label{alg:commitoi}}}
        \FOR{$v \in A_r$, $(d, oi) \in (v.dgs, v.ois)$}
            \STATE $\Changed{OIM(d).committed\_ois[v.replica\_id] \GETS oi}$\label{alg:commitoi1}
        \ENDFOR

        \STATE $LPAOI_{\min} \GETS \infty$
        \FOR{$d$ \Changed{such that $OIM(d) = \infty$}\label{alg:calculate}}
            \IF{$OIM(d)$ has at least $\n-\f$ $committed\_ois$}
                \STATE \COMMENT{Calculate $AOI$}
                \STATE \Changed{$OIM(d).AOI \GETS (\f{+}1)$-th smallest in $committed\_ois$}
                \STATE \Changed{Add $d$ to $\twat$\label{alg:calculate1}}
            \ELSE\label{alg:lpaoi}
                \STATE \COMMENT{Calculate $LPAOI$}
               \FOR{\Changed{$i = 1$ to $\n$}}
                    \STATE $lp\_ois[i] \GETS \min(d.seen\_ois[i], \hts[i])$
                \ENDFOR
                \STATE \Changed{$d.LPAOI \GETS (\f{+}1)$-th smallest in $lp\_ois$}
                \STATE \COMMENT{Track the minimal LPAOI of all transactions without an AOI}
                \STATE $LPAOI_{\min} \GETS \min(LPAOI_{\min}, d.LPAOI)$\label{alg:lpaoi1}
            \ENDIF
        \ENDFOR

        \STATE sort $\twat$ sorted by $AOI$\label{line:sort}
        \STATE execute all transactions with an $AOI$ lower than $LPAOI_{min}$\label{line:execute}
    \ENDEVENT
    \end{myprotocol}
    \caption{FairDAG-AB Algorithm.}
    \label{fig:alg-ab}
\end{figure}
\begin{figure}[t]
%\vspace{-16mm}
    \begin{tikzpicture}[yscale=0.5, xscale=0.6]
    \tiny
    \definecolor{softblue}{RGB}{173, 200, 230}
    \definecolor{softpink}{RGB}{255, 182, 193}
    \definecolor{softgreen}{RGB}{144, 238, 80}
    \definecolor{softyellow}{RGB}{255, 215, 0}
        
        \draw[softyellow, thick, line width=0.3mm] (0,0) rectangle (2,1.5);
        \draw[softyellow, line width=0.3mm] (4,0) rectangle (6,1.5);
        \draw[softgreen, line width=0.3mm] (8,0) rectangle (10,1.5);
        \draw[black, line width=0.3mm] (12,0) rectangle (14,1.5);
        \draw[softyellow, line width=0.3mm] (0,2) rectangle (2,3.5);
        \draw[softgreen, line width=0.3mm] (4,2) rectangle (6,3.5);
        \draw[softgreen, line width=0.3mm] (8,2) rectangle (10,3.5);
        \draw[softgreen, line width=0.3mm] (12,2) rectangle (14,3.5);
        \draw[softyellow, line width=0.3mm] (0,4) rectangle (2,5.5);
        \draw[softgreen, line width=0.3mm] (4,4) rectangle (6,5.5);
        \draw[softgreen, line width=0.3mm] (8,4) rectangle (10,5.5);
        \draw[black, line width=0.3mm] (12,4) rectangle (14,5.5);
        \draw[softgreen, line width=0.3mm] (0,6) rectangle (2,7.5);
        % \draw[black, line width=0.3mm] (4,6) rectangle (6,7.5);
        % \draw[black, line width=0.3mm] (8,6) rectangle (10,7.5);
        % \draw[black, line width=0.3mm] (12,6) rectangle (14,7.5);

        \node[align=center,anchor=north] at (1,1.5) {$v_{4,1}$\\\textcolor{red}{$(\digest{1}, 1)$}\\\textcolor{red}{$(\digest{2}, 2)$}};
        \node[align=center,anchor=north] at (1,3.5) {$v_{3,1}$\\$(\digest{2}, 1)$\\$(\digest{1}, 2)$};
        \node[align=center,anchor=north] at (1,5.5) {$v_{2,1}$\\$(\digest{1}, 1)$};
        \node[align=center,anchor=north] at (1,7.5) {$v_{1,1}$\\$(\digest{1}, 1)$};
        \node[align=center,anchor=north] at (5,1.5) {$v_{4,2}$\\\textcolor{red}{$(\digest{3}, 3)$}};
        \node[align=center,anchor=north] at (5,3.5) {$v_{3,2}$\\$(\digest{4}, 3)$};
        \node[align=center,anchor=north] at (5,5.5) {$v_{2,2}$\\$(\digest{2}, 2)$\\$(\digest{4}, 3)$};
        % \node[align=center,anchor=north] at (5,7.5) {$v_{1,2}$\\$(\digest{2}, 2)$\\$(\digest{3}, 3)$};
        \node[align=center,anchor=north] at (9,1.5) {$v_{4,3}$\\$(\digest{5}, 4)$};
        \node[align=center,anchor=north] at (9,3.5) {$v_{3,3}$\\$(\digest{3}, 4)$};
        \node[align=center,anchor=north] at (9,5.5) {$v_{2,3}$\\$(\digest{3}, 4)$};
        % \node[align=center,anchor=north] at (9,7.5) {$v_{1,3}$\\$(\digest{4}, 4)$};
        \node[align=center,anchor=north] at (13,1.5) {$v_{4,4}$\\$(\digest{4}, 5)$};
        \node[align=center,anchor=north] at (13,3.5) {$v_{3,4}$\\$(\digest{6}, 5)$};
        \node[align=center,anchor=north] at (13,5.5) {$v_{2,4}$\\$(\digest{5}, 5)$};
        % \node[align=center,anchor=north] at (13,7.5) {$v_{1,4}$\\$(\digest{6}, 5)$};

        \draw[->,line width = 0.3mm] (4,0.75) -- (2,0.75);
        \draw[->,line width = 0.3mm] (4,0.75) -- (2,2.75);
        \draw[->,line width = 0.3mm] (4,0.75) -- (2,4.75);
        \draw[->,line width = 0.3mm] (4,2.75) -- (2,0.75);
        \draw[->,line width = 0.3mm] (4,2.75) -- (2,2.75);
        \draw[->,line width = 0.3mm] (4,2.75) -- (2,4.75);
        \draw[->,line width = 0.3mm] (4,4.75) -- (2,0.75);
        \draw[->,line width = 0.3mm] (4,4.75) -- (2,2.75);
        \draw[->,line width = 0.3mm] (4,4.75) -- (2,4.75);
        % \draw[->,line width = 0.3mm] (4,6.75) -- (2,6.75);
        % \draw[->,line width = 0.3mm] (4,6.75) -- (2,4.75);
        % \draw[->,line width = 0.3mm] (4,6.75) -- (2,2.75);

        \draw[->,line width = 0.3mm] (8,0.75) -- (6,0.75);
        \draw[->,line width = 0.3mm] (8,0.75) -- (6,2.75);
        \draw[->,line width = 0.3mm] (8,0.75) -- (6,4.75);
        \draw[->,line width = 0.3mm] (8,2.75) -- (6,0.75);
        \draw[->,line width = 0.3mm] (8,2.75) -- (6,2.75);
        \draw[->,line width = 0.3mm] (8,2.75) -- (6,4.75);
        \draw[->, line width = 0.3mm, softblue] (8,4.75) -- (2,6.75);
        \draw[->,line width = 0.3mm] (8,4.75) -- (6,0.75);
        \draw[->,line width = 0.3mm] (8,4.75) -- (6,2.75);
        \draw[->,line width = 0.3mm] (8,4.75) -- (6,4.75);
        % \draw[->,line width = 0.3mm] (8,6.75) -- (6,6.75);
        % \draw[->,line width = 0.3mm] (8,6.75) -- (6,4.75);
        % \draw[->,line width = 0.3mm] (8,6.75) -- (6,2.75);

        \draw[->,line width = 0.3mm] (12,0.75) -- (10,0.75);
        \draw[->,line width = 0.3mm] (12,0.75) -- (10,2.75);
        \draw[->,line width = 0.3mm] (12,0.75) -- (10,4.75);
        \draw[->,line width = 0.3mm] (12,2.75) -- (10,0.75);
        \draw[->,line width = 0.3mm] (12,2.75) -- (10,2.75);
        \draw[->,line width = 0.3mm] (12,2.75) -- (10,4.75);
        % \draw[->,line width = 0.3mm] (12,4.75) -- (10,6.75);
        \draw[->,line width = 0.3mm] (12,4.75) -- (10,0.75);
        \draw[->,line width = 0.3mm] (12,4.75) -- (10,2.75);
        \draw[->,line width = 0.3mm] (12,4.75) -- (10,4.75);
        % \draw[->,line width = 0.3mm] (12,6.75) -- (10,6.75);
        % \draw[->,line width = 0.3mm] (12,6.75) -- (10,4.75);
        % \draw[->,line width = 0.3mm] (12,6.75) -- (10,2.75);

    \end{tikzpicture}
    \caption{Example of calculating AOI and LPAOI.}
    \label{fig:assignedoi}
\end{figure}

\Changed{We say that a replica $R$ has \textbf{\emph{seen}} an ordering indicator $oi$ if $oi$ is in a vertex in $R$'s local DAG view. $R$ has \textbf{\emph{committed}} $oi$ if $oi$ is in a committed vertex. $OIM(d)$ contains the following information (Lines~\ref{alg:oim}-\ref{alg:oim1})}:

\begin{itemize}
    \item $seen\_{ois}$: ordering indicators of $d$ that $R$ has seen.
    \item $committed\_{ois}$: ordering indicators of $d$ that $R$ has committed.
    \item $LPAOI$: the lowest possible value of the assigned ordering indicator of $d$.
    \item $AOI$: the value of the assigned ordering indicator of $d$.
\end{itemize}

The final ordering is determined using $AOI$ values derived from $committed\_ois$. $LPAOI$, computed from $seen\_ois$, helps determine when it is safe to decide the position of a transaction in the final ordering (see Section~\ref{ss:global_ordering}). The $seen\_ois$ and $committed\_ois$ are indexed from $1$ to $\n$ and each item is initialized to $\infty$. For example, if $OIM(d).seen\_ois[2] = 3$, it means that $R$ has seen an ordering indicator $3$ for the transaction with digest $d$ from replica $R_2$.

% Each replica $R$ maintains two maps, a set and a vector in memory.

% \begin{itemize}
%     % \item $\twoat{}$: mapping from transactions digests \textbf{without} $AOI$ to their \emph{ordering indicator manager} objects.
%     % \item $\twat{}$: mapping from transactions digests \textbf{with} $AOI$ to their \emph{ordering indicator manager} objects.
%     \item $ordered\_txns$: a set of digests of transactions that are already ordered.
%     \item $highest\_ois$: the highest ordering indicators received from each replica, initialized to be 0 and indexed from 1 to $\n$.
% \end{itemize}

\begin{flushleft}
    \textbf{Managing Ordering Indicators}
\end{flushleft}

\Changed{Besides the $OIM$ of each transaction digest $d$, each \FairDAG{} replica $R$ also maintains $highest\_ois$: a vector of the highest ordering indicators received from each replica, initialized to be 0 and indexed from 1 to $\n$ (Line~\ref{alg:hoi}). This vector is essential for checking whether it is safe to determine the position of a transaction in final ordering (See details in Section~\ref{ss:global_ordering}).}

Upon a DAG vertex $v_{i,r}$ is delivered and added to replica $R$'s local DAG view, $R$ updates $highest\_ois[i]$ to the highest ordering indicator in $v_{i,r}$. For each digest $d$ in $v_{i,r}$, $R$ updates $OIM(d).seen\_ois[i]$ (Lines~\ref{alg:receive}-\ref{alg:receive1}).

Upon a DAG leader vertex $L_r$ is committed, its subdag $A_r$ is input to the fairness layer (Lines~\ref{alg:commit}-\ref{alg:commit1}). And $R$ updates $committed\_ois$ for the digests in $A_r$ accordingly (Lines~\ref{alg:commitoi}-\ref{alg:commitoi1}).

\begin{flushleft}
    \textbf{Calculating Assigned Ordering Indicator}
\end{flushleft}

If $OIM(d)$ has at least $\n{-}\f$ non-$\infty$ $committed\_oi$ values $R$ calculates $OIM(d).AOI$. The $AOI$ is the $(\f{+}1)$-th smallest value of the $committed\_ois$, which definitely falls within the range of ordering indicators from correct replicas. Then, $d$ is added to $\twat{}$, which is set of transactions digests with valid $AOI$ values (Lines~\ref{alg:calculate}-\ref{alg:calculate1}). \emph{Note: The value of $AOI$ is immutable once calculated, even if more ordering indicators of $d$ are committed.}

\begin{example}
\Changed{Figure~\ref{fig:assignedoi} illustrates how to calculate $AOI$ values. Suppose $v_{4,2}$ and $v_{3,4}$ are committed leader vertices $L_2$ and $L_4$, with yellow $A_2$ and green $A_4$, respectively. In $A_2$, digest $d_1$ has committed ordering indicators ${\infty, 1, 2, 1}$. Since this includes at least $\n{-}\f$ valid values, its $AOI$ is the $(\f{+}1)$-th lowest, which is 1 (Lines~\ref{alg:calculate}-\ref{alg:calculate1}). Although an additional $seen\_ois$ of 1 appears in $v_{1,1} \in A_4$, it is ignored since $AOI$ is immutable once calculated after $L_2$ is committed. In $A_4$, digest $d_2$ has committed ordering indicators ${\infty, 1, 2, 3}$, yielding an $AOI$ of 2. Similarly, $d_3$ receives an $AOI$ of 3. Other digests in $A_4$ lack sufficient $committed\_ois$ data for $AOI$.}
\end{example}

\subsection{Global Ordering}~\label{ss:global_ordering}

Due to the randomness in message arrival times, DAG vertices with smaller local ordering indicators may be committed later than vertices from other replicas with higher ordering indicators. As a result, a transaction $d_1$ may get a smaller $AOI$ than $d_2$, even if $d_1$ obtains its $AOI$ in a later round than $d_2$. To ensure strict ordering of transactions based on their $AOI$ values—which is essential for achieving \emph{Ordering Linearizability}—it is necessary to guarantee that no other transaction could get a lower $AOI$ before determining the position of a transaction in the final ordering. To enforce it, for each transaction without an $AOI$, we calculate its $LPAOI$, the lowest AOI value it could possibly get. And we track $LPAOI_{min}$, the minimal $LPAOI$ of transactions without an $AOI$.

\begin{flushleft}
    \textbf{Calculating LPAOI}
\end{flushleft}

For each transaction digest $d$ without a valid $AOI$, replica $R$ constructs a vector $lp\_ois$, where each entry is computed as $lp\_ois[i] = \min(OIM(d).seen\_ois[i], highest\_ois[i])$. The $LPAOI$ of $d$ is then defined as the $(\f{+}1)$-th smallest value in $lp\_ois$. And $LPAOI_{\min}$ is the minimal value across all $LPAOI$ values  (Lines~\ref{alg:lpaoi}-\ref{alg:lpaoi1}). 

$LPAOI_{\min}$ serves as a threshold: only transactions with $AOI < LPAOI_{\min}$ can determine its position in final ordering. The fairness layer sorts all digests in $\twat{}$ by $AOI$, executes all transactions with an $AOI$ lower than the threshold. The ordered digests are then removed from $\twat{}$ (Lines~\ref{line:sort}-\ref{line:execute}).

\begin{example}
    As shown in Figure~\ref{fig:assignedoi}, transaction digests $d_1$, $d_2$, and $d_3$ have $AOI$ values of $1$, $2$, and $4$, respectively. \Changed{Given $highest\_ois = (2, 6, 6, 6)$ and $OIM(d_4).seen\_ois = (\infty, 3, 3, 5)$, digest $d_4$ derives its $lp\_ois = (2, 3, 3, 5)$ and obtains $LPAOI = 3$. Similarly, $d_5$ and $d_6$ have $LPAOI$ values of $4$ and $5$. $LPAOI_{\min}$ is then $3$, (Lines~\ref{alg:lpaoi}-\ref{alg:lpaoi1})} so $d_1$ and $d_2$ can be ordered and executed. In contrast, $d_3$ cannot be executed since its $AOI$ exceeds $LPAOI_{\min}$, implying that $d_4$ could possibly get a lower $AOI$ than $d_3$.
\end{example}

\section{FairDAG-RL}~\label{sec:design2}

\FairDAGRL{} is a \textbf{relative} fairness protocol. In this section, we demonstrate how to construct dependency graphs between transactions and derive a final transaction ordering from the dependency graphs.

\shortversion{
\begin{figure}[t]
\begin{myprotocol}
    \STATE \textbf{\Changed{State Variables}}
    \STATE $graphs \GETS []$
    \STATE $current\_round$, $replica\_id$ \STATE \Changed{$counter \GETS 0$}
    \SPACE

    \TITLE{Client Thread}{processing transactions from clients}
    \EVENT{\Changed{On receive} a transaction $T$}
        \IF{$T$ is valid}
            \STATE \Changed{$counter \GETS counter + 1$}\label{alg:rl-counter}
            \STATE \Changed{$ois.append(counter)$,\quad $dgs.append(T.digest)$}\label{alg:rl-counter1}
        \ENDIF
    \ENDEVENT
    \SPACE

    \TITLE{DAG Layer Thread}{forming the DAG}
    \EVENT{\Changed{On propose vertex}}
        \STATE $v \GETS DAGVertex(replica\_id, current\_round)$
        \STATE Add pending local ordering $dgs$ and $ois$ into $v$
        \STATE Clear $dgs$ and $ois$
        \STATE Reliably broadcast $v$
    \ENDEVENT

    \SPACE
    \EVENT{On commit($L_r$)\label{alg:rl-commit}}
        \STATE Send $A_r$ to Ordering Layer\label{alg:rl-commit1}
    \ENDEVENT
\end{myprotocol}
\caption{FairDAG-RL: Transaction dissemination and DAG vertex proposal.}
\label{fig:rl1}
\end{figure}}

\subsection{Transaction Dissemination}\label{ss:rlvertex}

As in \FairDAG{}, clients broadcast transactions to all replicas for censorship resistance. Upon meeting the conditions in DAG protocols, each replica broadcasts a vertex. A \FairDAGRL{} vertex is a restricted form of a \FairDAG{} vertex: it includes a sequence of incrementing counter values as ordering indicators corresponding to received transactions (Figure~\ref{fig:rl1} Lines~\ref{alg:rl-counter}-\ref{alg:rl-counter1}). For simplicity, we omit the ordering indicators in Figure~\ref{fig:graph}.

The local orderings in subdag $A_r$ will be input to the fairness layer after $L_r$ is committed in \FairDAGRL{} (Lines~\ref{alg:rl-commit}-\ref{alg:rl-commit1}).

\subsection{Dependency Graph Construction}

The fairness layer of \FairDAGRL{} utilizes local orderings in committed DAG vertices to construct dependency graphs that reflect the ordering preferences of replicas. Every time a leader vertex is committed, a new dependency graph is constructed:

\fullversion{
\begin{figure}[t]
    \shortversion{\vspace{-12mm}}
\begin{myprotocol}
    \STATE \textbf{\Changed{State Variables}}
    \STATE $graphs \GETS []$
    \STATE $current\_round$, $replica\_id$ \STATE \Changed{$counter \GETS 0$}
    \SPACE

    \TITLE{Client Thread}{processing transactions from clients}
    \EVENT{\Changed{On receive} a transaction $T$}
        \IF{$T$ is valid}
            \STATE \Changed{$counter \GETS counter + 1$}\label{alg:rl-counter}
            \STATE \Changed{$ois.append(counter)$,\quad $dgs.append(T.digest)$}\label{alg:rl-counter1}
        \ENDIF
    \ENDEVENT
    \SPACE

    \TITLE{DAG Layer Thread}{forming the DAG}
    \EVENT{\Changed{On propose vertex}}
        \STATE $v \GETS DAGVertex(replica\_id, current\_round)$
        \STATE $v.dgs \GETS dgs$,\quad $v.ois \GETS ois$
        \STATE $dgs.clear()$,\quad $ois.clear()$
        \STATE Reliably broadcast $v$
    \ENDEVENT

    \SPACE
    \EVENT{On commit($L_r$)\label{alg:rl-commit}}
        \STATE Send $A_r$ to Ordering Layer\label{alg:rl-commit1}
    \ENDEVENT
\end{myprotocol}
\caption{FairDAG-RL: Transaction dissemination and DAG vertex proposal.}
\label{fig:rl1}
\end{figure}}

First, transaction digests from $A_r$ are added as graph nodes if it has not been added to any dependency graph is previously rounds. Each transaction digest $d$ is associated with a node that stores:

\begin{itemize}
    \item $type$: the type of the transaction node (See details later in \textbf{Adding nodes})
    \item $committed\_ois$: a vector of committed ordering indicators for $d$.
    \item $committed\_rounds$: the rounds in which the corresponding $commited\_oi$ is committed.
    \item $G$: the graph to which the node is added.
\end{itemize}

Second, pairwise ordering preferences are aggregated into a weight function: $weight(d_1, d_2)$ denotes the number of replicas who have a lower $committed\_oi$ for $d_1$ than $d_2$. Each dependency graph contains the following information: 

\begin{itemize}
    \item $nodes$: the set of transaction digest nodes.
    \item $weight:$ mapping of the pairs of nodes to their weights.
    \item $edges$: directed edges representing inferred ordering constraints between the transaction digest nodes.
\end{itemize}

Third, a directed edge from $d_1$ to $d_2$ is added if $weight(d_1, d_2)$ exceeds a quorum-based threshold $\frac{\n-\f}{2}$ and is not lower than $weight(d_2, d_1)$.

\begin{flushleft}
    \textbf{Thresholds}
\end{flushleft}

\fullversion{
\begin{figure}[t]
\begin{equation*}
    \begin{aligned}
        A_2: & \quad 
        \begin{aligned}
            &R_1: \{d_0, d_1, d_2, d_5, d_3\} \\
            &R_2: \{d_0, d_2, d_3, d_4, d_5\} \\
            &R_3: \{d_0, d_4, d_1, d_6\} \\
            &R_4: \{d_0, d_4, d_1, d_2\}
        \end{aligned}
        \quad\quad
        A_4: & \quad 
        \begin{aligned}
            &R_1: \{d_4, d_6\} \\
            &R_2: \{d_1, d_6\} \\
            &R_3: \{d_3, d_2, d_5, d_7\} \\
            &R_4: \{d_3, d_5, d_6, d_7\}
        \end{aligned}
    \end{aligned}
\end{equation*}
    \begin{tikzpicture}[yscale=0.8]
        \node at (-1.3, 2.2) {After processing $A_2$:};
        \node at (-2, 1.5) {$G_{2}$:};
        \node[minimum width=0.4cm,draw,circle,font=\tiny, inner sep=0pt] (a) at (-2,0.15) {$d_0$};
        \node[minimum width=0.4cm,draw,circle,font=\tiny, inner sep=0pt] (b) at (-1,1.5) {$d_1$};
        \node[minimum width=0.4cm,draw,circle,font=\tiny, inner sep=0pt] (c) at (-1,0.6) {$d_2$};
        \node[minimum width=0.4cm,draw,circle,dashed,font=\tiny, inner sep=0pt] (d) at (-1,-0.3) {$d_3$};
        \node[minimum width=0.4cm,draw,circle,font=\tiny, inner sep=0pt] (e) at (-1,-1.2) {$d_4$};
        \node[minimum width=0.4cm,draw,circle, dashed,font=\tiny, inner sep=0pt] (f) at (0,0.15) {$d_5$};
        \draw[->] (a) -- (b);
        \draw[->] (a) -- (c);
        \draw[->] (a) -- (d);
        \draw[->] (a) -- (e);
        \draw[->] (a) -- (f);
        
        \draw[->] (b) to (c);
        \draw[->, bend left] (b) to (d);
        \draw[->, bend right] (e) to (b);
        \draw[->] (c) to (d);
        \draw[->, bend right] (c) to (e);
        \draw[->] (d) to (e);

        \draw[->] (b) -- (f);
        \draw[->] (c) -- (f);
        % \draw[->] (d) -- (f);
        \draw[->] (e) -- (f);

% \draw[-, dashed] (0.5, 2.4) -- (0.5, -1.4);

\node at (2, 2.2) {After processing $A_4$:};
        \node at (1, 1.5) {$G_{2}$:};
        \node[minimum width=0.4cm,draw,circle,font=\tiny, inner sep=0pt] (a2) at (1,0.15) {$d_0$};
        \node[minimum width=0.4cm,draw,circle,font=\tiny, inner sep=0pt] (b2) at (2,1.5) {$d_1$};
        \node[minimum width=0.4cm,draw,circle,font=\tiny, inner sep=0pt] (c2) at (2,0.6) {$d_2$};
        \node[minimum width=0.4cm,draw,circle,dashed,font=\tiny, inner sep=0pt] (d2) at (2,-0.3) {$d_3$};
        \node[minimum width=0.4cm,draw,circle,font=\tiny, inner sep=0pt] (e2) at (2,-1.2) {$d_4$};
        \node[minimum width=0.4cm,draw,circle, dashed,font=\tiny, inner sep=0pt] (f2) at (3,0.15) {$d_5$};

\draw[->] (a2) -- (b2);
\draw[->] (a2) -- (c2);
\draw[->] (a2) -- (d2);
\draw[->] (a2) -- (e2);
\draw[->] (a2) -- (f2);

\draw[->] (b2) to (c2);
\draw[->, bend left] (b2) to (d2);
\draw[->, bend right] (e2) to (b2);
\draw[->] (c2) to (d2);
\draw[->, bend right] (c2) to (e2);
\draw[->] (d2) to (e2);

\draw[->] (b2) -- (f2);
\draw[->] (c2) -- (f2);
\draw[->, blue] (d2) -- (f2);
\draw[->] (e2) -- (f2);

\node at (3.5, 1.5) {$G_{4}$:};

\node[minimum width=0.4cm,draw,circle,font=\tiny, inner sep=0pt] (y) at (3.7,0.15) {$d_6$};
\node[minimum width=0.4cm,draw,circle, dashed, font=\tiny, inner sep=0pt] (z) at (4.5,0.15) {$d_7$};

\draw[->] (y) -- (z);

\node at (-2, -2.5) {$G_2^c$:};

\node[minimum width=0.4cm, minimum height=0.4cm, draw, rectangle, font=\tiny] (b1) at (-1,-2.7) {$S_1:\{d_0\}$};
\node[minimum width=0.4cm, minimum height=0.4cm, draw, rectangle, font=\tiny] (b2) at (-1,-3.5) {$S_2:\{d_1,d_2,d_3,d_4\}$};
\node[minimum width=0.4cm, minimum height=0.4cm, draw, rectangle, font=\tiny] (b3) at (-1,-4.3) {$S_3:\{d_5\}$};
\draw[->] (b1) -- (b2);
\draw[->] (b2) -- (b3);

% Adjusting the labels
\node at (3,-5) {(d)};
\node at (-1,-5) {(c)};
\node at (3,-1.8) {(b)};
\node at (-1,-1.8) {(a)};

\node at (3, -3) {Readding $d_5$ into $G_{4}$:};

\node[minimum width=0.4cm,draw,circle,font=\tiny, inner sep=0pt] (x2) at (1.8,-4) {$d_5$};
\node[minimum width=0.4cm,draw,circle,font=\tiny, inner sep=0pt] (y2) at (2.9,-4) {$d_6$};
\node[minimum width=0.4cm,draw,circle, dashed, font=\tiny, inner sep=0pt] (z2) at (4,-4) {$d_7$};

\draw[->] (x2) -- (y2);
\draw[->, bend left] (x2) to (z2);
\draw[->] (y2) -- (z2);

    \end{tikzpicture}
    \caption{Constructing dependency graphs and finalizing transaction order with $\n = 4, \gamma = 1, \f=1$.}
    
    \label{fig:graph}
    %\vspace{-8mm}
\end{figure}}

To construct a dependency graph that reflects the ordering preferences of the majority and mitigates manipulation by Byzantine replicas, relative fairness protocols should leverage as many correct local orderings as possible. Prior protocols such as \Themis{}~\cite{themis} and \Rashnu{}~\cite{rashnu} rely on a single leader to collect local orderings, where a Byzantine leader may intentionally exclude local orderings from up to $\f$ correct replicas. As a result, a transaction may appear in at most $\n{-}2\f$ committed local orderings if Byzantine replicas ignore the transaction and the Byzantine leader excludes $\f$ correct local orderings, increasing both the susceptibility to ordering manipulation and the minimal requirement of number of correct replicas (see Section~\ref{sec:newcomparison} for details).

In contrast, DAG-based consensus protocols ensure that all correct local orderings are eventually committed. This guarantees that each transaction can appear in at least $\n - \f$ committed local orderings. Consequently, \FairDAG{} raises the thresholds used in the construction of dependency graphs from $\n - 2\f$ and $\frac{\n - 2\f}{2}$ (used in \Themis{} and \Rashnu{}) to $\n - \f$ and $\frac{\n - \f}{2}$, respectively. We will elaborate on the details below.

\begin{figure}[t]
\begin{myprotocol}
    \TITLE{Fairness Layer Thread}{Ordering Transactions}
    \EVENT{On receive $A_r$}
        \STATE $G_r \GETS NewGraph()$,\quad $graphs.push(G_r)$\label{alg:newgraph}
        
        \STATE \COMMENT{Find nodes updated with $A_r$}\label{alg:updatecommit}
        \STATE $updated\_nodes \GETS \{\}$
        
        \FOR{$v \in A_r$}
            \STATE $i \GETS v.replica\_id$
            \FOR{$(d, oi) \in (v.dgs, v.ois)$}
                \STATE $node(d).committed\_ois[i] \GETS oi$
                \STATE $node(d).committed\_rounds[i] \GETS r$
                \STATE $updated\_nodes.insert(d)$\label{alg:updatecommit1}
            \ENDFOR
        \ENDFOR

        \STATE \COMMENT{Update node types and add new non-blank nodes to $G_r$}\label{alg:addnode}
        \FOR{$d \in updated\_nodes$}
            \IF{$node(d).type = blank$}
                \STATE Let $ap(d,r)$ be the number of ordering indicators for $d$ that are committed by round $r$
                \IF{$ap(d,r) \ge \n - \f$}
                    \STATE \Changed{$node(d).type \GETS solid$; \quad $G_r.nodes.add(node(d))$}
                \ELSIF{$ap(d,r) \ge \frac{\n - \f}{2}$}
                    \STATE \Changed{$node(d).type \GETS shaded$; \quad $G_r.nodes.add(node(d))$}\label{alg:addnode1}
                \ENDIF
            \ENDIF
        \ENDFOR

        \STATE \COMMENT{Find all candidate edges in all existing graphs $G_r$}
        \STATE $addable\_edges \GETS \{\}$
        \FOR{$v \in A_r$\label{alg:incweight}}
            \STATE $i \GETS v.replica\_id$
            \FOR{$(d, oi) \in (v.dgs, v.ois)$}
                \STATE $G' \GETS node(d).G$
                \STATE $d\_oi \GETS node(d).committed\_ois[i]$
                \FOR{$node(d_2) \in G'.nodes$}
                    \IF{$d\_oi < node(d_2).committed\_ois[i]$}
                        \STATE \Changed{increment $G'.weight[(d,d_2)]$}
                    \ELSE
                        \STATE \Changed{increment $G'.weight[(d_2,d)]$}\label{alg:incweight1}
                    \ENDIF
                    \IF{\Changed{either weight reaches threshold $\frac{\n - \f}{2}$}\label{alg:addableedge}}
                        \STATE $addable\_edges.insert(d,d_2)$\label{alg:addableedge1}
                    \ENDIF
                \ENDFOR
            \ENDFOR
        \ENDFOR

        \STATE \COMMENT{Add edge if weight reaches threshold}\label{alg:addedge}
        \FOR{$(d, d_2) \in addable\_edges$}
            \STATE $G \GETS node(d).G$
            \IF{$G.weight[(d,d_2)] \ge G.weight[(d_2,d)]$}
                \STATE $G.edges.add(e(d,d_2))$
            \ELSE
                \STATE $G.edges.add(e(d_2,d))$\label{alg:addedge1}
            \ENDIF
        \ENDFOR

        \STATE \COMMENT{Finalize Transaction Ordering in $G_r$ when $G_r$ is a tournament}
        \STATE \MName{OrderFinalization()}\label{alg:rl-finalize}
    \ENDEVENT
\end{myprotocol}
\caption{Dependency graph construction in FairDAG-RL.}
\label{fig:construct}
\end{figure}

\begin{flushleft}
    \textbf{Adding nodes}
\end{flushleft}

For each transaction digest $d$ in $A_r$, let $node(d)$ represent its dependency graph node. Its $committed\_ois$ and $committed\_rounds$ are updated using information contained in $A_r$, and the node is added to $updated\_nodes$, a set of transactions whose $committed\_ois$ are updated because of $A_r$ (Figure~\ref{fig:construct} Lines~\ref{alg:updatecommit}-\ref{alg:updatecommit1}) \footnote{All operations related to dependency graph construction only apply to transactions that are not ordered yet. We omit this for simplicity in the protocol description and the pseudocode.}.

Then, we check if any transaction digests can be added as nodes into the new dependency graph of round $r$. We define $ap(d, r)$ as the number of ordering indicators for $d$ that are committed by round $r$: 
\begin{center}
    $ap(d,r) \GETS |\{i| node(d).committed\_rounds[i] \le r\}|$
\end{center}

Each node in $updated\_nodes$ whose type is $blank$, which means that it has not been added into any dependency graph previously, is classified as:

\begin{itemize}
    \item $solid$, if $ap(d, r) \ge \n-\f$.
    \item $shaded$, if $\frac{\n-\f}{2} \le ap(d,r) < \n-\f$.
    \item $blank$, if $ap(d,r) < \frac{\n-\f}{2}$.
\end{itemize}

The we add the \emph{non-blank} nodes to $G_r$ (Lines~\ref{alg:addnode}-\ref{alg:addnode1}). Since classification is only applied to previously $blank$ nodes, it is guaranteed that each digest is inserted into at most one dependency graph.

\begin{flushleft}
    \textbf{Updating weights between nodes}
\end{flushleft}

After classifying and adding nodes to $G_r$, we update edge weights based on local orderings in $A_r$. Vertices in $A_r$ are processed in round-increasing order.

For each vertex \Changed{$v$} from replica $R_i$, we iterate through the transaction digests in $v.dgs$. For each digest $d$, we compare the value of $node(d).committed\_ois[i]$ with all other nodes in the  graph $G$ of $node(d)$. For each pair $(d, d_2)$, we increment $G.weight[(d, d_2)]$ if $node(d).committed\_ois[i] < node(d_2).committed\_ois[i]$; otherwise, we increment $G.weight[(d_2, d)]$ (Lines~\ref{alg:incweight}-\ref{alg:incweight1}).

\fullversion{
\begin{figure}[t]
%\vspace{-16mm}
    \begin{myprotocol}
        \TITLE{Fairness Layer Thread}{Ordering transactions}
        \FUNCTION{OrderFinalization}{}
        \WHILE{$G_r \GETS graphs.Front()$}
            \IF{$G_r$ is a tournament}
            \STATE $graphs.Pop()$
            \STATE $G_r^c \GETS Trajan\_SCC(G_r)$
            \STATE $[S_1, S_2, ..., S_s] \GETS Topologically\_Sorted(G_r^c)$
            \SPACE
            \STATE $last \GETS \max \{ j \mid \exists node \in S_j, node.type = solid \}$
            \FOR{$j = 1, 2, ..., last$}
            \STATE $p_j \GETS Hamilton\_Path(S_j)$
            \STATE Append $p_j$ to final ordering
            \FOR{$node(d) \in p_j$}
            \STATE $ordered\_nodes.add(node(d))$
            \ENDFOR
            \ENDFOR
            \SPACE
            \STATE $G_{r'} \GETS graphs.Front()$
            \FOR{$j = last+1, last+2, ..., s$}
            \FOR{$node(d) \in S_j$}
            \STATE $\begin{aligned}ap(d, r') \GETS |\{i\mid node(d).committed\_rounds[i] \le r'\}|\end{aligned}$\IF{$ap(d, r') \ge \n-\f$}
                \STATE $node(d).type = solid$
            \ELSIF{$ap(d, r') \ge \frac{\n-\f}{2}$}
                \STATE $node(d).type = shaded$
            \ENDIF
            \STATE $G_{r'}.nodes.add(node(d))$
            \FOR{$node(d_2) \in G_{r'}.nodes$}
            \STATE Calculate $G_{r'}.weights[(d, d_2)]$
            \STATE Calculate $G_{r'}.weights[(d_2, d)]$
            \IF{$G_{r'}.weights[(d, d_2)] \ge \frac{\n-\f}{2}$\\ $\lor G_{r'}.weights[(d_2, d)] \ge \frac{\n-\f}{2}$}
            \IF{$G_{r'}.weights[(d, d_2)]\ge G_{r'}.weights[(d_2, d)]$}
            \STATE $G_{r'}.edges.add(e(d,d_2))$
            \ELSE 
            \STATE $G_{r'}.edges.add(e(d_2,d))$
            \ENDIF
            \ENDIF
            
            \ENDFOR
            \ENDFOR
            \ENDFOR
            \ELSE
            \STATE \textbf{break}
            \ENDIF
        \ENDWHILE
        \ENDFUNCTION
    \end{myprotocol}
    \caption{Finalizing Transaction Ordering in FairDAG-RL.}
    \label{fig:finalize}
    %\vspace{-6mm}
\end{figure}
}

During this process, we maintain a set $addable\_edges$ to track the edges that can be added to the dependency graph. If $G.weight[(d, d_2)]$ or $G.weight[(d_2, d)]$ reaches the threshold $\frac{\n - \f}{2}$, the corresponding pair is added to $addable\_edges$ (Lines~\ref{alg:addableedge}-\ref{alg:addableedge1}).

\shortversion{\newpage}
\begin{flushleft}
    \textbf{Adding edges}
\end{flushleft}

For each pair $(d, d_2) \in addable\_edges$, if there is no edge between $node(d)$ and $node(d_2)$, an edge is added based on the majority preference (Lines~\ref{alg:addedge}-\ref{alg:addedge1}):

\begin{itemize}
\item If $G.weight[(d, d_2)] \ge G.weight[(d_2, d)]$, add edge $e(d, d_2)$ from $node(d)$ to $node(d_2)$ ;
\item Otherwise, add edge $e(d_2, d)$ from $node(d_2)$ to $node(d)$.
\end{itemize}

\shortversion{
\begin{figure}[t]
\begin{equation*}
    \begin{aligned}
        A_2: & \quad 
        \begin{aligned}
            &R_1: \{d_0, d_1, d_2, d_5, d_3\} \\
            &R_2: \{d_0, d_2, d_3, d_4, d_5\} \\
            &R_3: \{d_0, d_4, d_1, d_6\} \\
            &R_4: \{d_0, d_4, d_1, d_2\}
        \end{aligned}
        \quad\quad
        A_4: & \quad 
        \begin{aligned}
            &R_1: \{d_4, d_6\} \\
            &R_2: \{d_1, d_6\} \\
            &R_3: \{d_3, d_2, d_5, d_7\} \\
            &R_4: \{d_3, d_5, d_6, d_7\}
        \end{aligned}
    \end{aligned}
\end{equation*}
    \begin{tikzpicture}[yscale=0.8]
        \node at (-1.3, 2.2) {After processing $A_2$:};
        \node at (-2, 1.5) {$G_{2}$:};
        \node[minimum width=0.4cm,draw,circle,font=\tiny, inner sep=0pt] (a) at (-2,0.15) {$d_0$};
        \node[minimum width=0.4cm,draw,circle,font=\tiny, inner sep=0pt] (b) at (-1,1.5) {$d_1$};
        \node[minimum width=0.4cm,draw,circle,font=\tiny, inner sep=0pt] (c) at (-1,0.6) {$d_2$};
        \node[minimum width=0.4cm,draw,circle,dashed,font=\tiny, inner sep=0pt] (d) at (-1,-0.3) {$d_3$};
        \node[minimum width=0.4cm,draw,circle,font=\tiny, inner sep=0pt] (e) at (-1,-1.2) {$d_4$};
        \node[minimum width=0.4cm,draw,circle, dashed,font=\tiny, inner sep=0pt] (f) at (0,0.15) {$d_5$};
        \draw[->] (a) -- (b);
        \draw[->] (a) -- (c);
        \draw[->] (a) -- (d);
        \draw[->] (a) -- (e);
        \draw[->] (a) -- (f);
        
        \draw[->] (b) to (c);
        \draw[->, bend left] (b) to (d);
        \draw[->, bend right] (e) to (b);
        \draw[->] (c) to (d);
        \draw[->, bend right] (c) to (e);
        \draw[->] (d) to (e);

        \draw[->] (b) -- (f);
        \draw[->] (c) -- (f);
        % \draw[->] (d) -- (f);
        \draw[->] (e) -- (f);

% \draw[-, dashed] (0.5, 2.4) -- (0.5, -1.4);

\node at (2, 2.2) {After processing $A_4$:};
        \node at (1, 1.5) {$G_{2}$:};
        \node[minimum width=0.4cm,draw,circle,font=\tiny, inner sep=0pt] (a2) at (1,0.15) {$d_0$};
        \node[minimum width=0.4cm,draw,circle,font=\tiny, inner sep=0pt] (b2) at (2,1.5) {$d_1$};
        \node[minimum width=0.4cm,draw,circle,font=\tiny, inner sep=0pt] (c2) at (2,0.6) {$d_2$};
        \node[minimum width=0.4cm,draw,circle,dashed,font=\tiny, inner sep=0pt] (d2) at (2,-0.3) {$d_3$};
        \node[minimum width=0.4cm,draw,circle,font=\tiny, inner sep=0pt] (e2) at (2,-1.2) {$d_4$};
        \node[minimum width=0.4cm,draw,circle, dashed,font=\tiny, inner sep=0pt] (f2) at (3,0.15) {$d_5$};

\draw[->] (a2) -- (b2);
\draw[->] (a2) -- (c2);
\draw[->] (a2) -- (d2);
\draw[->] (a2) -- (e2);
\draw[->] (a2) -- (f2);

\draw[->] (b2) to (c2);
\draw[->, bend left] (b2) to (d2);
\draw[->, bend right] (e2) to (b2);
\draw[->] (c2) to (d2);
\draw[->, bend right] (c2) to (e2);
\draw[->] (d2) to (e2);

\draw[->] (b2) -- (f2);
\draw[->] (c2) -- (f2);
\draw[->, blue] (d2) -- (f2);
\draw[->] (e2) -- (f2);

\node at (3.5, 1.5) {$G_{4}$:};

\node[minimum width=0.4cm,draw,circle,font=\tiny, inner sep=0pt] (y) at (3.7,0.15) {$d_6$};
\node[minimum width=0.4cm,draw,circle, dashed, font=\tiny, inner sep=0pt] (z) at (4.5,0.15) {$d_7$};

\draw[->] (y) -- (z);

% Adjusting the labels
\node at (3,-1.8) {(b)};
\node at (-1,-1.8) {(a)};
    \end{tikzpicture}
    \caption{Constructing dependency graphs with $\n = 4, \gamma = 1, \f=1$.}
    \label{fig:graph}
    %\vspace{-8mm}
\end{figure}}

\begin{example}
Figure~\ref{fig:graph} illustrates how \FairDAGRL{} constructs dependency graphs. 
In Figure~\ref{fig:graph}(a), after processing $A_2$, graph $G_2$ contains six nodes. Nodes $d_0$, $d_1$, $d_2$, and $d_4$ are classified as \emph{solid} (solid circles), while $d_3$ and $d_5$ are \emph{shaded} (dashed circles). \Changed{All node pairs form edges except $(d_3, d_5)$,} as neither $G_2.weight[(d_3, d_5)]$ nor $G_2.weight[(d_5, d_3)]$ reaches the threshold $\frac{\n - \f}{2} = \Changed{\frac{3}{2}}$. In Figure~\ref{fig:graph}(b), after processing $A_4$, $G_4$ is constructed with two additional nodes, and an edge between $node(d_3)$ and $node(d_5)$ is added once $G_2.weight[(d_3, d_5)]$ reaches the threshold.
\end{example}

\shortversion{
\subsection{Ordering Finalization}

\FairDAGRL{} finalizes the ordering of transactions within a dependency graph after it becomes a \emph{tournament}. A \emph{tournament} is a dependency graph such that there is an edge between each pair of transaction nodes in the dependency graph. When finalizing the ordering, \FairDAGRL{} condenses the dependency graph into multiple \emph{Strongly Connected Components (SCCs)} and generates a topological sorting of them. Transactions that are sorted behind the last \emph{SCC} containing at least one \emph{solid} transaction will be readded into later dependency graphs, as they might be dependent on some transactions in later dependency graphs (see more detailed explanation in Section 8.4 of our extended report~\cite{extended-report}.)

As we use the same algorithm to finalize transaction ordering within \emph{tournaments} as \Themis{}, we put the detailed algorithm in Section 6.3 our extended report~\cite{extended-report}.
}

\fullversion{
\subsection{Ordering Finalization}

A \emph{tournament} graph is a graph such that there is an edge between each pair of nodes. After adding edges, we check if the graphs are \emph{tournaments} in a round-increasing order. If a graph is a \emph{tournament}, we finalize the ordering of transactions within it as follows and then check the next graph until encountering a \emph{non-tournament} graph.

\begin{flushleft}
    \textbf{Condensing dependency graph}
\end{flushleft}

In graph theory, a \emph{strongly connected component (SCC)} is a maximal subset of nodes such that for every pair of nodes $(node_1, node_2)$ in the subset, there exists a directed path from $node_1$ to $node_2$ and a directed path from $node_2$ to $node_1$.

For a \emph{tournament} dependency graph $G_r$, we condense it into $G_r^c$ by finding all \emph{SCCs} with \emph{Tarjan's SCC algorithm}. Figure~\ref{fig:graph} (c) shows $G_2^c$ the condensed graph of $G_2$.

\begin{flushleft}
    \textbf{Ordering finalization}
\end{flushleft}

It is guaranteed that after condensation, there will be one and only one topological sorting of the \emph{SCCs} in $G_r^c$, which we denote by $S_1, S_2, ..., S_{s}$, where $S_j$ represents the $j$-th \emph{SCC} in the topological sorting.

We denote by the $S_{last}$ the last \emph{SCC} that contains a \emph{solid} node. Then, for each $S_j$ such that $j \le last$, we find a \emph{Hamilton path} $p_j$ of nodes in $S_j$ and append $p_j$ to the final transaction ordering. In Figure~\ref{fig:graph} (c), $S_2$ is $S_last$ of $G_2^c$.

\begin{flushleft}
    \textbf{Readding shaded nodes}
\end{flushleft}

For the \emph{SCCs} behind $S_j$, the nodes are all \emph{shaded} nodes. We add these nodes into the next graph $G_{r'}$. For each $node(d)$, we run the following steps: 

\begin{enumerate}
    \item Classify $node(d)$ based on $ap(d, r')$;
    \item Calculate the weights between $node(d)$ and all existing nodes in $G_{r'}$.
    \item Add edges between $node(d)$ and other nodes if the weight reaches threshold $\frac{\n-\f}{2}$.
\end{enumerate}

In Figure~\ref{fig:graph} (d), $node(d_5)$ is readded into $G_4$, being classified as a \emph{solid} node. And two edges pointing from $node(d_5)$ are added.

\subsection{Ordering Dependency}
    After illustrating how to construct dependency graphs and finalize transaction ordering in \FairDAGRL{}, we now discuss \emph{ordering dependencies} in \FairDAGRL{}. For any transaction $T_1$ and $T_2$ with digests $d_1$ and $d_2$, either $T_1$ is dependent on $T_2$ or $T_2$ is dependent on $T_1$. As the transactions are ordered based on the edges in the dependency graphs, intuitively, if an edge $e(d_1, d_2)$ exists, then $T_2$ is dependent on $T_1$. 
    
    However, for two transactions that are in different replicas, deciding \emph{ordering dependency} is more complicated. We denote by $weight[(d_2, d_1)]_{max}$ the maximal number of the local orderings in which $T_2$ is ordered before $T_1$. Even though $node(d_2)$ is in an earlier dependency graph, it is possible that $weight[(d_2, d_1)]_{max} < \frac{\n-\f}{2}$, which implies that edge $e(d_2, d_1)$ cannot exist even if the two nodes are in the same dependency graph. Then, if $weight[(d_2, d_1)]_{max} < \frac{\n-\f}{2}$, $T_2$ is dependent on $T_1$. 
    
    To endorse the transactions that are added as nodes earlier, if $T_1$ is in a earlier dependency graph and $weight[(d_1, d_2)]_{max} \ge \frac{\n-\f}{2}$, which implies that edge $e(d_1,d_2)$ could exist, we also say that $T_1$ is also dependent on $T_2$.

    \begin{definition}
        We say that $T_2$ with digest $d_2$ is dependent on $T_1$  with digest $d_1$ if:
        \begin{itemize}
            \item $weight[(d_2, d_1)]_{max} < \frac{\n-\f}{2}$; \textbf{or}
            \item edge $e(d_1, d_2)$ exists; \textbf{or}
            \item $weight[(d_1, d_2)]_{max} >= \frac{\n-\f}{2}$ and $node(d_1)$ is in a earlier dependency graph.
        \end{itemize}
    \end{definition}

% }

}
\section{Comparing Fairness Protocols}~\label{sec:newcomparison}

In this section, we demonstrate how \FairDAG{} and \FairDAGRL{} outperform \Pompe{}~\cite{pompe} and \Themis{}~\cite{themis} in limiting the adversary's manipulation of transaction ordering. We achieve this by comparing how these protocols perform under adversarial conditions, such as those caused by Byzantine replicas or an asynchronous network.

\subsection{Pompe and Themis}\label{ss:intropompethemis}

\Pompe{} and \Themis{} run fairness protocols atop leader-based consensus protocols such as PBFT~\cite{pbftj} and HotStuff~\cite{hotstuff}, where a single leader is responsible for collecting local orderings from a quorum of $\n{-}\f$ replicas.

Beyond the difference in underlying consensus protocols, \Pompe{} assigns non-overlapping intervals to consensus rounds, allowing only transactions whose $AOI$ fall within the corresponding interval to be executed, increasing the overhead of recovery when the leader fails. Additionally, in \Pompe{}, each client transaction is sent to a single replica instead of being broadcast, making the protocol more susceptible to transaction censorship.

\subsection{Adversarial Manipulation}\label{ss:adversarial}

We now analyze how an adversary can manipulate transaction ordering in \Pompe{} and \Themis{} by selectively collecting local orderings. Additionally, we demonstrate how \FairDAG{} and \FairDAGRL{} mitigate these vulnerabilities, ensuring more resilient and fair transaction orderings.

\begin{flushleft}
    \textbf{Pompe vs FairDAG-AB}
\end{flushleft}

In \Pompe{}, a Byzantine replica who is responsible for collecting ordering indicators can selectively choose $\n{-}\f$ local ordering indicators to calculate an assigned ordering indicator. This selective collection allows the adversary to manipulate the ordering of transactions whose ranges of correct ordering indicators overlap.

Furthermore, if at most $\f$ correct replicas have received the assigned ordering indicator for a transaction $T$ before a Byzantine leader replica collects assigned ordering indicators for a new block, the leader can exclude $T$ from the new block. Since the leader is only required to collect from $\n{-}\f$ replicas, it can selectively collect from $\n{-}\f$ replicas that have not received $T$, thus delaying the position of $T$ in the final ordering.

In \FairDAG{}, the client broadcasts its transaction to all replicas, and each replica independently broadcasts its ordering indicators.
The \emph{Validity} property and the round-robin or random leader rotation of the underlying DAG protocols guarantees that reduce the chance of selective collection of ordering indicators by Byzantine replicas. Thus, \FairDAG{} is more resilient against ordering manipulation than \Pompe{}.

Additionally, 
\Pompe{} effectively mitigates the transaction censorship issue mentioned in Section~\ref{ss:intropompethemis}, because transactions are broadcast to all replicas, and each replica independently generates and broadcasts its local ordering indicators.

\begin{flushleft}
    \textbf{Themis vs FairDAG-RL}
\end{flushleft}

In \Themis{}, if a Byzantine leader ignore $\f$ correct local orderings containing transaction $T$ and the Byzantine replicas exclude $T$ from their local orderings, there would be at most $\n{-}2\f$ local orderings containing $T$. Thus, the threshold of deciding edge direction is $\frac{\n{-}2\f}{2}$ in \Themis{}. To guarantee the \emph{$\gamma$-Batch-Order-Fairness}, it is required that the votes of the opposite direction cannot reach the threshold, containing $\f$ votes from Byzantine replicas and $(1{-}\gamma)(\n{-}\f)$ votes from correct replicas. That is,
$\f+ (1{-}\gamma)(\n{-}\f) < \frac{\n{-}2\f}{2}$, i.e., $\n>\frac{\f(2\gamma+2)}{2\gamma-1}$. When $\gamma=1$, \Themis{} requires $\n>4\f$.

In \FairDAGRL{}, due to the Validity of DAG-based consensus protocols, all local orderings from correct replicas will eventually be committed by all replicas. Then, for each transaction, there are at least $\n{-}\f$ local orderings containing it. Thus, the threshold for deciding edge direction is $\frac{\n{-}\f}{2}$ in \Themis{}. Then, to guarantee the \emph{$\gamma$-Batch-Order-Fairness}, it is required that $\f+ (1{-}\gamma)(\n{-}\f) < \frac{\n{-}\f}{2}$, i.e., $\n>\frac{\f(2\gamma+1)}{2\gamma-1}$.  When $\gamma=1$, \FairDAGRL{} requires $\n>3\f$.

\subsection{Leader Crash and Asynchronous Network}

In \Pompe{} and \Themis{}, if the leader crashes, a recovery subprocess must be initiated to replace the leader with a new one, introducing an additional delay of $O(\Delta)$ before the protocol can resume normal operation. Moreover,
in \Pompe{}, each round corresponds to a distinct, non-overlapping time slot. If the designated leader crashes, transactions with assigned ordering indicators that fall within that time slot cannot be ordered or executed, resulting in a potential transaction loss or indefinite delays. In \Themis{}, if the leader crashes, replicas have to resend their local orderings to the new leader, resulting in additional communication overhead. 

If the network operates under asynchronous conditions where messages can experience indefinite delays, additional overhead will be introduced, similar to the overhead incurred during leader crashes.

\FairDAG{} and \FairDAGRL{} address the aforementioned issues through the multi-proposer design and the Reliable Broadcast inherent to DAG-based consensus protocols, which ensures that each correct local ordering eventually deliviers and commits even in asynchronous settings.
\fullversion{
\section{Correctness Proof}~\label{sec:proof}

In this section, we will prove the safety, liveness and fairness properties of \FairDAG{} and \FairDAGRL{}. Derived from \cite{dagrider, narwhal, bullshark}, the DAG-Layer has the following properties:

\begin{itemize}
    \item \textbf{Agreement}: if a correct replica commits $A_r$ of leader vertex $L_r$, then every other correct replica eventually commits $A_r$.
    \item \textbf{Total Ascending Order}: if a correct replica commits $A_r$ before $A_{r'}$, then $r<r'$ and no correct replica commits $A_{r'}$ before $A_r$.
    \item \textbf{Validity}: if a correct replica broadcasts a DAG vertex $v$, then eventually each correct replica will commit a leader vertex $L_r$ such that $v \in A_r$.
\end{itemize}

Combining the \textbf{Agreement} and \textbf{Total Ascending Order}, we have the following lemma:

\begin{lemma}\label{lemma:samecausalhistory}
    If a correct replica $R$ commits a series of leader vertices $L\_list^{R} = L_{r_j}^{R}, L_{r_{j+1}}^{R}, ...$ and every other correct replica $R'$ will eventually commit $L\_list^{R'} = L_{r_j}^{R'}, L_{r_{i+1}}^{R'}, ...$, such that $\forall j$, $r_j < r_{j+1}$; $L_{r_j}^{R} = L_{r_j}^{R'}$; and $C_{r_j}^{R} = C_{r_j}^{R'}$.
\end{lemma}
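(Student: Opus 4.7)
The plan is to decompose the lemma into its three component claims and derive each one from the three properties of the DAG Layer already stated just before the lemma: Agreement, Total Ascending Order, and Validity (implicitly, by ensuring that ancestors are available at commit time).

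First, I would handle the ordering claim $r_j < r_{j+1}$. This is immediate from Total Ascending Order applied to $R$'s own commit sequence: since $R$ commits $L_{r_j}^R$ before $L_{r_{j+1}}^R$, the property directly yields $r_j < r_{j+1}$.

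Second, I would establish $L_{r_j}^R = L_{r_j}^{R'}$ by pairing Agreement with Total Ascending Order. Agreement guarantees that every leader vertex committed by $R$ is eventually committed by $R'$, and symmetrically every leader vertex committed by $R'$ is eventually committed by $R$. Applying Total Ascending Order to each replica's commit sequence ensures that both sequences are strictly increasing in round number. Since both sequences contain exactly the same multiset of leader vertices and each is sorted by round, they must be identical position-by-position, giving $L_{r_j}^R = L_{r_j}^{R'}$ for all $j$.

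Third, for $C_{r_j}^R = C_{r_j}^{R'}$, I would argue as follows. By definition (Section~\ref{ss:dagintro}), the causal history of a vertex is the set of all vertices reachable from it via DAG reference edges. Each DAG vertex's strong and weak edges are fixed at creation time by its proposer and disseminated via Reliable Broadcast, so any two correct replicas that have delivered a vertex $v$ observe the same outgoing references from $v$. Since $L_{r_j}^R = L_{r_j}^{R'}$ is the same vertex, an induction on the depth of reachability inside this common DAG shows that both $R$ and $R'$ compute the same set of ancestors once they have delivered all of them. The main obstacle, and the step that needs the most care, is ruling out the scenario where one replica commits $L_{r_j}$ before having delivered some ancestor, which would make the two causal histories disagree at the moment of commit. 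This is ruled out by the commit rule of the underlying DAG-based protocol: a correct replica only commits a leader vertex once its entire causal history is available in its local DAG view, a property inherited from Reliable Broadcast's delivery guarantees combined with the edge-validity check on each vertex. Combining these observations yields $C_{r_j}^R = C_{r_j}^{R'}$, completing the proof.
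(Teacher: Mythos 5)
Your proof is correct and follows the same route the paper intends: the paper gives no explicit proof of this lemma, merely asserting that it follows by ``combining the Agreement and Total Ascending Order'' properties of the DAG layer. Your first two steps are exactly that combination spelled out, and your third step correctly identifies the one ingredient the paper leaves implicit --- that equality of the committed leader vertices only yields equality of their causal histories because each vertex's outgoing references are fixed at creation and disseminated via reliable broadcast, and the commit rule guarantees all ancestors are delivered before commit. This is a faithful and slightly more careful version of the paper's (omitted) argument.
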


\subsection{Safety}\label{ssec:safety}

\begin{lemma}\label{lemma:sameaoi}
    In \FairDAG{}, if a correct replica $R$ assigns transaction $T$ with digest $d$ \emph{assigned ordering indicators} $d.AOI^R$, then every other correct replica $R'$ will eventually assign $T$ with $d.AOI^{R'}$ such that $d.AOI^R = d.AOI^{R'}$.
\end{lemma}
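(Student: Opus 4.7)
The plan is to reduce the claim to the determinism of $AOI$ computation given identical inputs, and then invoke Lemma~\ref{lemma:samecausalhistory} to show that correct replicas see identical inputs at the moment $AOI$ is assigned.

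First, I would fix a correct replica $R$ that has computed $d.AOI^R$, and let $L_r$ be the first leader vertex whose commit triggered the assignment. By the protocol in Figure~\ref{fig:alg-ab} (Lines~\ref{alg:commitoi}--\ref{alg:calculate1}), this means that after $R$ processed $A_r$, the vector $OIM(d).committed\_ois$ contained at least $\n-\f$ non-$\infty$ entries for the first time, and $d.AOI^R$ was set to the $(\f{+}1)$-th smallest of those entries. Since $AOI$ is immutable once set, I only need to exhibit that $R'$ computes the same value at its own commit of $L_r$.

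Next, I invoke Lemma~\ref{lemma:samecausalhistory}: $R'$ eventually commits the same sequence of leader vertices $L_{r_1}, L_{r_2}, \dots$ with identical causal histories as $R$. In particular, $R'$ commits the same $L_r$, and the subdag partition $(A_{r_1}, \dots, A_{r_k})$ up through round $r$ coincides at the two replicas. Because $OIM(d).committed\_ois[i]$ is updated deterministically from the pair $(d, oi) \in (v.dgs, v.ois)$ for each $v \in A_{r_j}$ with $v.replica\_id = i$, the state of $OIM(d).committed\_ois$ at $R'$ immediately after processing $A_r$ equals the corresponding state at $R$. Thus the condition of reaching $\n-\f$ non-$\infty$ entries is first satisfied at $R'$ also upon committing $L_r$, and the $(\f{+}1)$-th smallest value is identical.

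The main step to handle carefully is ruling out the case in which $R'$ might try to assign $d$ an $AOI$ at a \emph{different} leader commit—either strictly earlier or strictly later than $r$. Earlier is impossible because, by Lemma~\ref{lemma:samecausalhistory}, the $committed\_ois$ vector at $R'$ after processing $A_{r'}$ with $r' < r$ coincides with the one at $R$, which by the minimality of $r$ had fewer than $\n-\f$ non-$\infty$ entries. Later is impossible for the same reason: once the threshold is reached at step $r$, $R'$ assigns the $AOI$ at that step and never revisits it due to immutability. Combining these observations yields $d.AOI^{R'} = d.AOI^R$, completing the proof.
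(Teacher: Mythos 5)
Your proof is correct and takes essentially the same approach as the paper: both arguments reduce the claim to the determinism of the $AOI$ computation from the committed causal history and then invoke Lemma~\ref{lemma:samecausalhistory} to conclude that every correct replica commits the same leader vertices with identical subdags, hence computes the same value at the same commit step. The paper phrases the last step as a short proof by contradiction whereas you argue directly and spell out why $R'$ cannot assign the $AOI$ at an earlier or later leader commit, but the substance is identical.
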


\begin{proof}
    According to \FairDAG{} algorithm, we know that a correct replica deterministically calculates $d.AOI$ based on $C_r$, the causal history of the lowest-round leader vertex $L_r$ such that after the replica commits $L_r$, there are at least $\n-\f$ $committed\_ois$ of $d$ in $C_r$. 

    Then, we prove the lemma by contradiction. We assume that $R$ calculates $d.AOI^R$ based on the causal history $C_r^R$ of leader vertex $L_r^R$. If $R'$ never assigns $T$ with an \emph{assigned ordering indicator}, then $R'$ never commits $C_r^R$. If $R'$ assigns $T$ with an \emph{assigned ordering indicator} different from $T'$, then $R'$ must have committed some different $C_r'^{R'}$. Both cases contradict \ref{lemma:samecausalhistory}.
\end{proof}

\begin{lemma}\label{lemma:lpaoi}
    In \FairDAG{}, after $A_r$ is committed and processed, for a transaction $T$ with digest $d$ that has no \emph{assigned ordering indicator} but a \emph{lowest possible assigned ordering indicator} $d.LPAOI_r$, if $T$ eventually gets an \emph{assigned ordering indicator} $d.AOI$, then $d.AOI \ge d.LPAOI_r$.
\end{lemma}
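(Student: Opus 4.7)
The plan is to reduce the claim to an element-wise inequality: for every replica index $i$ whose entry $\mathit{committed\_ois}[i]$ is finite at the round $r'$ at which $d.AOI$ is finally assigned, the entry $\mathit{lp\_ois}[i]$ computed at round $r$ is bounded by $\mathit{committed\_ois}[i]$. Abbreviating these two length-$\n$ vectors as $L$ and $C$, the element-wise bound yields $\{i : C[i] \le d.AOI\} \subseteq \{i : L[i] \le d.AOI\}$. Since $d.AOI$ is defined as the $(\f{+}1)$-th smallest entry of $C$, the left-hand set already has cardinality at least $\f{+}1$, so the right-hand set does too, and therefore the $(\f{+}1)$-th smallest entry of $L$, which is exactly $d.LPAOI_r$, is at most $d.AOI$.

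To establish the element-wise bound I pick the vertex $v^{\star}$ of $R_i$ whose committed $(d, oi)$ pair contributes $C[i]$; such a $v^{\star}$ exists because $C[i]$ is finite, and any committed vertex is delivered. I then split on whether $v^{\star}$ has been delivered to $R$ by round $r$. In the delivered case, the \textsc{On deliver} handler in Figure~\ref{fig:alg-ab} will already have written $\mathit{seen\_ois}[i] \GETS oi = C[i]$, so $L[i] \le \mathit{seen\_ois}[i] = C[i]$. In the not-delivered case, the DAG-layer guarantee of causal delivery, together with the \FairDAG{} rule that every $v_{i,r}$ carries a strong edge to $v_{i,r-1}$, forces all vertices of $R_i$ delivered to $R$ by round $r$ to lie in a strictly earlier round than $v^{\star}$; monotonicity of $R_i$'s ordering indicators across its own rounds then yields $\mathit{highest\_ois}[i]_r < C[i]$, so $L[i] \le \mathit{highest\_ois}[i]_r < C[i]$.

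The main obstacle I expect is handling Byzantine proposers in the not-delivered sub-case, since a malicious $R_i$ could otherwise attach a small ordering indicator to $d$ in a late vertex after having published an earlier vertex with a larger one, breaking the inter-round monotonicity the argument relies on. I plan to dispatch this by invoking the validity check that correct replicas perform on delivered vertices --- namely, that the ordering indicators inside any accepted $v_{i,r}$ must strictly exceed those already observed in the previously delivered $v_{i,r-1}$ from the same proposer --- so that every accepted vertex, whether from a correct or a Byzantine $R_i$, respects monotonicity. A subsidiary subtlety is that if $d$ were inserted into more than one vertex of the same proposer, then $\mathit{seen\_ois}$ and $\mathit{committed\_ois}$ must be read as recording the minimum (or first-observed) $oi$; under either reading the two-case argument extends uniformly to all proposers, and the order-statistic step from the first paragraph then yields $d.AOI \ge d.LPAOI_r$.
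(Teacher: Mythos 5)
Your proof is correct and follows essentially the same route as the paper's: element-wise domination of $lp\_ois_r$ by every later $committed\_ois$ entry, followed by the order-statistic comparison of the two $(\f{+}1)$-th smallest values. You are in fact more careful than the paper, whose proof simply asserts that each new $committed\_ois[i]$ is at least $\min(seen\_ois_r[i], highest\_ois_r[i])$ without argument; your delivered/not-yet-delivered case split on the contributing vertex, and your explicit identification of the inter-round monotonicity check on a proposer's ordering indicators that is needed to handle Byzantine proposers, supply exactly the justification the paper leaves implicit.
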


\begin{proof}
    We denote by $\hts{}_r$ and $d.seen\_ois_r$, respectively, the $\hts{}$ and $d.seen\_ois$ after $A_r$ is committed.
    According to the \FairDAG{} algorithm, $d.LPOAI^r$ is the $(\f+1)$-th lowest value of $lp\_ois_r$ where $lp\_ois_r[i] \GETS $ \\ $min(d.seen\_ois_r[i], \hts{}_r[i])$, $1\le i\le \n$. As the DAG grows, each new $d.committed[i]$ will be not smaller than \\$min(d.seen\_ois_r[i], \hts{}_r[i])$. Thus, as $d.AOI \GETS $\\$sorted(d.committed\_ois[i])[\f+1]$, where $\forall i, d.committed\_ois[i] \ge lp\_ois_r[i]$, it holds true that $d.AOI \ge d.LPAOI_r$.
\end{proof}

\begin{lemma}\label{lemma:followingaoi}
    In \FairDAG{}, for any two transactions $T_1$ and $T_2$ with digests $d_1$ and $d_2$, if $d_1.AOI < d_2.AOI$, then $d_1$ will be ordered before $d_2$ in the final ordering.
\end{lemma}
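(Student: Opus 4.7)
The plan is to prove the contrapositive: whenever $d_2$ is executed at or before $d_1$ in the final ordering produced by \FairDAG{}, we must have $d_2.AOI \le d_1.AOI$. Recall that the final ordering is produced by the execution rule (Line~\ref{line:execute}), which at each round executes every transaction in $\twat{}$ whose $AOI$ is strictly less than $LPAOI_{\min}$, after sorting $\twat{}$ by $AOI$ (Line~\ref{line:sort}).

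First I would fix the round $r_2$ in which $d_2$ is executed (upon processing $A_{r_2}$), and write $LPAOI_{\min}^{r_2}$ and $d.LPAOI^{r_2}$ for the values of these quantities recomputed after processing $A_{r_2}$. The execution rule immediately gives $d_2.AOI < LPAOI_{\min}^{r_2}$. Next I would split into two cases depending on whether $d_1$ has already been assigned an $AOI$ by the end of round $r_2$.

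In the first case, $d_1.AOI$ is already computed by round $r_2$, so $d_1 \in \twat{}$. Because $d_1.AOI < d_2.AOI < LPAOI_{\min}^{r_2}$, $d_1$ also satisfies the execution condition by round $r_2$. Either $d_1$ was executed in some earlier round (and we are done), or both $d_1$ and $d_2$ are executed in the same batch at round $r_2$, in which case the sort-by-$AOI$ step places $d_1$ strictly before $d_2$.

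In the second case, $d_1$ has no $AOI$ by the end of round $r_2$, so $d_1.LPAOI^{r_2}$ is one of the quantities over which $LPAOI_{\min}^{r_2}$ is minimised, yielding $LPAOI_{\min}^{r_2} \le d_1.LPAOI^{r_2}$. Applying Lemma~\ref{lemma:lpaoi}, which certifies that the eventual $d_1.AOI$ is at least $d_1.LPAOI^{r_2}$, we obtain $d_2.AOI < LPAOI_{\min}^{r_2} \le d_1.LPAOI^{r_2} \le d_1.AOI$, contradicting $d_1.AOI < d_2.AOI$. Hence this case is impossible. The hard part is this second case: it hinges on $LPAOI_{\min}$ being a valid lower bound on every future-assigned $AOI$ of currently unordered transactions, which is exactly what Lemma~\ref{lemma:lpaoi} together with the definition of $LPAOI_{\min}$ as a minimum over \emph{all} transactions without an $AOI$ guarantees.
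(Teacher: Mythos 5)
Your proof is correct and follows essentially the same route as the paper's: the same case split on whether $d_1$ already has an $AOI$ when $d_2$ is executed, the same appeal to the sort-by-$AOI$ step for the same-batch case, and the same use of Lemma~\ref{lemma:lpaoi} together with the definition of $LPAOI_{\min}$ to rule out the case where $d_1$ is still unassigned. Your version is, if anything, slightly more explicit about the sub-case where $d_1$ was executed in an earlier round.
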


\begin{proof}
    We denote by $or_j$ the round such that transaction digest $d_j$ is ordered, getting an \emph{assigned ordering indicator} lower than $LPAOI_{min}$, i.e., $d_j.AOI < LPAOI_{min}$. 
    
    Now we prove by the lemma by contradiction. Assuming that $d_2$ is ordered before $d_1$, then obviously, $or_1\ge or_2$.
    \begin{itemize}
        \item If $or_1 = or_2$, then according to \FairDAG{} algorithm, $d_1$ and $d_2$ are ordered based on assigned ordering indicator. Thus, with a lower $AOI$, $d_1$ would be ordered before $d_2$, contradicting with our assumption.
        \item If $or_1 > or_2$, then by the definition of $LPAOI_{min}$ we know that $d_2.AOI < LPAOI_{min} \le d_1.LPAOI_{or_2}$. Also, from\\ Lemma~\ref{lemma:lpaoi} we know that eventually $d_1$ will get $d_1.AOI \ge d_1.LPAOI_{or_2}$. Thus, we have $d_1.AOI > d_2.AOI$, which contradicts the fact that $d_1.AOI < d_2.AOI$. 
    \end{itemize}
    In summary, it holds true that if $d_1.AOI < d_2.AOI$, then $d_1$ will be ordered before $d_2$ in the final ordering.
\end{proof}

\begin{theorem}
    (\FairDAG{} SAFETY) In \FairDAG{}, if a correct replica orders transaction $T$ at position $p$ in the final ordering, then every correct replica will eventually order $T$ at position $p$.
\end{theorem}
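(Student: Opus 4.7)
The plan is to reduce the safety claim to two sub-claims: (i) every correct replica eventually computes the same assigned ordering indicator $d.AOI$ for $T$, and (ii) every correct replica eventually agrees on the set of transactions ordered strictly before $T$, with a deterministic tiebreaker (e.g., on digests) among equal-$AOI$ transactions. Because the fairness layer sorts executed transactions purely by $AOI$, these two quantities together determine the position $p$ uniquely.

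First, I would invoke Lemma~\ref{lemma:sameaoi} applied to $T$: the replica $R$ that orders $T$ at position $p$ has computed a value $d.AOI$, so every correct replica $R'$ eventually computes the same $d.AOI$. I would then extend this to every predecessor $T'$ of $T$ in $R$'s output: each such $T'$ also received an $AOI$ at $R$, and by Lemma~\ref{lemma:sameaoi}, $R'$ eventually agrees on $T'.AOI$. Lemma~\ref{lemma:followingaoi} then forces $T'$ to appear before $T$ in $R'$'s output whenever $T'.AOI < d.AOI$. Combined with a deterministic digest-based tiebreaker for equal $AOIs$, this pins down the prefix of the final ordering up to and including $T$, provided both replicas have processed enough subdags to know all relevant $AOIs$.

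The main obstacle is ruling out a late-arriving transaction $T''$ that $R'$ has not yet seen but that could later acquire an $AOI$ smaller than $d.AOI$, thereby displacing $T$ from position $p$. The protocol guards against this via the check $d.AOI < LPAOI_{\min}$ that $R$ performs before executing $T$. To transfer this guarantee to $R'$, I would argue inductively on the committed prefix $A_{r_1}, A_{r_2}, \dots$. By Lemma~\ref{lemma:samecausalhistory}, every correct replica commits the same sequence of subdags with identical causal histories, so once $R'$ has processed the same prefix that $R$ processed when ordering $T$, it observes identical \hts{}, $seen\_ois$, and $committed\_ois$ tables, and therefore computes the same $LPAOI_{\min} > d.AOI$. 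Lemma~\ref{lemma:lpaoi} then guarantees that every transaction pending at that point will receive a final $AOI$ no smaller than its $LPAOI \ge LPAOI_{\min}$, and the monotonicity of \hts{} further ensures that any transaction delivered even later cannot obtain an $AOI$ below this threshold. Hence no latecomer can overtake $T$, and $R'$ orders $T$ at the same position $p$ as $R$.
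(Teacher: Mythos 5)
Your proposal is correct and follows essentially the same route as the paper: both rest on Lemma~\ref{lemma:sameaoi} (all correct replicas agree on each transaction's $AOI$) combined with Lemma~\ref{lemma:followingaoi} (transactions are ordered by ascending $AOI$). Your extra paragraph on ruling out late-arriving transactions via $LPAOI_{\min}$ and Lemma~\ref{lemma:lpaoi} is not a different argument but a more explicit unpacking of what the paper already buries inside the proof of Lemma~\ref{lemma:followingaoi}, and it usefully makes the tiebreaking and latecomer issues visible where the paper's one-line "combining the two claims" glosses over them.
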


\begin{proof}
    From Lemma~\ref{lemma:sameaoi} we know that every correct replica will eventually assign the same \emph{assigned ordering indicator} to $T$. From Lemma~\ref{lemma:followingaoi} we know that all transactions with \emph{assigned ordering indicators} are ordered in an ascending order of $AOI$. Combining the two claims above, we conclude that every correct replica will eventually order $T$ at the same position in the final ordering.
\end{proof}

\begin{theorem}
    (\FairDAGRL{} SAFETY) In \FairDAGRL{}, if a correct replica orders transaction $T$ at position $p$ in the final ordering, then every correct replica will eventually order $T$ at position $p$.
\end{theorem}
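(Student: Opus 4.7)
The plan is to show that the entire state of \FairDAGRL{}'s fairness layer—the sequence of dependency graphs, their nodes, node types, weights, edges, and tournament status—evolves identically across all correct replicas, so that any ordering decision taken at one correct replica is eventually reproduced at every other. The high-level structure parallels the safety argument for \FairDAG{}: reduce agreement on output to agreement on the pre-image, namely the committed DAG state, and then observe that every subsequent step is a pure function of that state.

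First, I would invoke Lemma~\ref{lemma:samecausalhistory} to conclude that every correct replica commits the same sequence of leader vertices $L_{r_1}, L_{r_2}, \dots$ with identical subdags $A_{r_1}, A_{r_2}, \dots$, and that within each $A_{r_j}$ the vertices, their proposers, and the enclosed $(d, oi)$ pairs are identical. Since the fairness layer processes subdags in the round-increasing order guaranteed by Total Ascending Order, I would prove by induction on $j$ that after handling $A_{r_j}$, every correct replica holds the same list of dependency graphs with the same node set (and the same classification as \emph{solid}, \emph{shaded}, or \emph{blank}), the same weight map, and the same edge set. The inductive step follows because $ap(d, r_j)$, the weight increments in \alglineref{fig:construct}{incweight}--\ref{fig:construct:incweight1}, and the edge-addition rule in \alglineref{fig:construct}{addedge}--\ref{fig:construct:addedge1} are deterministic functions of the shared state.

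Second, once some $G_{r_j}$ becomes a tournament at one correct replica, the induction guarantees it is simultaneously a tournament at every other correct replica after $A_{r_j}$ is processed. The finalization step then condenses $G_{r_j}$ via Tarjan's SCC algorithm, which is deterministic, and the condensate is itself a tournament DAG, hence admits a unique topological ordering $S_1, \dots, S_s$. The cutoff $S_{last}$ is fixed by the positions of \emph{solid} nodes, and for each $S_j$ with $j \le last$ the Hamilton path is produced by the same deterministic procedure used in \Themis{} with a canonical tie-breaking (e.g.\ by digest). The readdition of shaded nodes from $S_{last+1}, \dots, S_s$ into the next graph $G_{r_{j+1}}$ is likewise a deterministic function of the common state and therefore preserves the induction hypothesis for the next round. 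Combining these invariants, if a correct replica outputs $T$ at position $p$ while finalizing $G_{r_j}$, then every correct replica eventually commits $L_{r_j}$, sees the same tournament $G_{r_j}$, and executes the identical finalization, placing $T$ at position $p$.

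The main obstacle I expect is the Hamilton path step: a tournament may admit many Hamilton paths, so the proof must appeal to (and make explicit) a canonical deterministic selection rule inherited from \Themis{}; otherwise two correct replicas could legally pick different valid paths inside the same SCC and break safety. A secondary subtlety is the readdition mechanism, where I must verify that the new classification of a readded node via $ap(d, r_{j+1})$, together with any weight updates to previously \emph{shaded} nodes, remains a deterministic function of the committed subdags up to round $r_{j+1}$, so that the inductive invariant survives across the graph boundary.
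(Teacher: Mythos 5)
Your proposal is correct and takes essentially the same approach as the paper, whose entire proof is the two-sentence observation that each correct replica applies a deterministic procedure to the identical committed subdags guaranteed by Lemma~\ref{lemma:samecausalhistory}. Your version is a much more detailed elaboration of that same argument, and your point that the Hamilton-path selection inside each SCC must use a canonical deterministic tie-breaking rule is a genuine subtlety that the paper's proof leaves implicit under the word ``deterministic.''
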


\begin{proof}
    After committing a leader vertex, each correct replica uses a deterministic method to construct dependency graphs and finalize transaction order. Combining this with Lemma~\ref{lemma:samecausalhistory}, we know that safety holds for \FairDAGRL{}.
\end{proof}

\subsection{Liveness}\label{ssec:liveness}

We claim the following assumption holds, which is necessary for the liveness property.

\begin{assumption}\label{asmpt:allcorrect}
    If a correct replica $R$ receives transaction $T$, then every correct replica will eventually receive transaction $T$.
\end{assumption}

\begin{lemma}\label{lemma:goodaoi}
    In \FairDAG{}, if a transaction $T$ with transaction $d$ is received by correct replicas, then $T$ will eventually get an \emph{assigned ordering indicator}.
\end{lemma}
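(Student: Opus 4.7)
The plan is to show that at least $\n-\f$ committed ordering indicators for $d$ eventually accumulate in $OIM(d).committed\_ois$ at every correct replica, thereby satisfying the threshold test at \lfref{fig:alg-ab}{calculate} and triggering assignment of an $AOI$. First, I would apply Assumption~\ref{asmpt:allcorrect}: every correct replica $R_i$ eventually receives $T$ and, via the Client Thread (\lfref{fig:alg-ab}{recv1}--\lfref{fig:alg-ab}{recv2}), appends $(d, oi_i)$ to its pending buffers $dgs$ and $ois$. Because each correct replica periodically proposes a new DAG vertex that bundles and clears these buffers (\lfref{fig:alg-ab}{propose1}--\lfref{fig:alg-ab}{propose2}), each of the at least $\n-\f$ correct replicas reliably broadcasts some vertex $v_{i,r_i}$ whose local ordering contains the pair $(d, oi_i)$.

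Next, I would invoke the Validity property of the DAG layer: for each such $v_{i,r_i}$ there exists a leader vertex $L_{r^*}$ whose subdag $A_{r^*}$ contains $v_{i,r_i}$, and every correct replica eventually commits $L_{r^*}$. Because Lemma~\ref{lemma:samecausalhistory} guarantees that the committed subdags form a deterministic, monotonically growing sequence at every correct replica, after finitely many commits every correct replica has delivered to its Fairness Layer a collection of subdags that jointly cover the vertices $v_{i,r_i}$ for all $\n-\f$ correct replicas $R_i$. The handler at \lfref{fig:alg-ab}{commitoi}--\lfref{fig:alg-ab}{commitoi1} then writes $oi_i$ into $OIM(d).committed\_ois[i]$, yielding at least $\n-\f$ finite entries.

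At that point the loop at \lfref{fig:alg-ab}{calculate} enters the true branch, sets $OIM(d).AOI$ to the $(\f{+}1)$-th smallest value of $committed\_ois$, and inserts $d$ into $\twat{}$ (\lfref{fig:alg-ab}{calculate1}), so $T$ has obtained an assigned ordering indicator as required. The main obstacle is verifying that the number of finite entries is \emph{monotone} and eventually reaches the threshold despite being accumulated across arbitrarily many commits: each $committed\_ois[i]$ only transitions from $\infty$ to a finite value and is never reset, and the $AOI$-check is re-run whenever a new $A_r$ arrives at the Fairness Layer, so the firing condition cannot be missed once it is met.
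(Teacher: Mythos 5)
Your proof is correct and follows essentially the same route as the paper's: Assumption about all correct replicas receiving $T$, each proposing a vertex containing $d$ with an ordering indicator, the DAG layer's Validity ensuring these vertices are committed, and hence at least $\n-\f$ entries in $committed\_ois$ triggering the $AOI$ computation. Your version is simply more detailed, spelling out the monotonicity of the $committed\_ois$ entries and the re-checking on each new $A_r$, which the paper leaves implicit.
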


From Assumption~\ref{asmpt:allcorrect} we know that all correct replicas will receive $T$ will propose a DAG vertex containing $d$ and a corresponding ordering indicator. From the \textbf{Validity} of DAG layer, we know that all these DAG vertices will be committed. Thus, $d$ will get at least $\n-\f$ $committed\_ois$ from correct replicas and then get an \emph{assigned ordering indicator}.

\begin{lemma}\label{lemma:nohigherlpaoi}
    For transactions $T_1$ and $T_2$ with digests $d_1$ and $d_2$, if only $T_1$ has an \emph{assigned ordering indicator} and $T_2$ has an $LPAOI$ lower than $d_1.AOI$, i.e., $d_2.AOI = \infty \land d_2.LPAOI < d_1$, then eventually, 
    \begin{itemize}
        \item either $T_2$ gets an \emph{assigned ordering indicator};
        \item or $d_2.LPAOI$ becomes larger than $d_1.AOI$.
    \end{itemize}
\end{lemma}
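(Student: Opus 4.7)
The plan is to split on whether any correct replica has ever received $T_2$ as a client submission. If some correct replica has, then by Assumption~\ref{asmpt:allcorrect} every correct replica eventually receives $T_2$, each then includes $d_2$ with a local ordering indicator in its next DAG vertex, and by the \textbf{Validity} of the DAG layer every such vertex is eventually committed. Consequently $OIM(d_2).committed\_ois[i]$ becomes finite for at least $\n - \f$ indices $i$, so Lines~\ref{alg:calculate}--\ref{alg:calculate1} of Figure~\ref{fig:alg-ab} assign an $AOI$ to $d_2$; the first disjunct holds. This is essentially Lemma~\ref{lemma:goodaoi} reapplied.

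The remaining case is when no correct replica ever received $T_2$. Then the only vertices containing $d_2$ in $R$'s DAG view come from Byzantine replicas, of which there are at most $\f$. Consequently $OIM(d_2).seen\_ois[i] = \infty$ for every correct $R_i$, which yields $lp\_ois[i] = \min(\infty,\; \hts[i]) = \hts[i]$ for each correct $i$. Since there are at least $\n - \f > \f$ correct replicas, the $(\f{+}1)$-th smallest entry of $lp\_ois$ is bounded below by $\min_{R_i \text{ correct}} \hts[i]$, so it suffices to prove that this minimum grows without bound.

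To obtain unbounded growth I would combine three ingredients: (i) under standard liveness, correct replicas continue to receive new client transactions and propose new DAG vertices in successive rounds; (ii) ordering indicators are drawn from each replica's monotonically increasing local timer, so the highest indicator any correct $R_i$ ever records in a vertex grows without bound; (iii) Reliable Broadcast delivers each such vertex to $R$, monotonically raising $\hts[i]$ via Lines~\ref{alg:receive}--\ref{alg:receive1}. Together these force $\min_{R_i \text{ correct}} \hts[i]$ past the fixed value $d_1.AOI$, so $d_2.LPAOI$ eventually exceeds $d_1.AOI$, yielding the second disjunct.

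The main obstacle is pinning down the liveness hypothesis needed in the second case; once it is in place, the rest is bookkeeping on the monotone quantities $\hts[i]$, $seen\_ois[i]$, and $committed\_ois[i]$, together with the observation that at most $\f$ of the $\n$ entries of $lp\_ois$ can be controlled by Byzantine replicas.
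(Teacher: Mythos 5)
Your proposal is correct and follows essentially the same route as the paper: case one invokes Lemma~\ref{lemma:goodaoi} (via Assumption on correct receipt and DAG \textbf{Validity}) to get an $AOI$, and case two lower-bounds the $(\f{+}1)$-th smallest entry of $lp\_ois$ by a correct replica's $\hts{}$ value and argues that these values grow past $d_1.AOI$ as the DAG grows. If anything, you are slightly more careful than the paper in flagging the implicit liveness assumption that correct replicas keep proposing vertices with increasing ordering indicators, which the paper simply asserts via \textbf{Validity}.
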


\begin{proof}
    If $T_2$ is received by correct replicas, from Lemma~\ref{lemma:goodaoi}, we know that eventually $T_2$ gets an \emph{assigned ordering indicator}.

    Next, we discuss the case that $T_2$ is received by only faulty replicas. Since $d_2.LPAOI$ is the $(\f+1)$-th lowest value of $lp\_ois$, $\n-\f$ values of which are the $\hts{}$ values from correct replicas, thus, $d_2.LPAOI$ is not greater than  $\hts{}$ value from at least one correct replica. Due to the \textbf{Valifity} property of the DAG layer, DAG keeps growing and the $\hts{}$ value from each correct replica will eventually be higher than $d_1.AOI$.
\end{proof}

\begin{theorem}
    (\FairDAG{} Liveness) If a transaction $T$ with digest $d$ is received by correct replicas, then $T$ will eventually be ordered.
\end{theorem}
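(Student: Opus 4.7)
The plan is to combine Lemma~\ref{lemma:goodaoi} with Lemma~\ref{lemma:nohigherlpaoi} to show that $T$'s eventual assigned ordering indicator $d.AOI$ is eventually strictly dominated by $LPAOI_{\min}$, at which point the fairness layer orders $T$ via Lines~\ref{line:sort}--\ref{line:execute} of Figure~\ref{fig:alg-ab}. First I would apply Lemma~\ref{lemma:goodaoi} to $T$, which is received by all correct replicas by Assumption~\ref{asmpt:allcorrect}, to obtain a finite $d.AOI = x$.

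Next I would show that eventually every transaction $T'$ still without an AOI satisfies $d'.LPAOI > x$, forcing $LPAOI_{\min} > x$. For each such $T'$, Lemma~\ref{lemma:nohigherlpaoi} already supplies the pairwise dichotomy: if $T'$ is received by any correct replica, then by Assumption~\ref{asmpt:allcorrect} and Lemma~\ref{lemma:goodaoi} it eventually obtains an AOI and leaves the blocking set; otherwise $T'$ lives only in vertices from Byzantine replicas, so $seen\_ois[i] = \infty$ for every correct replica $R_i$ and hence $lp\_ois[i] = \hts[i]$. Since at most $\f$ of the $\n$ entries of $lp\_ois$ can fall below any fixed threshold not exceeded by the correct $\hts[i]$'s, the $(\f{+}1)$-th smallest is bounded below by the minimum of the $\hts[i]$ for correct $R_i$, which grows without bound by the \textbf{Validity} property of the DAG layer.

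I expect the main obstacle to be the fact that the blocking set is not a priori bounded: Byzantine replicas can continually inject fresh digests into the DAG, each potentially appearing with a small LPAOI. I plan to discharge this by fixing a round $r^*$ after which, at every correct replica, every entry $\hts[i]$ for a correct $R_i$ strictly exceeds $x$; such an $r^*$ exists because correct replicas' local timers strictly increase and DAG-layer \textbf{Validity} keeps their new vertices committed. From $r^*$ onwards, any newly introduced digest has $lp\_ois$ with at least $\n{-}\f$ entries above $x$, so its $(\f{+}1)$-th smallest already exceeds $x$ and it cannot enlarge the blocking set further. Applying Lemma~\ref{lemma:nohigherlpaoi} to the \emph{finite} blocking set present at round $r^*$ then drives $LPAOI_{\min} > x$ within finitely many further rounds, so $T$ is eventually ordered.
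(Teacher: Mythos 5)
Your proposal follows the same route as the paper's own proof: Lemma~\ref{lemma:goodaoi} gives $T$ a finite $d.AOI$, and Lemma~\ref{lemma:nohigherlpaoi} ensures that eventually no transaction without an AOI retains an $LPAOI$ below it, so $d.AOI < LPAOI_{\min}$ holds and $T$ is ordered at Lines~\ref{line:sort}--\ref{line:execute} of Figure~\ref{fig:alg-ab}. Your extra step showing that the blocking set stabilizes --- fixing a round $r^*$ after which every correct replica's $\hts[i]$ entry exceeds $d.AOI$, so that freshly injected Byzantine-only digests arrive with $LPAOI$ already above the threshold and only the finite set present at $r^*$ needs Lemma~\ref{lemma:nohigherlpaoi} --- tightens a point the paper's two-line argument leaves implicit, but it is a refinement of the same decomposition rather than a different approach.
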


\begin{proof}
    From Lemma~\ref{lemma:goodaoi} we know that $T$ will eventually get an $d.AOI$. From Lemma~\ref{lemma:nohigherlpaoi} we know that eventually there will be no other transaction that has an $LPAOI$  lower than $d.AOI$. Thus, eventually it will be satisfied that $d_1.AOI < LPAOI_{min}$, and then $T$ will be ordered.
\end{proof}

\begin{lemma}\label{lemma:tournament}
    In \FairDAGRL{}, each dependency graph $G$ will eventually become a tournament.
\end{lemma}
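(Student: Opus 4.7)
The plan is to fix an arbitrary dependency graph $G$ and an arbitrary pair of nodes $node(d_1), node(d_2) \in G$, and argue that the algorithm eventually adds an edge between them. Since $G$ is built from a finite set of nodes added upon processing the causal histories of committed leaders, once every pair has an edge we are done by definition of a tournament.

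First I would invoke Assumption~\ref{asmpt:allcorrect}: because $d_1$ and $d_2$ are nodes in some dependency graph, the corresponding transactions $T_1$ and $T_2$ have been received by at least one correct replica (the classification into \emph{solid} or \emph{shaded} requires at least $\frac{\n-\f}{2}$ committed ordering indicators, which must come from at least one correct replica). Hence every correct replica eventually receives both $T_1$ and $T_2$, and by the construction of \FairDAGRL{} vertices (Section~\ref{ss:rlvertex}), each correct replica eventually broadcasts DAG vertices containing $d_1$ and $d_2$ with strictly increasing counter-based ordering indicators. By the \textbf{Validity} property of the DAG layer, all such vertices are eventually committed inside some subdag, so for every correct replica $R_i$ the committed values $node(d_1).committed\_ois[i]$ and $node(d_2).committed\_ois[i]$ eventually become finite and are strictly comparable.

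Next I would apply a pigeonhole argument. Each correct replica contributes to exactly one of the two counters $G.weight[(d_1,d_2)]$ or $G.weight[(d_2,d_1)]$, depending on which of the two transactions it received first (the counters are incremented in Lines~\ref{alg:incweight}--\ref{alg:incweight1} of Figure~\ref{fig:construct} when the corresponding committed ordering indicators are processed). Since there are at least $\n-\f$ correct replicas, the sum of the two counters eventually reaches at least $\n-\f$, so by pigeonhole at least one of them reaches $\bigl\lceil \tfrac{\n-\f}{2} \bigr\rceil \ge \tfrac{\n-\f}{2}$. At the moment this threshold is first met, the pair is inserted into $addable\_edges$ (Lines~\ref{alg:addableedge}--\ref{alg:addableedge1}), and the subsequent pass through Lines~\ref{alg:addedge}--\ref{alg:addedge1} adds a directed edge between $node(d_1)$ and $node(d_2)$ in $G$. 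Applying this to every pair of nodes in $G$ shows that $G$ eventually becomes a tournament.

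The main obstacle I expect is the bookkeeping issue that weight counters live on the graph $G'$ pointed to by $node(d).G$ at the time of processing, and shaded nodes may be re-added to a later graph during ordering finalization. I would address this by noting that once both nodes $d_1, d_2$ are in the same graph $G$, the above eventual threshold argument applies verbatim, and by the re-adding rule in the finalization procedure (Section~6 in the pseudocode and the readding step for shaded nodes), any shaded node not yet connected is migrated into the next graph where new weight tallies are accumulated from the subsequent committed subdags, to which the same pigeonhole argument applies. A minor additional care is needed when $\n-\f$ is odd, where the threshold $\frac{\n-\f}{2}$ is fractional; here the ceiling absorbs the rounding so the inequality $\lceil(\n-\f)/2\rceil \ge (\n-\f)/2$ still closes the argument.
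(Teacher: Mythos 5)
Your proposal is correct and follows essentially the same route as the paper's proof: both reduce to showing that every node of $G$ is eventually contained in at least $\n-\f$ committed local orderings (via the solid/shaded admission threshold, the assumption that a transaction seen by one correct replica is seen by all, and the DAG layer's Validity property), and then conclude by pigeonhole that for each pair one of the two weight counters reaches $\frac{\n-\f}{2}$, forcing an edge. Your extra remarks on re-added shaded nodes and on the ceiling when $\n-\f$ is odd are harmless refinements of the same argument.
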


\begin{proof}
    Since only \emph{solid} and \emph{shaded} nodes can be added into a dependency graph, for each $node(d)$ in $G$, there are at least $\frac{\n-\f}{2} > \f+1$ local orderings that contain $d$. Thus, $d$ will be received by all correct replicas. And then eventually, due to the \textbf{Validity} property of the DAG layer, there will be a round $r$ such that $ap(d,r) \ge \n-\f$.

    Thus, for each pair of nodes $node(d_1)$ and $node_2$ in $G$, there will be a round $r$ such that $ap(d_1,r) \ge \n-\f$ and $ap(d_2,r) \ge \n-\f$. Thus, at least one of $G.weights[(d_1, d_2)]$ and $G.weights[(d_2, d_1)]$ will reach the threshold $\frac{\n-\f}{2}$. Then, eventually, there will be an edge between each pair of nodes in $G$, i.e., $G$ will be a tournament.
\end{proof}

\begin{theorem}
    (\FairDAGRL{} Liveness) If a transaction $T$ with digest $d$ is received by correct replicas, then $T$ will eventually be ordered.
\end{theorem}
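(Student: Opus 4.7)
The plan is to show that $T$ is eventually ordered by composing three ingredients: the Validity property of the DAG layer, Lemma~\ref{lemma:tournament}, and the structure of the tournament-finalization procedure. First, I would invoke Assumption~\ref{asmpt:allcorrect} so that every correct replica eventually receives $T$ and hence broadcasts a DAG vertex containing $d$ together with its own local ordering indicator. By Validity, each such vertex is eventually committed inside the causal history of some leader vertex, so $d$ accrues at least $\n-\f$ entries in $committed\_ois$. Consequently there exists a round $r^\star$ for which $ap(d, r^\star) \ge \n-\f$.

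Next, I would track how $node(d)$ migrates through the dependency graphs. If $d$ first satisfies the solid threshold before being added anywhere, it is classified as solid and inserted into the current graph $G_r$. Otherwise $d$ is first classified as shaded in some $G_r$; by Lemma~\ref{lemma:tournament}, $G_r$ eventually becomes a tournament. When it does, either $node(d)$ lies in an SCC $S_j$ with $j \le last$ and is ordered immediately, or it lies past $S_{last}$ and is readded to a later graph $G_{r'}$ with its type recomputed from the current $ap(d, r')$. Since $ap(d, \cdot)$ is monotone nondecreasing and ultimately reaches $\n-\f$, after finitely many readdings $node(d)$ is classified as solid. Thus some graph $G^\star$ contains $node(d)$ as a solid node.

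Finally, by Lemma~\ref{lemma:tournament}, $G^\star$ eventually becomes a tournament, at which point the finalization procedure condenses $G^\star$ into topologically sorted SCCs $S_1, \dots, S_s$, determines $S_{last}$ (the last SCC containing any solid node), and appends Hamiltonian paths of $S_1, \dots, S_{last}$ to the final ordering. Since $node(d)$ is solid, it lies in some $S_j$ with $j \le last$, and so $T$ is ordered.

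The hard part will be the second step: I must rule out an infinite loop in which $node(d)$ is perpetually relegated past $S_{last}$ and readded as a shaded node. The resolution is that each readding reclassifies the node against the up-to-date $ap(d, r')$, which grows unboundedly by Validity; hence eventually every readding yields the solid type, which by definition cannot lie past $S_{last}$ in the tournament that subsequently finalizes it.
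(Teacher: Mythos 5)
Your proposal is correct and follows essentially the same route as the paper's proof: invoke Validity to get $ap(d,r)\ge \n-\f$ eventually, use Lemma~\ref{lemma:tournament} so the containing graph becomes a tournament, and argue that even a \emph{shaded} node that is relegated and readded must eventually be reinserted as \emph{solid} and hence ordered. Your version merely spells out more explicitly the monotonicity of $ap(d,\cdot)$ that the paper leaves implicit when ruling out an infinite readding loop.
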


\begin{proof}
    Due to the \textbf{Validity} property of the DAG layer, there will eventually be a round $r$ such that $ap(d,r) \ge \n-\f$, and then $node(d)$ will be added into a dependency graph $G$.

    From Lemma~\ref{lemma:tournament} we know that $G$ will eventually be a tournament. If $node(d)$ is \emph{solid}, the $T$ will be ordered. If $node(d)$ is \emph{shaded}, the $T$ might be ordered. Even if \emph{shaded} $node(d)$ has to be added into the next dependency graph, eventually $node(d)$ will be added as a \emph{solid} node and $T$ will be ordered.
\end{proof}
    
\subsection{FairDAB-AB: Ordering Linearizability}\label{ssec:fairness}

To prove ordering linearizability, we introduce the following denotations:

\begin{itemize}
    \item $d.\_ois^C$: the ordering indicators of $d$ from correct replicas.
    \item $d.low\_oi^C$: the lowest value in $d.\_ois^C$.
    \item $d.high\_oi^C$: the highest value in $d.\_ois^C$.
\end{itemize}

\begin{lemma}\label{lemma:aoirange}
    For each transaction $T$ with digest $d$, $d.low\_oi^C \le d.AOI \le d.high\_oi^C$.
\end{lemma}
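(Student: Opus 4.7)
The plan is to prove both inequalities by analyzing the $(\f{+}1)$-th smallest statistic of the $committed\_ois$ vector, which is how $d.AOI$ is defined when at least $\n{-}\f$ values are committed. Throughout, I would use the quorum intersection argument that among the $\n{-}\f$ committed ordering indicators used to compute $d.AOI$, at most $\f$ originate from Byzantine replicas, hence at least $\n{-}2\f$ come from correct replicas. Since \FairDAG{} requires $\n > 3\f$, this correct subset has size at least $\f{+}1$, a fact that both directions of the bound rely on.

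For the lower bound $d.AOI \ge d.low\_oi^C$, I would look at the $\f{+}1$ smallest entries of $committed\_ois$. Because the adversary controls at most $\f$ replicas, at least one of these $\f{+}1$ entries must have been contributed by a correct replica, and hence is an element of $d.\_ois^C$. That element is by definition $\ge d.low\_oi^C$, and it is also $\le v_{\f+1} = d.AOI$ since it sits among the $\f{+}1$ smallest. Chaining these two inequalities yields the lower bound.

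For the upper bound $d.AOI \le d.high\_oi^C$, the argument is symmetric but uses the requirement $\n > 3\f$ more essentially. I would select the $(\f{+}1)$-th smallest value among the correct entries of $committed\_ois$; this is well-defined because there are at least $\n{-}2\f \ge \f{+}1$ such entries. Call this value $c_{high}$; since $c_{high} \in d.\_ois^C$, we have $c_{high} \le d.high\_oi^C$. Moreover, there are $\f{+}1$ correct entries of $committed\_ois$ that are $\le c_{high}$, and these are also entries of $committed\_ois$, so $v_{\f+1} \le c_{high}$. Combining, $d.AOI \le d.high\_oi^C$.

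The main obstacle, which is really a bookkeeping subtlety rather than a deep difficulty, is being careful about what $d.\_ois^C$ denotes: it is the set of ordering indicators correct replicas assign to $T$ over the entire execution, while $committed\_ois$ at the moment of $AOI$ computation contains only a (possibly strict) subset of these together with Byzantine contributions. I need to make explicit that any correct entry in $committed\_ois$ is automatically an element of $d.\_ois^C$, and therefore bounded below by $d.low\_oi^C$ and above by $d.high\_oi^C$; once this is stated, both pigeonhole arguments slot in cleanly and the bound $\n > 3\f$ is used exactly once, to guarantee that $\n{-}2\f \ge \f{+}1$ so the $(\f{+}1)$-th smallest correct value exists.
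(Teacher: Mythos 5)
Your proof is correct and follows essentially the same quorum-plus-order-statistics argument as the paper: the lower bound is identical (one of the $\f{+}1$ smallest committed values must be correct), and your upper bound, which picks the $(\f{+}1)$-th smallest \emph{correct} committed value, is just a minor rephrasing of the paper's observation that one of the $\f{+}1$ largest committed values is correct — both hinge on $\n-\f \ge 2\f+1$. Your explicit remark that every correct entry of $committed\_ois$ lies in $d.\_ois^C$ is a sound clarification the paper leaves implicit.
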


\begin{proof}
    As the $d.AOI$ is the $(\f+1)$-th lowest value of a subset of $d.committed\_ois$ with at least $\n-\f \ge 2\f+1$ values. Among the $\f+1$ lowest values of the subset, at least one is in $d.\_ois^C$. Thus, $d.low\_oi^C \le d.AOI$. Similarly, among the $\f+1$ highest values of the subset, at least one is from a correct replica. Thus, $d.AOI \le d.high\_oi^C$. 
\end{proof}

\begin{theorem}
    (Ordering Linearizability) For two transactions $T_1$ and $T_2$ with digests $d_1$ and $d_2$, if $d_1.high\_oi^C < d_2.low\_oi^C$, then $T_1$ will be ordered before $T_2$ in the final ordering.
\end{theorem}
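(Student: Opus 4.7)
The plan is to chain together three earlier results: Lemma~\ref{lemma:aoirange} to sandwich each $AOI$ between the extremes of the correct-replica ordering indicators, the liveness guarantee (Lemma~\ref{lemma:goodaoi}) to ensure both transactions actually receive an $AOI$, and Lemma~\ref{lemma:followingaoi} to translate the ordering of $AOI$ values into ordering in the final sequence.

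First, I would observe that the hypothesis $d_1.high\_oi^C < d_2.low\_oi^C$ implicitly asserts that $d_1.\_ois^C$ and $d_2.\_ois^C$ are both nonempty, that is, at least one correct replica has received each of $T_1$ and $T_2$. By Assumption~\ref{asmpt:allcorrect}, this propagates so that every correct replica eventually receives $T_1$ and $T_2$. Then Lemma~\ref{lemma:goodaoi} guarantees that both digests eventually acquire an assigned ordering indicator, so $d_1.AOI$ and $d_2.AOI$ are both well defined.

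Next, I would apply Lemma~\ref{lemma:aoirange} twice to obtain the sandwich bounds $d_1.AOI \le d_1.high\_oi^C$ and $d_2.low\_oi^C \le d_2.AOI$. Chaining these with the hypothesis gives
\[
d_1.AOI \;\le\; d_1.high\_oi^C \;<\; d_2.low\_oi^C \;\le\; d_2.AOI,
\]
hence $d_1.AOI < d_2.AOI$. Finally, Lemma~\ref{lemma:followingaoi} directly yields that $T_1$ is ordered before $T_2$ in the final ordering, which is what we want.

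I do not expect any real obstacle here: all the heavy lifting has already been done in the earlier lemmas. The only care needed is the bookkeeping remark above about why both $AOI$s exist in the first place — strictly speaking, Lemma~\ref{lemma:followingaoi} only speaks about two transactions whose $AOI$s have already been assigned, so invoking liveness to justify existence of $d_1.AOI$ and $d_2.AOI$ before applying the comparison is the one step that must not be skipped.
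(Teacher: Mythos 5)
Your proof is correct and follows essentially the same route as the paper: Lemma~\ref{lemma:aoirange} sandwiches each $AOI$ between the correct replicas' extremes, the chain of inequalities gives $d_1.AOI < d_2.AOI$, and Lemma~\ref{lemma:followingaoi} converts that into the final ordering. Your additional remark justifying via Assumption and Lemma~\ref{lemma:goodaoi} that both $AOI$s actually exist is a small bookkeeping step the paper omits, and it tightens the argument without changing its substance.
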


\begin{proof}
    From Lemma~\ref{lemma:aoirange} we know that $d_1.AOI \le d_1.high\_oi^C$ and  $d_2.low\_oi^C \le d_2.AOI$. Thus, $d_1.AOI < d_2.AOI$. From Lemma~\ref{lemma:followingaoi} we know that transactions are ordered based on $AOI$ values. Thus, $T_1$ will be ordered before $T_2$ in the final ordering.
\end{proof}

\subsection{FairDAG-RL: $\gamma$-Batch-Order-Fairness}\label{ssec:batchorderfairness}

\begin{lemma}\label{lemma:edgedirection}
    For any two transactions $T_1$ and $T_2$ with digests $d_1$ and $d_2$, if $\gamma(\n-\f)$ correct replicas receive $T_1$ before $T_2$, then $\frac{\n-\f}{2} > weight[(d_2, d_1)]_{max}$.
\end{lemma}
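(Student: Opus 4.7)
The plan is to bound $weight[(d_2,d_1)]_{\max}$ by enumerating which replicas can ever contribute to it and then using the resilience condition $\n > \frac{\f(2\gamma+1)}{2\gamma-1}$ that \FairDAGRL{} assumes (from Section~\ref{ss:adversarial}). Recall that $weight[(d_2,d_1)]$ counts, across committed local orderings, the number of replicas whose committed ordering indicator for $d_2$ is strictly smaller than their committed ordering indicator for $d_1$. Each replica can contribute at most once, so I partition the replicas into Byzantine and correct, and bound each contribution separately.

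First, I would argue that Byzantine replicas contribute at most $\f$ to $weight[(d_2,d_1)]$, since there are at most $\f$ of them and each contributes to this weight at most once across all committed vertices, regardless of how they manipulate their local orderings. Second, I would argue that among the $\n{-}\f$ correct replicas, the $\gamma(\n{-}\f)$ who received $T_1$ before $T_2$ cannot contribute to $weight[(d_2,d_1)]$: by the honesty assumption of correct replicas and the monotonically increasing ordering indicator in their vertices (Figure~\ref{fig:alg-ab}, Lines~\ref{alg:rl-counter}--\ref{alg:rl-counter1}), any such correct replica assigns a strictly smaller committed ordering indicator to $d_1$ than to $d_2$, so it contributes to $weight[(d_1,d_2)]$ instead. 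Hence at most $(1{-}\gamma)(\n{-}\f)$ correct replicas can contribute to $weight[(d_2,d_1)]$.

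Combining these two bounds yields
\[
weight[(d_2,d_1)]_{\max} \;\le\; \f + (1{-}\gamma)(\n{-}\f).
\]
To finish, I would show this upper bound is strictly less than $\frac{\n{-}\f}{2}$ using the protocol's resilience condition. Starting from $\f + (1{-}\gamma)(\n{-}\f) < \frac{\n{-}\f}{2}$, I would multiply through by $2$ and simplify to obtain the equivalent inequality $\f(2\gamma+1) < \n(2\gamma-1)$, which (since $\gamma > \tfrac{1}{2}$) is exactly $\n > \frac{\f(2\gamma+1)}{2\gamma-1}$, i.e., the \FairDAGRL{} requirement.

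The main obstacle is not the algebra but rather the careful bookkeeping in the second step: I must argue that a correct replica which received $T_1$ before $T_2$ necessarily has a strictly smaller \emph{committed} ordering indicator for $d_1$ than for $d_2$, even though the two indicators may be placed in different DAG vertices of that replica across several rounds. This follows because each correct replica's ordering indicators are derived from a monotonically increasing local counter that is incremented upon receipt, so the relative order of committed indicators for any two of its transactions matches the receipt order, independent of when the enclosing vertices are committed.
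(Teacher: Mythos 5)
Your proposal is correct and follows essentially the same route as the paper's proof: bound $weight[(d_2,d_1)]_{\max}$ by $\f + (1{-}\gamma)(\n{-}\f)$ (at most $\f$ Byzantine contributors plus at most $(1{-}\gamma)(\n{-}\f)$ correct ones), then verify via the resilience condition $\n > \frac{(2\gamma+1)\f}{2\gamma-1}$ that this bound is below $\frac{\n-\f}{2}$. Your write-up is in fact more explicit than the paper's (which states the bound without justifying the counting and even carries a sign typo, writing $\ge$ where $\le$ is meant); the only nit is that the counter-increment lines you cite live in the \FairDAGRL{} dissemination figure rather than in Figure~\ref{fig:alg-ab}.
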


\begin{proof}
    As $\gamma(\n-\f)$ correct replicas receive $T_1$ before $T_2$, then $weight[(d_2, d_1)]_{max} \ge \f+(1-\gamma)(\n-\f)$. As $\n > \frac{(2\gamma+1)\f}{(2\gamma-1)}, \frac{1}{2} <\gamma \le 1$, we have:

    \begin{flushleft}
        $\n > \frac{(2\gamma+1)\f}{(2\gamma-1)} \iff (2\gamma-1) (\n-\f) > 2\f \iff (2\gamma-2)(\n-\f) + \n-\f > 2\f \iff \n-\f > 2\f+(2-2\gamma)(\n-\f) \iff \frac{\n-\f}{2} > \f+(1-\gamma)(\n-\f) \iff \frac{\n-\f}{2} > weight[(d_2, d_1)]_{max}$
    \end{flushleft}
\end{proof}

Combing Lemma~\ref{lemma:edgedirection} with the definition of \emph{ordering dependency}, we have: 

\begin{lemma}\label{lemma:dependent}
    For any two transactions $T_1$ and $T_2$ with digests $d_1$ and $d_2$, if $\gamma(\n-\f)$ correct replicas receive $T_1$ before $T_2$, then $T_2$ is dependent on $T_1$, i.e., $T_1 \rightarrow T_2$.
\end{lemma}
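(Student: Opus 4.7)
The plan is to derive this lemma as a direct corollary of Lemma~\ref{lemma:edgedirection} together with the first clause in the definition of \emph{ordering dependency} given earlier in Section~\ref{sec:design2}. First I would restate the hypothesis: $\gamma(\n{-}\f)$ correct replicas receive $T_1$ before $T_2$. Applying Lemma~\ref{lemma:edgedirection} to this hypothesis immediately yields the quantitative bound $weight[(d_2, d_1)]_{max} < \frac{\n-\f}{2}$, which means that no matter which committed subset of local orderings we consider, the number of them placing $d_2$ before $d_1$ cannot meet the edge-direction threshold.

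Next I would recall the three-clause definition of ordering dependency: $T_2$ is dependent on $T_1$ whenever any one of (i) $weight[(d_2,d_1)]_{max} < \frac{\n-\f}{2}$, (ii) the edge $e(d_1,d_2)$ has been added, or (iii) $weight[(d_1,d_2)]_{max} \ge \frac{\n-\f}{2}$ and $node(d_1)$ lies in an earlier dependency graph, is satisfied. Clause (i) is exactly the conclusion of Lemma~\ref{lemma:edgedirection}, so the hypothesis of the present lemma implies clause (i) directly, and we conclude $T_1 \rightarrow T_2$.

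The main obstacle, if any, is purely terminological: one should confirm that the quantity $weight[(d_2,d_1)]_{max}$ in the dependency definition coincides with the one bounded in Lemma~\ref{lemma:edgedirection}, namely the maximal number of committed local orderings in which $d_2$ precedes $d_1$ over any admissible execution. Once this identification is made, no further case analysis on graph placement is needed, because clause (i) fires regardless of whether $node(d_1)$ and $node(d_2)$ eventually land in the same dependency graph, in different graphs, or whether an actual edge has yet been installed between them. I expect the entire argument to be a two-line invocation, with all the technical weight already absorbed into Lemma~\ref{lemma:edgedirection}.
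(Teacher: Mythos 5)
Your proposal is correct and matches the paper exactly: the paper derives this lemma by "combining Lemma~\ref{lemma:edgedirection} with the definition of \emph{ordering dependency}," which is precisely your invocation of the first clause ($weight[(d_2,d_1)]_{max} < \frac{\n-\f}{2}$) of that definition. No further argument is needed.
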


\begin{lemma}\label{lemma:earlierscc}
    For any two transactions $T_1$ and $T_2$ with digests $d_1$ and $d_2$ in a tournament dependency graph $G$, if $\gamma(\n-\f)$ correct replicas receive $T_1$ before $T_2$, then after condensing $G$ and topologically sorting the \emph{SCCs}, $node(d_1)$ is in the same or an earlier \emph{SCC} than $node(d_2)$.
\end{lemma}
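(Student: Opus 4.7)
The plan is to reduce the statement to a straightforward consequence of Lemma~\ref{lemma:edgedirection} combined with the tournament structure and the standard property of condensation of directed graphs. The key observation is that the hypothesis—$\gamma(\n-\f)$ correct replicas receive $T_1$ before $T_2$—forces the edge between $node(d_1)$ and $node(d_2)$ in $G$ to point from $d_1$ to $d_2$, and this directionality then propagates to the topological order of the condensation.

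First, I would apply Lemma~\ref{lemma:edgedirection} to obtain $G.weight[(d_2,d_1)]_{max} < \frac{\n-\f}{2}$. Inspecting the construction rule in Figure~\ref{fig:construct} (Lines~\ref{alg:addableedge}--\ref{alg:addedge1}), an edge is added between a pair of nodes only when one of the two directed weights reaches the threshold $\frac{\n-\f}{2}$. Consequently, the reverse edge $e(d_2,d_1)$ can never be inserted into $G$. Since $G$ is assumed to be a tournament, exactly one directed edge between $node(d_1)$ and $node(d_2)$ must be present, and so it has to be $e(d_1,d_2)$.

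Next, I would analyze the condensation $G^c$ obtained via Tarjan's SCC algorithm. There are two cases. If $node(d_1)$ and $node(d_2)$ lie in the same SCC, the lemma holds trivially. Otherwise, the edge $e(d_1,d_2)$ in $G$ induces a directed edge from the SCC containing $node(d_1)$, call it $S_A$, to the SCC containing $node(d_2)$, call it $S_B$, in $G^c$. Because $G^c$ is a DAG, any topological sort must place $S_A$ strictly before $S_B$; together with the uniqueness of the topological order noted in the ordering finalization discussion, this means $node(d_1)$'s SCC precedes $node(d_2)$'s SCC.

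The main obstacle is really a bookkeeping one rather than a conceptual one: I have to be careful that the edge-insertion rule indeed forbids the reverse edge for \emph{all} time during the evolution of $G$ (so that no transient asymmetry could place the edge in the wrong direction before $G$ becomes a tournament), and that the SCC-to-topological-order step is invoked only after $G$ has become a tournament, which is exactly when $G^c$ is computed in \MName{OrderFinalization}. Beyond these points, the argument is a direct chain from Lemma~\ref{lemma:edgedirection} through the tournament hypothesis to the standard DAG-topological-sort fact.
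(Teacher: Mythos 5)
Your proposal is correct and follows essentially the same route as the paper's proof: both hinge on Lemma~\ref{lemma:edgedirection} ruling out the edge $e(d_2,d_1)$, the tournament hypothesis forcing $e(d_1,d_2)$ to be present, and the standard fact that the condensation's topological order must respect that edge (the paper merely phrases this as a contradiction, assuming $node(d_2)$'s SCC comes first). Your extra care about the edge-insertion rule never transiently producing the reverse edge is a reasonable bookkeeping check that the paper leaves implicit, but it does not change the argument.
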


\begin{proof}
    According to graph theory, we know that a tournament dependency graph $G$ can be condensed into a graph with multiple \emph{SCCs}, and after topologically sorting the \emph{SCCs}, there is a unique list $S_1, S_2, ..., S_s$. It holds that for any two \emph{SCCs} $S_{a}$ and $S_{b}$, if $a < b$, then $\forall node(d_a) \in S_{a}, \forall node(d_b) \in S_{b}$, edge $e(d_a, d_b)$ exists in $G$. 

    We prove the lemma by contradiction. Assuming that $node(d_2)$ is in an earlier \emph{SCC} than $node(d_1)$, then $e(d_2, d_1)$ exists in $G$. However, it contradicts Lemma~\ref{lemma:edgedirection}, from which we know $\frac{\n-\f}{2} > weight[(d_2, d_1)]_{max}$, i.e., $e(d_2, d_1)$ cannot exist.
\end{proof}

\begin{lemma}\label{lemma:earliergraph}
    $\forall node(d_1) \in G_{r_1}, \forall node(d_1) \in G_{r_2}, r_1 < r_2$, if $node(d_1)$ is \emph{solid}, then transaction $T_2$ is dependent on $T_1$.
\end{lemma}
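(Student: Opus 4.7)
The plan is to show that the third clause in the definition of ordering dependency applies: since $node(d_1) \in G_{r_1}$ precedes $node(d_2) \in G_{r_2}$, it suffices to establish $weight[(d_1, d_2)]_{max} \ge \frac{\n - \f}{2}$. I will obtain this bound by counting replicas whose committed ordering indicators for $d_1$ appear by round $r_1$ but whose committed ordering indicators for $d_2$ do not.

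First, I would combine the two hypotheses on round numbers. Because $node(d_1)$ is \emph{solid}, by the classification rule used when it was inserted into $G_{r_1}$, we have $ap(d_1, r_1) \ge \n - \f$. Because $node(d_2)$ is only first inserted into $G_{r_2}$ with $r_2 > r_1$, it must not have been non-blank at any earlier round; in particular $ap(d_2, r_1) < \frac{\n-\f}{2}$. Subtracting, at least $\n - \f - \frac{\n - \f}{2} = \frac{\n - \f}{2}$ replicas $R_i$ satisfy $node(d_1).committed\_rounds[i] \le r_1$ but $node(d_2).committed\_rounds[i] > r_1$.

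Next, and this is the step where the fairness-specific DAG construction rule introduced in Section~\ref{ss:abvertex} does the real work, I would argue that any such replica $R_i$ must order $d_1$ before $d_2$ in its local ordering. Suppose $R_i$ placed $d_2$'s ordering indicator in a vertex $v_{i,r'}$ that precedes or coincides with the vertex $v_{i,r''}$ holding $d_1$'s ordering indicator (i.e., $r' \le r''$). Because each $v_{i,r''}$ carries a strong edge to $v_{i,r''-1}$, the causal history of $v_{i,r''}$ contains every earlier vertex of $R_i$, so once $v_{i,r''}$ is committed by round $r_1$ the vertex $v_{i,r'}$ is necessarily committed by round $r_1$ as well, forcing $node(d_2).committed\_rounds[i] \le r_1$, contradicting our choice of $R_i$. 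Hence $R_i$ ordered $d_1$ strictly before $d_2$, and in particular $R_i$'s future committed ordering indicator for $d_2$ (if and when it arrives) will exceed the one already committed for $d_1$.

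Combining these $\frac{\n-\f}{2}$ replicas, once their committed indicators for $d_2$ arrive, the counter $weight[(d_1, d_2)]$ is incremented for each of them, so $weight[(d_1, d_2)]_{max} \ge \frac{\n - \f}{2}$. Since $node(d_1)$ lies in the strictly earlier graph $G_{r_1}$, the third bullet of the ordering dependency definition gives $T_1 \to T_2$, i.e., $T_2$ is dependent on $T_1$. The main obstacle is the implicit subtlety in the middle step: it uses the protocol invariant that each vertex $v_{i,r}$ includes a strong edge to $v_{i,r-1}$ (from Section~\ref{ssec:daglayer}), together with the monotonicity of $R_i$'s assigned ordering indicators, to convert a statement about \emph{which committed rounds contain which indicators} into a statement about \emph{the internal order of $R_i$'s local ordering}; everything else is bookkeeping.
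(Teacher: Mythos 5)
Your argument covers only one of the two cases the lemma must handle, and the case you omit is exactly the one where your counting step breaks down. You assert that because $node(d_2)$ is first inserted into $G_{r_2}$ with $r_2>r_1$, ``it must not have been non-blank at any earlier round,'' so $ap(d_2,r_1)<\frac{\n-\f}{2}$. But \FairDAGRL{}'s order-finalization step \emph{readds} nodes: a node that was added to $G_{r_1}$ as \emph{shaded} and landed in an SCC after the last SCC containing a \emph{solid} node is removed from $G_{r_1}$ and reinserted into the next graph. So $node(d_2)\in G_{r_2}$ is consistent with $node(d_2)$ having been non-blank at round $r_1$ with $ap(d_2,r_1)\ge\frac{\n-\f}{2}$, and then your subtraction $\n-\f-ap(d_2,r_1)$ no longer yields $\frac{\n-\f}{2}$ replicas. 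The paper treats this as a separate case and closes it by a different mechanism: since $G_{r_1}$ was a tournament when finalized, its condensation is totally ordered, the solid $node(d_1)$ sits in an SCC no later than $S_{last}$ while the readded $node(d_2)$ sits in a strictly later SCC, hence the edge $e(d_1,d_2)$ already exists in $G_{r_1}$ and the second clause of the ordering-dependency definition gives $T_1\rightarrow T_2$ directly. Without this case your lemma is too weak to support the $\gamma$-Batch-Order-Fairness theorem, which explicitly relies on it for readded shaded nodes.

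For the case you do treat ($node(d_2)$ blank at round $r_1$), your route matches the paper's: count the $\ge\frac{\n-\f}{2}$ replicas with $d_1$ committed but $d_2$ not committed by round $r_1$, conclude $weight[(d_1,d_2)]_{max}\ge\frac{\n-\f}{2}$, and invoke the third clause of the dependency definition. Your intermediate justification---using the rule that each $v_{i,r}$ carries a strong edge to $v_{i,r-1}$ to convert ``committed by round $r_1$ vs.\ not'' into ``precedes in $R_i$'s local ordering''---is a detail the paper leaves implicit, and it is a sound addition. But you still need to add the readded-node case to have a complete proof.
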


\begin{proof}
    If $node(d_2)$ was not added as a node into $G_{r_1}$, then $ap(d_2, r_1) < \frac{\n-\f}{2}$ when $ap(d_1, r_1) \ge \n-\f$. Thus, there are more than $\frac{\n-\f}{2}$ committed local orderings in which $T_1$ is before $T_2$. Then $weight[(d_1, d_2)]_{max} \ge \frac{\n-\f}{2}$, then, combining with the fact that $node(d_1)$ is in an earlier dependency graph, $T_2$ is dependent on $T_1$.

    If $node(d_2)$ was added as a node into $G_{r_1}$ but readded into $G_{r_1}$, then it implies that $node(d_2)$ was in an \emph{SCC} later than the last \emph{SCC} that contains a \emph{solid} node. Thus, solid $node(d_1)$ has an edge to $node(d_2)$, i.e., $e(d_1, d_2)$ exists, and then $T_2$ is dependent on $T_1$.
\end{proof}

\begin{theorem}
    ($\gamma$-Batch-Order-Fairness) For any two transactions $T_1$ and $T_2$ with digests $d_1$ and $d_2$, if $\gamma(\n-\f)$ correct replicas receive $T_1$ before $T_2$, then $T_1$ will be ordered no later than $T_2$.
\end{theorem}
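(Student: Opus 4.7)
The plan is to reduce the statement to what the two graph-level lemmas, Lemma~\ref{lemma:dependent} and Lemma~\ref{lemma:earlierscc}, already provide, and then to argue graph-by-graph that the algorithmic placement of $node(d_1)$ cannot sit strictly after the placement of $node(d_2)$ in the sequence of tournaments. First, I would invoke Lemma~\ref{lemma:dependent} on the hypothesis that $\gamma(\n-\f)$ correct replicas receive $T_1$ before $T_2$ to get $T_1 \rightarrow T_2$, and also record the stronger quantitative fact from Lemma~\ref{lemma:edgedirection} that $weight[(d_2,d_1)]_{\max} < \frac{\n-\f}{2}$. This bound is the main workhorse: it implies that the edge $e(d_2,d_1)$ can never be inserted in any dependency graph, now or in any future round.

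Next, let $G_{r_1^*}$ and $G_{r_2^*}$ denote the tournaments in which $node(d_1)$ and $node(d_2)$ are ultimately placed into a Hamiltonian path (i.e., the graphs whose \emph{OrderFinalization} call assigns them to a batch). I would split into three cases. If $r_1^* = r_2^*$, I apply Lemma~\ref{lemma:earlierscc} directly inside that tournament to conclude that $node(d_1)$ lies in the same SCC as $node(d_2)$ or in an earlier one in the topological order; because each processed SCC contributes a single batch (or block of batches) to the final output in that topological order, $T_1$ appears in the same or an earlier batch. If $r_1^* < r_2^*$, the round-increasing processing of $graphs$ means $G_{r_1^*}$'s Hamiltonian paths are appended before any of $G_{r_2^*}$'s, so $T_1$ precedes $T_2$ in a strictly earlier batch.

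The interesting case is $r_1^* > r_2^*$, where I would derive a contradiction. Since $node(d_2)$ was finalized in $G_{r_2^*}$, it lay in an SCC at or before $S_{last}$ of that tournament. There are two sub-cases: either $node(d_1)$ was absent from $G_{r_2^*}$ entirely, or it had been added (necessarily as a shaded node) to some $G_{r'}$ with $r' \le r_2^*$ and ended up strictly after $S_{last}$, causing it to be re-added into a later graph. In the first sub-case I would apply Lemma~\ref{lemma:earliergraph} with the roles of $d_1$ and $d_2$ swapped: $node(d_2)$ being solid in an earlier graph than $node(d_1)$ would force $T_1$ to be dependent on $T_2$ via the third clause of the ordering-dependency definition, but combined with $weight[(d_2,d_1)]_{\max} < \frac{\n-\f}{2}$ and $T_1 \rightarrow T_2$, this contradicts the direction of the weight bound. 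In the second sub-case, $node(d_1)$ and $node(d_2)$ coexist in $G_{r_2^*}$, so Lemma~\ref{lemma:earlierscc} applies inside $G_{r_2^*}$, forcing $node(d_1)$ into the same or earlier SCC as $node(d_2)$, which contradicts the assumption that $node(d_1)$ sat after $S_{last}$ while $node(d_2)$ sat at or before $S_{last}$.

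The hard part will be tightening the contradiction in the first sub-case of $r_1^* > r_2^*$, namely excluding the scenario where $node(d_1)$ has not yet been added to any graph by the time $G_{r_2^*}$ is finalized. Handling this requires a growth argument for $ap(d_1,r)$: because the DAG layer's validity and in-order causal commitment guarantees that any correct replica in the $\gamma(\n-\f)$ majority commits its indicator for $T_1$ no later than its indicator for $T_2$, the inequality $ap(d_1,r) \ge ap(d_2,r) - (1-\gamma)(\n-\f) - \f$ should hold for every round $r$, and I would combine this with the protocol threshold $n > \frac{(2\gamma+1)\f}{2\gamma-1}$ to show $ap(d_1,r_2^*) \ge \frac{\n-\f}{2}$ whenever $ap(d_2,r_2^*) \ge \frac{\n-\f}{2}$. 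This forces $node(d_1) \in G_{r_2^*}$, collapsing the remaining open sub-case back into Lemma~\ref{lemma:earlierscc} and completing the proof.
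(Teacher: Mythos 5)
Your treatment of the cases $r_1^*=r_2^*$ and $r_1^*<r_2^*$ matches the paper, but the case $r_1^*>r_2^*$ contains a genuine gap: you try to show this case is impossible, whereas it can in fact occur, and the paper resolves it by a different argument. The specific failure is in your ``growth argument.'' The inequality $ap(d_1,r)\ge ap(d_2,r)-(1-\gamma)(\n-\f)-\f$ is correct (each of the $\gamma(\n-\f)$ correct replicas that received $T_1$ first commits its indicator for $d_1$ no later than its indicator for $d_2$, by the strong self-edge rule), but it does not yield $ap(d_1,r_2^*)\ge\frac{\n-\f}{2}$ from $ap(d_2,r_2^*)\ge\frac{\n-\f}{2}$: combined with $\f+(1-\gamma)(\n-\f)<\frac{\n-\f}{2}$ it only gives $ap(d_1,r_2^*)>0$. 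Concretely, when $node(d_2)$ enters $G_{r_2^*}$ as a \emph{shaded} node with $ap(d_2,r_2^*)$ barely above $\frac{\n-\f}{2}$, up to $\f$ of those committed indicators may come from Byzantine replicas, leaving $ap(d_1,r_2^*)$ as low as $\frac{\n-\f}{2}-\f-(1-\gamma)(\n-\f)$, well below the threshold for $node(d_1)$ to be added to any graph; so $node(d_1)$ can legitimately be absent from $G_{r_2^*}$ while $T_2$ is finalized there. Your other contradiction attempt in this sub-case---deriving mutual dependence of $T_1$ and $T_2$---is also not a contradiction: the ordering-dependency relation is deliberately not antisymmetric, and mutual dependence is exactly what places two transactions in one cyclic dependent batch.

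The paper handles $r_1>r_2$ by splitting on the type of $node(d_2)$. If $node(d_2)$ is \emph{solid}, your style of counting does work, because the starting bound is $ap(d_2,r_2)\ge\n-\f$, which pushes $ap(d_1,r_2)$ above $\frac{\n-\f}{2}$, forces the edge $e(d_1,d_2)$, and hence forces $node(d_1)$ to be finalized in $G_{r_2}$---a contradiction. If $node(d_2)$ is \emph{shaded}, no contradiction is available; instead the paper concatenates a path $p_2$ from $node(d_2)$ to a solid node $node(d_s)$ in $G_{r_2}$ with a path $p_1$ in $G_{r_1}$ ending at $node(d_1)$, and uses Lemma~\ref{lemma:earliergraph} together with Lemma~\ref{lemma:dependent} to show that all these transactions form a single cyclic dependent batch. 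Since ``ordered no later than'' is defined at the granularity of batches, $T_1$ and $T_2$ landing in the same batch suffices. You need this batch-level argument; the theorem cannot be proved by forcing $r_1^*\le r_2^*$.
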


\begin{proof}
    We denote by $G_{r_1}$ and $G_{r_2}$ in which $T_1$ and $T_2$ are ordered, respectively. If $r_1 < r_2$, then obviously $T_1$ is ordered before $T_2$ in the final transaction ordering. If $r_1 = r_2$, then from Lemma~\ref{lemma:earliergraph} we know that $node(d_1)$ is in the same or an earlier \emph{SCC} than $node(d_2)$. Thus, $T_1$ is ordered no later than $T_2$ in the final transaction ordering.

    If $r_1 > r_2$, then $node(d_2)$ is \emph{shaded}. Otherwise, assuming $node(d_2)$ is \emph{solid}, then there are at least $\n-\f$ \emph{committed local ordering} containing $d_2$ when $node(d_2)$ is added into $G_{r_2}$. From Lemma~\ref{lemma:edgedirection} we know that $G_{r_2}.weights[(d_2,d_1)] < \frac{\n-\f}{2}$, then $G_{r_2}.weights[(d_1,d_2)] > \frac{\n-\f}{2}$, and edge $e(d_1,d_2)$ exists. Therefore, $node(d_1)$ should be ordered in $G_{r_2}$ as it has a path to a \emph{solid} node, contradicting the fact that $r_1 > r_2$. Thus, $node(d_2)$ must be \emph{shaded} in $G_{r_2}$. Hence, there is some \emph{solid} node $node(d_s)$ such that there is a path $p_{2}$ from $node(d_2)$ to $node(d_s)$. We denote by $p_{1}$ the path in $G_{r_1}$ from the first ordered node to $node(d_1)$. Combining the following information:
    
    \begin{itemize}
        \item from Lemma~\ref{lemma:earliergraph} we know that all transactions on $p_1$, including $T-1$, are dependent on transaction $T_s$ of $node(d_s)$;
        \item from Lemma~\ref{lemma:dependent} we know that $T_2$ is dependent on $T_1$. 
        \item along path $p_2$, each transaction is dependent on the previous one, from $T_s$ to $T_2$.
    \end{itemize}

    Thus, the transactions on $p_1$ and $p_2$ form a \emph{cyclic dependent batch} $b$. For the transactions that are ordered in $G_{r_2}$ later than $T_s$, they are in the same \emph{SCC} as $T_s$ and then can be added into $b$. Therefore, even if $r_1 > r_2$, $T_1$ and $T_2$ are in the same \emph{cyclic dependent batch} in the final ordering, i.e., $T_1$ is ordered \emph{no later} than $T_2$.
\end{proof}
}

\section{Evaluation}\label{sec:eval}

We evaluate \FairDAG{} and \FairDAGRL{} by comparing their performance with other baseline protocols. 
We implemented the protocols~\cite{codebase} in Apache
ResilientDB (Incubating)~\cite{apache-resdb, geobft}. 
Apache ResilientDB is an open-source incubating blockchain project that supports various consensus protocols. It provides a fair comparison of each protocol by offering a high-performance framework. Researchers can focus
solely on their protocols without considering system structures
such as the network and thread models. We set up our experiments on CloudLAB m510 machines with 64 vCPUs and 64GB of DDR3 memory. Each replica and client run on a separate machine.

We compared \FairDAG{} and \FairDAGRL{} with the following baseline protocols:

\begin{itemize}
    \item \PBFT{}~\cite{pbftj}: A single-leader consensus protocol without fairness guarantees, $\n> 3\f$.
    \item \Pompe{}~\cite{pompe}: an absolute fairness protocol running on top of \PBFT{}, $\n> 3\f$.
    \item \Themis{}~\cite{themis}: a relative fairness protocol running on top of \PBFT{}, $\n > \frac{\f(2\gamma+2)}{2\gamma-1}$.
    \item \RCC{}~\cite{rcc}: a multi-proposer protocol that runs concurrent \PBFT{} instances without fairness guarantees, $\n> 3\f$.
    \item \Tusk{}~\cite{narwhal}, a \Changed{multi-proposer} DAG-based consensus protocol without fairness guarantees, $\n> 3\f$.
\end{itemize}

For \Themis{} and \FairDAGRL{}, we set $\gamma = 1$ in the experiments by default. And we implement the DAG layer of \FairDAG{} and \FairDAGRL{} on top of a variant of \Tusk{} with weak edges.

 \begin{figure}[t]
    \centering
    \setlength{\tabcolsep}{3pt}
    \scalebox{0.75}{\ref{tlbdlegend}}\\[5pt]
    \begin{tabular}{cc}
       \FairPerfTL &
       %\FairPerfTX
    \end{tabular}
    \caption{Throughput vs latency with $\f=8$.}
    \label{fig:tps-vs-lat}
    %\vspace{-6mm}
\end{figure}

%  \begin{figure}[t]
%  %\vspace{-16mm}
%     \centering
%     \setlength{\tabcolsep}{3pt}
%     \scalebox{1}{\ref{sklegend}}\\[5pt]
%     \begin{tabular}{cc}
%        \FairskTPSA &
%        \FairskTPS
%        % \FairskLAT
%        %\FairPerfTX
%     \end{tabular}
%     \caption{Throughput vs latency with $\f=8$.}
%     \label{fig:sk}
%     %\vspace{-6mm}
% \end{figure}

\begin{figure*}[t]
     \centering
     \setlength{\tabcolsep}{3pt}
     \scalebox{0.75}{\ref{ftpslegend}}\\[5pt]
     \begin{tabular}{c c c c}
        \FairPerfFTPS\hspace{2mm}
        \FairPerfFLAT\hspace{2mm}
        \FairRegionTPS\hspace{2mm}
        \FairRegionLAT
     \end{tabular}
     \caption{Performance of \FDAG{} and baseline protocols, with varying $\f$ (a,b), and varying number of regions (c,d).}
     \label{fig:scalability}
 \end{figure*}

 \begin{figure}[t]
    \centering
    \setlength{\tabcolsep}{3pt}
    \scalebox{0.5}{\ref{rashnulegend}}\\[5pt]
    \begin{tabular}{cc@{\quad}cc}

    \rashnufigure{\datarashnutput}{(a) Throughput}{\Changed{skewness}}{Throughput (txn/s)}{0.01, 0.5, 0.99}{0.5}
    \rashnufigure{\datarashnulat}{(b) Latency}{\Changed{skewness}}{Latency (ms)}{0.01, 0.5, 0.99}{0.5}
    
    \end{tabular}
    
    \caption{\Changed{Performance of Rashnu-enhanced variants vs. relative fairness protocols.}}
    \label{fig:rashnu}
    %\vspace{-4mm}
\end{figure}

\subsection{Scalability}~\label{sec:scalability}
In the scalability experiments, we measure two metrics:

\begin{itemize}
    \item {\em Throughput} -- the maximum number of transactions per second that the system reaches consensus.
    \item {\em Client Latency} -- the average duration between the time a client sends a transaction and the time the client receives $\f{+}1$ matching responses.
\end{itemize}

We compare the performance of the protocols with varying $\f$, the maximum number of faulty replicas allowed, from $5$ to $8$. With the same $\f$, different protocols have different minimum replica number requirements. For example, when $\f=5$, \Themis{} requires $\n=21$ replicas, while other protocols require $\n=16$ replicas.

Besides design, the performance of the protocols is highly related to the workloads. As shown in Figure~\ref{fig:tps-vs-lat}, where we set $\f=8$, as the workload increases, the throughput increases until the pipeline is fulfilled by the transactions. Then, after reaching the throughput limit, the latency increases as the workload increases. We define by \emph{optimal point} the point with the lowest latency while maintaining the highest throughput. And we evaluate their scalability at the \emph{optimal points} of the protocols with varying $\f$.

\begin{flushleft}
\textbf{Throughput.} Figure~\ref{fig:scalability} shows that
\Tusk{} and \RCC{} achieve higher throughput than other protocols because they have multiple proposers and \Changed{no overhead for fairness guarantees.} Due to the fairness overhead, when $\f=5$ and $\f=8$, \FairDAG{} reaches $83.5\%$ and $84.9\%$ throughput of \Tusk{}, while   \FairDAGRL{} reaches $11.9\%$ and $12.6\%$ throughput of \Tusk{}.
\end{flushleft}

However, compared to \Pompe{} and \Themis{}, the multi-proposer design of the DAG layer brings \FairDAG{} and \FairDAGRL{} advantages in throughput. When $\f = 5$ and $\f=8$, \FairDAG{} obtains $30.2\%$ and \Changed{$52.6\%$} higher throughput than \Pompe{}, respectively. Similarly, \FairDAGRL{} reaches \Changed{$7.5\%$ and $5.1\%$} higher throughput than \Themis{}.

\begin{flushleft}
\textbf{Latency.} 
Without the fairness overhead, \Tusk{} and \PBFT{}, as the underlying consensus protocols, have lower latency than the fairness protocols running on top of them. 
\end{flushleft}

With $\f = 5$ and $\f = 8$, \FairDAG{} latency is $7.1\%$ and $8.3\%$ higher than \Pompe{}, because \Tusk{}, the underlying DAG consensus protocol of \FairDAG{}, has a higher commit latency than \PBFT{}, the underlying consensus protocol of \Pompe{}{}. 

\FairDAGRL{} has a latency close to \Themis{} when $\f = 5$. As $\f$ grows, \FairDAGRL{} has a lower latency than \Themis{}, which is $20.9\%$ lower when $\f=8$. \FairDAGRL{} achieves a lower latency because \Themis{} needs $\f$ more correct replicas to guarantee fairness, which causes higher overhead for both consensus and ordering. By comparing the latency of \Themis{} with $\f=6$ and \FairDAGRL{} with $\f=8$, we can verify this claim: with the same replica number $\n=25$, \FairDAGRL{} achieves a $4.6\%$ higher latency than \Themis{}.

\textbf{Geo-distributed performance.} 
We conducted experiments under geo-distributed settings by deploying the systems across multiple AWS regions. Specifically, we varied the number of regions from 1 to 4. The regions include North Virginia, Oregon, London, and Zurich. We fixed $\f=8$ and deploys $\frac{\n}{k}$ replicas in each region, where $k$ is the number of regions.
Figure~\ref{fig:scalability} (c, d) show that in the geo-distributed setting, the latencies of all the protocols are high and increase with the number of regions, caused by the high inter-regional message delays. Furthermore, \FairDAG{} has higher throughput than the other fairness protocols. 

We found that for all protocols except the relative fairness protocols, increasing the batch size allowed us to achieve throughput values comparable to those in the single-region setting.  However, in \FairDAGRL{} and \Themis{}, a larger batch size leads to a higher overhead of the fairness layer, which increases quadratically with batch size. Moreover, while \FairDAGRL{} achieves only a $5.1\%$ throughput improvement over \Themis{} in a single-region setting, we observe that in the geo-distributed setting, \FairDAGRL{} outperforms \Themis{} by at least $42.1\%$. This significant gain is attributed to the robust performance of the underlying multi-proposer DAG-based consensus protocol in geo-scale settings with limited bandwidth and higher message delays.

\textbf{Data-dependent fairness.} \Rashnu{}~\cite{rashnu} proposes a technique to reduce the overhead of the fairness layer in relative fairness protocols by computing edge directions between only data-dependent transactions. This method is orthogonal to both \FairDAGRL{} and \Themis{}. We implementeded two \Rashnu{}-enhanced variants, called \textsc{Themis-Rashnu} and \textsc{FairDAG-RL-Rashnu}, and compared them to \Themis{} and \FairDAGRL{}. 

In this experiment, we implemented a transaction workload with keys following a Zipfian distribution. We fixed $\f=8$ and varied the skewness parameter $s$ from 0.01 to 0.99, where a higher skewness indicates greater data dependency between transactions. The results presented in Figure~\ref{fig:rashnu} show that the \Rashnu{} variants outperform the non-\Rashnu{} variants and perform better as the skewness decreases. This improvement stems from the reduced overhead in calculating edge directions when transactions are less interdependent.

\subsection{Tolerance to Byzantine Behavior}

Next, we will discuss the impact of Byzantine behaviors on the performance and transaction ordering fairness of the protocols.

\begin{figure}[t]
     \centering
     \setlength{\tabcolsep}{3pt}
    \scalebox{0.5}{\ref{tllegend}}\\[5pt]
     \begin{tabular}{cc}
        \AllTimeline 
     \end{tabular}
     \caption{Real-Time throughput with a faulty leader.}
     \label{fig:leaderfaulty}
\end{figure} 

\begin{flushleft}
    \textbf{Faulty leader.} In this experiment, we make the consensus leader replica in \Pompe{} and \Themis{} faulty, which would trigger a view-change for leader replacement. While for \FairDAG{} and \FairDAGRL{}, we make a replica faulty due to the multi-proposer design. Figure~\ref{fig:leaderfaulty} shows how the faulty leader affects the performance of \Themis{} and \Pompe{}. At time $7$, the faulty leader stops sending any messages. After a period without progress, a view-change is triggered to replace the faulty leader. At time $15$, the view-change is complete, and the throughput of \Pompe{} and \Themis{} recovers to the original level. In contrast, \FairDAGRL{} and \FairDAG{} are not affected because of the resilience provided by the multi-proposer design.
\end{flushleft}

\begin{flushleft}
    \textbf{Adversarial Manipulation.} We conduct two experiments in which Byzantine replicas attempt to manipulate transaction ordering. We evaluated the fairness quality of the fairness protocols under two Byzantine behaviors: (1) \emph{Reversing order}: in each round, the Byzantine replicas reverse the local orderings of the transactions it has received. (2) \emph{Targeted delay}: the Byzantine replicas intentionally delay targeted transactions by giving them higher local ordering indicators. In both cases, the Byzantine leader in \Pompe{} and \Themis{} excludes local orderings from $\f$ correct replicas.
\end{flushleft}

    For two transactions $T_1$ and $T_2$, we say that they are correctly ordered if $T_1$ is ordered before $T_2$ and: 
    \begin{enumerate}
        \item in relative fairness protocols, $weight[(d_1, d_2)] > \frac{n}{2}$;
        \item in absolute fairness protocols, the $\f{+}1$-th smallest local ordering indicator of $d_1$ is not larger than that of $d_2$.
    \end{enumerate} 
    For the \emph{reversing order} attack, we consider all transaction pairs; for the \emph{targeted delay} attack, we consider only the transaction pairs that involve the targeted transactions. 
    
    For \FairDAGRL{} and \Themis{}, we measure the ratio of correctly ordered pairs with different $Dist$ values, where $Dist(d_1, d_2) = \left| weight[(d_1, d_2)] - weight[(d_2, d_1)] \right|$. 
    
    For \FairDAG{} and \Pompe{}{}, we measure the ratio of correctly ordered pairs with different $Diff$ values, where $Diff(d_1, d_2)$ is the minimal number of Byzantine local ordering indicators needed to order $d_1$ before $d_2$. Formally, we denote by $asc\_ois^C_1[i]$ the $i$-th \emph{smallest} value in the correct local ordering indicators of $d_1$, and by $desc\_ois^C_2[j]$ the $j$-th \emph{largest} value in the correct local ordering indicators of $d_2$. To order $d_1$ before $d_2$, we need to find (1) a pair of $i$ and $j$ such that $asc\_ois^C_1[i] < desc\_ois^C_2[j]$, (2) $\f{+}1{-}i$ Byzantine local ordering indicators smaller than $asc\_ois^C_1[i]$, and (3) $\f{+}1{-}j$ Byzantine local ordering indicators larger than $desc\_ois^C_2[j]$. Thus, we have $Diff(d_1,d_2) = \min\{2(\f{+}1){-}i{-}j \mid asc\_ois^C_1[i] < desc\_ois^C_2[j]\}$, which is the minimal number of Byzantine local ordering indicators needed. We only consider transaction pairs with $Diff(d_1, d_2) > 0$.

    We set $\f=10$ and vary $\f_a$, the actual number of Byzantine replicas, from $0$ to $10$. For example, \Themis{}-7 denotes \Themis{} with $\f_a = 7$. As shown in Figure~\ref{fig:fairnessquality}, \FairDAGRL{} and \FairDAG{} consistently demonstrate better resilience against adversarial ordering manipulation in all experimental settings, compared to \Themis{} and \Pompe{}, respectively. The results substantiate our claim that \FDAG{} effectively mitigates adversarial ordering manipulation through the properties inherent in the DAG-based consensus layer.

\section{Related Work}\label{sec:related}

\fullversion{
\begin{flushleft}
\textbf{Fairness in BFT}    
\end{flushleft}
}

In traditional Byzantine Fault Tolerance (BFT) research, protocols~\cite{pbftj,bedrock,disth,poe} are designed to ensure both safety and liveness in the presence of malicious replicas. Although these protocols do not explicitly guarantee fair transaction ordering, they mitigate unfair ordering to some extent. Protocols such as HotStuff~\cite{hotstuff}, which employ leader rotation in a round-robin manner~\cite{hs2,beegees,jolteon,hs1,tendermint,spotless,gupta2025brief}, provide each participant with the opportunity to propose a block. Multi-proposer approaches, including concurrent consensus protocols~\cite{mirbft,spotless,rcc,giridharan2024autobahn} and DAG-based protocols~\cite{dagrider,narwhal,bullshark,shoal,shoalpp,mysticeti,fides}, enable multiple participants to propose blocks concurrently, ordering them globally through predetermined or randomized mechanisms. Although these protocols reduce the reliance on a single leader and distribute transaction ordering authority, a Byzantine participant can still manipulate the ordering of transactions within the blocks it proposes.

\begin{figure}[t]
    \centering
    \setlength{\tabcolsep}{3pt}
    \scalebox{0.5}{\ref{qualitylegend}}\\[5pt]
    \begin{tabular}{cc@{\quad}cc}

    \fairnessqualityfigure{\dataReverseOriginal}{(a) Reversing Order}{$Dist(d_1, d_2) = |weight[(d_1,d_2)]-weight[(d_2,d_1)]|$}{Ratio of Correctly Ordered Pairs}{\axisticksweight}{0.7}

    \fairnessqualityfigure{\dataReverse}{(b) Reversing Order}{$Dist(d_1, d_2)/\n$ $(\%)$}{Ratio of Correctly Ordered Pairs}{\axisticksquality}{0.7}\\

    \fairnessqualityfigure{\datafairnessqualityoriginal}{(c) \Changed{Targeted Delay}}{\Changed{$Dist(d_1, d_2)$}}{Ratio of Correctly Ordered Pairs}{\axisticksweight}{0}

    \fairnessqualityfigure{\datafairnessquality}{(d) \Changed{Targeted Delay}}{\Changed{$Dist(d_1, d_2)/\n$ $(\%)$}}{Ratio of Correctly Ordered Pairs}{\axisticksquality}{0}
    \end{tabular}
    \\[5pt]

    \scalebox{0.5}{\ref{qualitylegendab}}\\[5pt]
    \begin{tabular}{cc@{\quad}cc}

    \fairnessqualityfigureabsolute{\datareverseabsolute}{\Changed{(e) Reversing Order}}{\Changed{$Diff(d_1, d_2)$}}{Ratio of Correctly Ordered Pairs}{\axistickdiff}{0.5}
    \fairnessqualityfigureabsolute{\datafairnessqualityabsolute}{\Changed{(f) Targeted Delay}}{\Changed{$Diff(d_1, d_2)$}}{Ratio of Correctly Ordered Pairs}{\axistickdiff}{0.5}
    
    \end{tabular}
    
    \caption{Fairness quality of the \Changed{fairness protocols under adversarial ordering manipulation attacks.}}
    \label{fig:fairnessquality}
    %\vspace{-4mm}
\end{figure}

Some protocols seek to eliminate the block proposers’ oligarchy over the ordering of transactions within blocks via \emph{censorship resistance}~\cite{encryptfair,cachin2001secure,honeybadger,fides,garimidi2025multiple,agarwal2025time, li2025transaction, malkhi2022maximal, zarbafian2023lyra,stathakopoulou2021adding}. In these protocols, a transaction is encrypted until the ordering of the transaction is determined. However, block proposers can still engage in censorship based on metadata, such as IP addresses, or prioritize their own transactions, knowing the content of encrypted transactions.

There are several prior fairness protocols that generate final ordering with collected local orderings. Wendy~\cite{wendyfairness} guarantees \emph{Timed-Relative-Fairness} similar to \emph{Ordering Linearizability}, but it relies on synchronized local clocks, which are impractical in asynchronous networks. DCN~\cite{constantinescu2023fair} reaches $\delta$-Median Fairness such that $T_1$ can be ordered before $T_2$ if $T_1$ is sent long enough earlier than $T_2$. Aequitas~\cite{aequitas} guarantees batch-order-fairness but suffers from liveness issues due to the existence of infinite \emph{Condorcet Cycles}, which \Themis{} solves via a \emph{batch unspooling} mechanism. Quick-Order-Fairness~\cite{quickorder} reaches batch-order-fairness with $\n>3\f$ replicas but incurs $O(\n^3)$ communication complexity for the consensus leader. Rashnu~\cite{rashnu} improves \Themis{} performance by guaranteeing \emph{$\gamma$-Batch-Order-Fairness} between only data-dependent transactions, but suffers from the same problems as \Themis{}. SpeedyFair ~\cite{mu2024separation} pipelines the consensus layer and fairness layer, but still relies on a single leader to collect local orderings.
\emph{Ambush} attacks are identified in~\cite{ambush}, which \FairDAGRL{} inherently mitigates with the underlying DAG-based consensus.

\fullversion{
\begin{flushleft}
\textbf{Multi-Proposer protocols}    
\end{flushleft}

A significant amount of research~\cite{shoal,mysticeti,shoalpp,sailfish} has been dedicated to reducing the latency in DAG-based protocols. These works employ various techniques, including \emph{pipelining DAG waves}, \emph{fast commit rules}, \emph{multi-anchor}, and \emph{uncertified DAG}, to enhance the efficiency and speed of these protocols.

Concurrent consensus protocols~\cite{rcc,mirbft,spotless} represent a distinct category of multi-proposer consensus protocols. These protocols run multiple concurrent consensus instances independently, generating a final ordering by globally ordering the committed blocks in each instance, round by round. However, these protocols face several challenges that make them unsuitable as the underlying consensus mechanisms for fairness protocols. First, the presence of a straggler instance, which lags behind the others, can substantially decrease throughput and increase system latency. Second, since the progress of different instances is not synchronized, a malicious leader of an instance could manipulate transaction ordering by delaying the block proposal until other replicas have proposed their blocks. DAG-based protocols address these issues effectively, as a replica can proceed to the next round once there are $\n-\f$ committed DAG vertices in the current round, ensuring more efficient and reliable progression.

}

\section{Conclusion}\label{sec:conclusion}

In this paper, we introduced \FDAG{}, a fair-ordering framework designed to run fairness protocols atop DAG-based consensus protocols. Through theoretical demonstration and experimental evaluation, we show that unlike previous fairness protocols, \FairDAG{} and \FairDAGRL{}, the two variants of \FDAG{}, not only uphold fairness guarantees, but also achieve better performance under normal and adversarial conditions, effectively constraining adversarial manipulation of transaction ordering.

\begin{acks}
This work is partially funded by NSF Award Number 2245373.
\end{acks}

%\clearpage

\bibliographystyle{ACM-Reference-Format}
\balance
\bibliography{sources}

@article{extended-report,
  title={FairDAG: Consensus Fairness over Multi-Proposer Causal Design},
  author={Kang, Dakai and Chen, Junchao and Dinh, Tien Tuan Anh and Sadoghi, Mohammad},
  journal={arXiv preprint arXiv:2504.02194},
  year={2025}
}

@article{agarwal2025time,
  title={Time-Lock Encrypted Storage for Blockchains},
  author={Agarwal, Amit and Babel, Kushal and Das, Sourav and Gilkalaye, Babak Poorebrahim},
  journal={Cryptology ePrint Archive},
  year={2025}
}

@article{garimidi2025multiple,
  title={Multiple Concurrent Proposers: Why and How},
  author={Garimidi, Pranav and Neu, Joachim and Resnick, Max},
  journal={arXiv preprint arXiv:2509.23984},
  year={2025}
}

@misc{ambush,
      author = {Eunchan Park and Taeung Yoon and Hocheol Nam and Deepak Maram and Min Suk Kang},
      title = {On Frontrunning Risks in Batch-Order Fair Systems for Blockchains (Extended Version)},
      howpublished = {Cryptology {ePrint} Archive, Paper 2025/1168},
      year = {2025},
      doi = {10.1145/3719027.3744879},
      url = {https://eprint.iacr.org/2025/1168}
}

@inproceedings{giridharan2024autobahn,
  title={Autobahn: Seamless high speed BFT},
  author={Giridharan, Neil and Suri-Payer, Florian and Abraham, Ittai and Alvisi, Lorenzo and Crooks, Natacha},
  booktitle={Proceedings of the ACM SIGOPS 30th Symposium on Operating Systems Principles},
  pages={1--23},
  year={2024}
}

@inproceedings{gupta2025brief,
  title={Brief Announcement: Carry the Tail in Consensus Protocols},
  author={Gupta, Suyash and Kang, Dakai and Malkhi, Dahlia and Sadoghi, Mohammad},
  booktitle={39th International Symposium on Distributed Computing (DISC 2025)},
  pages={59--1},
  year={2025},
  organization={Schloss Dagstuhl--Leibniz-Zentrum f{\"u}r Informatik}
}

@misc{codebase,
  author       = {Kang, Dakai and Chen, Junchao and Dinh, Anh and Sadoghi, Mohammad},
  title        = {FairDAG},
  year         = {2025},
  url          = {https://github.com/apache/incubator-resilientdb/tree/fairdag},
  note         = {Accessed: 2025-04-01}
}

@article{fides,
  title={Fides: Scalable Censorship-Resistant DAG Consensus via Trusted Components},
  author={Xie, Shaokang and Kang, Dakai and Lyu, Hanzheng and Niu, Jianyu and Sadoghi, Mohammad},
  journal={arXiv preprint arXiv:2501.01062},
  year={2025}
}

@inproceedings{themis,
  title={Themis: Fast, strong order-fairness in byzantine consensus},
  author={Kelkar, Mahimna and Deb, Soubhik and Long, Sishan and Juels, Ari and Kannan, Sreeram},
  booktitle={Proceedings of the 2023 ACM SIGSAC Conference on Computer and Communications Security},
  pages={475--489},
  year={2023}
}

@inproceedings{pompe,
  title={Byzantine ordered consensus without byzantine oligarchy},
  author={Zhang, Yunhao and Setty, Srinath and Chen, Qi and Zhou, Lidong and Alvisi, Lorenzo},
  booktitle={14th USENIX Symposium on Operating Systems Design and Implementation (OSDI 20)},
  pages={633--649},
  year={2020}
}

@inproceedings{quickorder,
  title={Quick order fairness},
  author={Cachin, Christian and Mi{\'c}i{\'c}, Jovana and Steinhauer, Nathalie and Zanolini, Luca},
  booktitle={International Conference on Financial Cryptography and Data Security},
  pages={316--333},
  year={2022},
  organization={Springer}
}

@inproceedings{aequitas,
  title={Order-fairness for byzantine consensus},
  author={Kelkar, Mahimna and Zhang, Fan and Goldfeder, Steven and Juels, Ari},
  booktitle={Advances in Cryptology--CRYPTO 2020: 40th Annual International Cryptology Conference, CRYPTO 2020, Santa Barbara, CA, USA, August 17--21, 2020, Proceedings, Part III 40},
  pages={451--480},
  year={2020},
  organization={Springer}
}

@inproceedings{wendyfairness,
  title={Wendy, the good little fairness widget: Achieving order fairness for blockchains},
  author={Kursawe, Klaus},
  booktitle={Proceedings of the 2nd ACM Conference on Advances in Financial Technologies},
  pages={25--36},
  year={2020}
}

@article{li2025transaction,
  title={Transaction fairness in blockchains, revisited},
  author={Li, Rujia and Hu, Xuanwei and Wang, Qin and Duan, Sisi and Wang, Qi},
  journal={IEEE Transactions on Dependable and Secure Computing},
  year={2025},
  publisher={IEEE}
}

@article{rashnu,
  title={Rashnu: Data-Dependent Order-Fairness},
  author={Nagda, Heena and Singhal, Shubhendra Pal and Amiri, Mohammad Javad and Loo, Boon Thau},
  journal={Proceedings of the VLDB Endowment},
  volume={17},
  number={9},
  pages={2335--2348},
  year={2024},
  publisher={VLDB Endowment}
}

@misc{sailfish,
      author = {Nibesh Shrestha and Rohan Shrothrium and Aniket Kate and Kartik Nayak},
      title = {Sailfish: Towards Improving the Latency of {DAG}-based {BFT}},
      howpublished = {Cryptology ePrint Archive, Paper 2024/472},
      year = {2024}
}

@article{mysticeti,
  author       = {Kushal Babel and
                  Andrey Chursin and
                  George Danezis and
                  Lefteris Kokoris{-}Kogias and
                  Alberto Sonnino},
  title        = {Mysticeti: Low-Latency {DAG} Consensus with Fast Commit Path},
  journal      = {CoRR},
  volume       = {abs/2310.14821},
  year         = {2023},
  eprinttype    = {arXiv}
}

@inproceedings{bedrock,
  author       = {Mohammad Javad Amiri and
                  Chenyuan Wu and
                  Divyakant Agrawal and
                  Amr El Abbadi and
                  Boon Thau Loo and
                  Mohammad Sadoghi},
  editor       = {Laurent Vanbever and
                  Irene Zhang},
  title        = {The Bedrock of {B}yzantine Fault Tolerance: {A} Unified Platform for
                  {BFT} Protocols Analysis, Implementation, and Experimentation},
  booktitle    = {21st {USENIX} Symposium on Networked Systems Design and Implementation,
                  {NSDI} 2024, Santa Clara, CA, April 15-17, 2024},
  pages        = {371--400},
  publisher    = {{USENIX} Association},
  year         = {2024}
}

@inproceedings{bullshark,
  author       = {Alexander Spiegelman and
                  Neil Giridharan and
                  Alberto Sonnino and
                  Lefteris Kokoris{-}Kogias},
  editor       = {Heng Yin and
                  Angelos Stavrou and
                  Cas Cremers and
                  Elaine Shi},
  title        = {Bullshark: {DAG} {BFT} Protocols Made Practical},
  booktitle    = {Proceedings of the 2022 {ACM} {SIGSAC} Conference on Computer and
                  Communications Security, {CCS} 2022, Los Angeles, CA, USA, November
                  7-11, 2022},
  pages        = {2705--2718},
  publisher    = {{ACM}},
  year         = {2022}
}

@article{shoal,
  title={Shoal: Improving {DAG-BFT} latency and robustness},
  author={Spiegelman, Alexander and Arun, Balaji and Gelashvili, Rati and Li, Zekun},
  journal={arXiv preprint arXiv:2306.03058},
  year={2023}
}

@inproceedings{spotless,
      title={{SpotLess}: Concurrent Rotational Consensus Made Practical through Rapid View Synchronization}, 
      author={Dakai Kang and Sajjad Rahnama and Jelle Hellings and Mohammad Sadoghi},
        booktitle = {40th {IEEE} International Conference on Data Engineering, {ICDE} 2024,
               Utrecht, Netherlands, May 13-17, 2024},
  publisher = {{IEEE}},
  year      = {2024}
}

@inproceedings{disth,
  author       = {Suyash Gupta and
                  Sajjad Rahnama and
                  Shubham Pandey and
                  Natacha Crooks and
                  Mohammad Sadoghi},
  title        = {Dissecting {BFT} Consensus: In Trusted Components we Trust!},
  booktitle    = {Proceedings of the Eighteenth European Conference on Computer Systems},
  pages        = {521--539},
  publisher    = {{ACM}},
  year         = {2023},
  doi          = {10.1145/3552326.3587455},
}

@article{chainlink,
  title={Chainlink 2.0: Next steps in the evolution of decentralized oracle networks},
  author={Breidenbach, Lorenz and Cachin, Christian and Chan, Benedict and Coventry, Alex and Ellis, Steve and Juels, Ari and Koushanfar, Farinaz and Miller, Andrew and Magauran, Brendan and Moroz, Daniel and others},
  journal={Chainlink Labs},
  volume={1},
  pages={1--136},
  year={2021}
}

@inproceedings{heimbach2022risks,
  title={Risks and returns of uniswap v3 liquidity providers},
  author={Heimbach, Lioba and Schertenleib, Eric and Wattenhofer, Roger},
  booktitle={Proceedings of the 4th ACM Conference on Advances in Financial Technologies},
  pages={89--101},
  year={2022}
}

@article{defi,
    title       = {{DeFi} Boom Has Saved {Bitcoin} From Plummeting},
    url         = {https://www.forbes.com/sites/christopherbrookins/2020/07/12/defi-boom-has-saved-bitcoin-from-plummeting/},
    author      = {Christopher Brookins},
    year        = {2020},
    journal     = {Forbes}
}

@article{defieth,
    author      = {Paddy Baker and Omkar Godbole},
    title       = {Ethereum Fees Soaring to 2-Year High: Coin Metrics},
    url         = {https://www.coindesk.com/defi-hype-has-sent-ethereum-fees-soaring-to-2-year-high-coin-metrics},
    year        = {2020},
    journal     = {{CoinDesk}}
}

@article{hs1,
  title={HotStuff-1: Linear Consensus with One-Phase Speculation},
  author={Kang, Dakai and Gupta, Suyash and Malkhi, Dahlia and Sadoghi, Mohammad},
  journal={arXiv preprint arXiv:2408.04728},
  year={2024}
}

@article{geobft,
    author      = {Gupta, Suyash and Rahnama, Sajjad and Hellings, Jelle and Sadoghi, Mohammad},
    title       = {{ResilientDB}: Global Scale Resilient Blockchain Fabric},
    year        = {2020},
    publisher   = {VLDB Endowment},
    volume      = {13},
    number      = {6},
    doi         = {10.14778/3380750.3380757},
    journal     = {Proc. VLDB Endow.},
    pages       = {868--883}
}

@inproceedings{poe,
  author    = {Suyash Gupta and
               Jelle Hellings and
               Sajjad Rahnama and
               Mohammad Sadoghi},
  editor    = {Yannis Velegrakis and
               Demetris Zeinalipour{-}Yazti and
               Panos K. Chrysanthis and
               Francesco Guerra},
  title     = {{Proof-of-Execution}: Reaching Consensus through Fault-Tolerant Speculation},
  booktitle = {Proceedings of the 24th International Conference on Extending Database
               Technology, {EDBT} 2021, Nicosia, Cyprus, March 23 - 26, 2021},
  pages     = {301--312},
  publisher = {OpenProceedings.org},
  year      = {2021},
  doi       = {10.5441/002/edbt.2021.27},
}

@inproceedings{rcc,
  author    = {Suyash Gupta and
               Jelle Hellings and
               Mohammad Sadoghi},
  title     = {{RCC:} Resilient Concurrent Consensus for High-Throughput Secure Transaction
               Processing},
  booktitle = {37th {IEEE} International Conference on Data Engineering, {ICDE} 2021,
               Chania, Greece, April 19-22, 2021},
  pages     = {1392--1403},
  publisher = {{IEEE}},
  year      = {2021},
  doi       = {10.1109/ICDE51399.2021.00124},
}

@article{pbftj,
    author      = {Castro, Miguel and Liskov, Barbara},
    title       = {Practical {B}yzantine Fault Tolerance and Proactive Recovery},
    journal     = {ACM Trans. Comput. Syst.},
    volume      = {20},
    number      = {4},
    year        = {2002},
    pages       = {398--461},
    doi         = {10.1145/571637.571640},
    publisher   = {ACM}
}

@inproceedings{hotstuff,
    author         = {Yin, Maofan and Malkhi, Dahlia and Reiter, Michael K. and Gueta, Guy Golan and Abraham, Ittai},
    title          = {{HotStuff}: {BFT} Consensus with Linearity and Responsiveness},
    booktitle      = {Proceedings of the ACM Symposium on Principles of Distributed Computing},
    year           = {2019},
    pages          = {347--356},
    doi            = {10.1145/3293611.3331591},
    publisher      = {ACM}
}

@misc{mirbft,
    title       = {{Mir-BFT}: High-Throughput {BFT} for Blockchains},
    author      = {Chrysoula Stathakopoulou and Tudor David and Marko Vukolic},
    year        = {2019},
    url         = {http://arxiv.org/abs/1906.05552}
}

@book{cryptobook,
    author      = {Jonathan Katz and Yehuda Lindell},
    title       = {Introduction to Modern Cryptography},
    edition     = {2nd},
    year        = {2014},
    publisher   = {Chapman and Hall/CRC}
}

@article{bracha1987asynchronous,
  author    = {Gabriel Bracha},
  title     = {Asynchronous Byzantine agreement protocols},
  journal   = {Information and Computation},
  volume    = {75},
  number    = {2},
  pages     = {130--143},
  year      = {1987},
  publisher = {Elsevier}
}

@inproceedings{cachin2005asynchronous,
  author    = {Christian Cachin and Stefano Tessaro},
  title     = {Asynchronous verifiable information dispersal},
  booktitle = {24th IEEE Symposium on Reliable Distributed Systems (SRDS'05)},
  pages     = {191--201},
  year      = {2005},
  publisher = {IEEE},
  doi       = {10.1109/SRDS.2005.36}
}

@inproceedings{guerraoui2019scalable,
  author    = {Rachid Guerraoui and Petr Kuznetsov and Matteo Monti and Matej Pavlovic and Dragos-Adrian Seredinschi},
  title     = {Scalable Byzantine Reliable Broadcast},
  booktitle = {33rd International Symposium on Distributed Computing (DISC 2019)},
  series    = {Leibniz International Proceedings in Informatics (LIPIcs)},
  volume    = {146},
  pages     = {21:1--21:17},
  year      = {2019},
  publisher = {Schloss Dagstuhl--Leibniz-Zentrum fuer Informatik},
  doi       = {10.4230/LIPIcs.DISC.2019.21}
}

@book{brandt2016handbook,
  title     = {Handbook of Computational Social Choice},
  author    = {Brandt, Felix and Conitzer, Vincent and Endriss, Ulle and Lang, J{\'e}r{\^o}me and Procaccia, Ariel D.},
  year      = {2016},
  publisher = {Cambridge University Press},
  address   = {Cambridge, UK}
}

@book{condorcet1785essay,
  author    = {Condorcet, Marquis de},
  title     = {Essay on the Application of Analysis to the Probability of Majority Decisions},
  year      = {1785},
  publisher = {Imprimerie Royale},
  address   = {Paris}
}

@misc{ethereum,
  author       = {Vitalik Buterin},
  title        = {Ethereum White Paper: A Next-Generation Smart Contract and Decentralized Application Platform},
  year         = {2013},
  howpublished = {\url{https://ethereum.org/en/whitepaper/}}
}

@misc{bitcoin,
    title       = {Bitcoin: A Peer-to-Peer Electronic Cash System},
    author      = {Nakamoto, Satoshi},
    url         = {https://bitcoin.org/bitcoin.pdf},
    year        = {2009}
}

@article{solana,
  title={Solana: A new architecture for a high performance blockchain v0. 8.13},
  author={Yakovenko, Anatoly},
  journal={Whitepaper},
  year={2018}
}

@inproceedings{narwhal,
  title={Narwhal and {Tusk}: a {DAG}-based mempool and efficient {BFT} consensus},
  author={Danezis, George and Kokoris-Kogias, Lefteris and Sonnino, Alberto and Spiegelman, Alexander},
  booktitle={Proceedings of the Seventeenth European Conference on Computer Systems},
  pages={34--50},
  year={2022},
  doi       ={10.1145/3492321.3519594},
  publisher ={ACM}
}

@inproceedings{dagrider,
  title={All you need is dag},
  author={Keidar, Idit and Kokoris-Kogias, Eleftherios and Naor, Oded and Spiegelman, Alexander},
  booktitle={Proceedings of the 2021 ACM Symposium on Principles of Distributed Computing},
  pages={165--175},
  year={2021}
}

@inproceedings{honeybadger,
  title={The honey badger of BFT protocols},
  author={Miller, Andrew and Xia, Yu and Croman, Kyle and Shi, Elaine and Song, Dawn},
  booktitle={Proceedings of the 2016 ACM SIGSAC conference on computer and communications security},
  pages={31--42},
  year={2016}
}

@article{hs2,
  title={Hotstuff-2: Optimal two-phase responsive bft},
  author={Malkhi, Dahlia and Nayak, Kartik},
  journal={Cryptology ePrint Archive},
  year={2023}
}

@article{tendermint,
  author    = {Ethan Buchman and
               Jae Kwon and
               Zarko Milosevic},
  title     = {The latest gossip on {BFT} consensus},
  journal   = {CoRR},
  volume    = {abs/1807.04938},
  year      = {2018},
  archivePrefix = {arXiv},
}

@article{mysten,
  title={Sui lutris: A blockchain combining broadcast and consensus},
  author={Blackshear, Same and Chursin, Andrey and Danezis, George and Kichidis, Anastasios and Kokoris-Kogias, Lefteris and Li, Xun and Logan, Mark and Menon, Ashok and Nowacki, Todd and Sonnino, Alberto and others},
  journal={arXiv preprint arXiv:2310.18042},
  year={2023}
}

@misc{aptos,
  title        = {Aptos Whitepaper: Safe, Scalable, and Upgradeable Web3 Infrastructure},
  year         = {2022},
  url     = {https://arxiv.org/abs/2201.01107}
}

@inproceedings{beegees,
author = {Giridharan, Neil and Suri-Payer, Florian and Ding, Matthew and Howard, Heidi and Abraham, Ittai and Crooks, Natacha},
title = {{BeeGees}: Stayin' Alive in Chained {BFT}},
year = {2023},
isbn = {9798400701214},
publisher = {Association for Computing Machinery},
address = {New York, NY, USA},
doi = {10.1145/3583668.3594572},
booktitle = {Proceedings of the 2023 ACM Symposium on Principles of Distributed Computing},
pages = {233–243},
numpages = {11},
location = {Orlando, FL, USA},
series = {PODC '23}
}

@inproceedings{jolteon,
  title={Jolteon and {D}itto: Network-adaptive efficient consensus with asynchronous fallback},
  author={Gelashvili, Rati and Kokoris-Kogias, Lefteris and Sonnino, Alberto and Spiegelman, Alexander and Xiang, Zhuolun},
  booktitle={International conference on financial cryptography and data security},
  pages={296--315},
  year={2022},
  organization={Springer}
}

@inproceedings{qin2022quantifying,
  title={Quantifying blockchain extractable value: How dark is the forest?},
  author={Qin, Kaihua and Zhou, Liyi and Gervais, Arthur},
  booktitle={2022 IEEE Symposium on Security and Privacy (SP)},
  pages={198--214},
  year={2022},
  organization={IEEE}
}

@article{constantinescu2023fair,
  title={A fair and resilient decentralized clock network for transaction ordering},
  author={Constantinescu, Andrei and Ghinea, Diana and Heimbach, Lioba and Wang, Zilin and Wattenhofer, Roger},
  journal={arXiv preprint arXiv:2305.05206},
  year={2023}
}

@article{defismartcontract,
  title={Decentralized finance: On blockchain-and smart contract-based financial markets},
  author={Sch{\"a}r, Fabian},
  journal={FRB of St. Louis Review},
  year={2021}
}

@article{defiartical,
  title={Decentralized finance},
  author={Zetzsche, Dirk A and Arner, Douglas W and Buckley, Ross P},
  journal={Journal of Financial Regulation},
  volume={6},
  number={2},
  pages={172--203},
  year={2020},
  publisher={Oxford University Press}
}

@article{flashboysfrontrunning,
  title={Flash boys 2.0: Frontrunning, transaction reordering, and consensus instability in decentralized exchanges},
  author={Daian, Philip and Goldfeder, Steven and Kell, Tyler and Li, Yunqi and Zhao, Xueyuan and Bentov, Iddo and Breidenbach, Lorenz and Juels, Ari},
  journal={arXiv preprint arXiv:1904.05234},
  year={2019}
}

@inproceedings{sandwichattacks,
  title={Impact and user perception of sandwich attacks in the defi ecosystem},
  author={Wang, Ye and Zuest, Patrick and Yao, Yaxing and Lu, Zhicong and Wattenhofer, Roger},
  booktitle={Proceedings of the 2022 CHI Conference on Human Factors in Computing Systems},
  pages={1--15},
  year={2022}
}

@inproceedings{orderattacks,
  title={Attacking the defi ecosystem with flash loans for fun and profit},
  author={Qin, Kaihua and Zhou, Liyi and Livshits, Benjamin and Gervais, Arthur},
  booktitle={International conference on financial cryptography and data security},
  pages={3--32},
  year={2021},
  organization={Springer}
}

@inproceedings{liquiditymanipulation,
  title={An empirical study of defi liquidations: Incentives, risks, and instabilities},
  author={Qin, Kaihua and Zhou, Liyi and Gamito, Pablo and Jovanovic, Philipp and Gervais, Arthur},
  booktitle={Proceedings of the 21st ACM Internet Measurement Conference},
  pages={336--350},
  year={2021}
}

@inproceedings{encryptfair,
  title={A fair consensus protocol for transaction ordering},
  author={Asayag, Avi and Cohen, Gad and Grayevsky, Ido and Leshkowitz, Maya and Rottenstreich, Ori and Tamari, Ronen and Yakira, David},
  booktitle={2018 IEEE 26th International Conference on Network Protocols (ICNP)},
  pages={55--65},
  year={2018},
  organization={IEEE}
}

@inproceedings{cachin2001secure,
  title={Secure and efficient asynchronous broadcast protocols},
  author={Cachin, Christian and Kursawe, Klaus and Petzold, Frank and Shoup, Victor},
  booktitle={Annual International Cryptology Conference},
  pages={524--541},
  year={2001},
  organization={Springer}
}

@misc{apache-resdb,
    title   = {Apache ResilientDB (Incubating)},
    url    = {https://resilientdb.incubator.apache.org/},
    year    = {2024}
}

@article{shoalpp,
  title={Shoal++: High throughput dag bft can be fast!},
  author={Arun, Balaji and Li, Zekun and Suri-Payer, Florian and Das, Sourav and Spiegelman, Alexander},
  journal={arXiv preprint arXiv:2405.20488},
  year={2024}
}

@article{caerus,
  title={Caerus: Low-Latency Distributed Transactions for Geo-Replicated Systems},
  author={Hildred, Joshua and Abebe, Michael and Daudjee, Khuzaima},
  journal={Proceedings of the VLDB Endowment},
  volume={17},
  number={3},
  pages={469--482},
  year={2023},
  publisher={VLDB Endowment}
}

@inproceedings{calvin,
  title={Calvin: fast distributed transactions for partitioned database systems},
  author={Thomson, Alexander and Diamond, Thaddeus and Weng, Shu-Chun and Ren, Kun and Shao, Philip and Abadi, Daniel J},
  booktitle={Proceedings of the 2012 ACM SIGMOD international conference on management of data},
  pages={1--12},
  year={2012}
}

@article{slog,
  title={Slog: Serializable, low-latency, geo-replicated transactions},
  author={Ren, Kun and Li, Dennis and Abadi, Daniel J},
  journal={Proceedings of the VLDB Endowment},
  volume={12},
  number={11},
  year={2019}
}

@inproceedings{dynamast,
  title={DynaMast: Adaptive dynamic mastering for replicated systems},
  author={Abebe, Michael and Glasbergen, Brad and Daudjee, Khuzaima},
  booktitle={2020 IEEE 36th international conference on data engineering (ICDE)},
  pages={1381--1392},
  year={2020},
  organization={IEEE}
}

@article{lowlatency,
  title={Low-latency multi-datacenter databases using replicated commit},
  author={Mahmoud, Hatem and Nawab, Faisal and Pucher, Alexander and Agrawal, Divyakant and El Abbadi, Amr},
  journal={Proceedings of the VLDB Endowment},
  volume={6},
  number={9},
  pages={661--672},
  year={2013},
  publisher={VLDB Endowment}
}

@article{blockchainmeetsdb,
author = {Nathan, Senthil and Govindarajan, Chander and Saraf, Adarsh and Sethi, Manish and Jayachandran, Praveen},
title = {Blockchain meets database: design and implementation of a blockchain relational database},
year = {2019},
issue_date = {July 2019},
publisher = {VLDB Endowment},
volume = {12},
number = {11},
issn = {2150-8097},
url = {https://doi.org/10.14778/3342263.3342632},
doi = {10.14778/3342263.3342632},
journal = {Proc. VLDB Endow.},
month = jul,
pages = {1539–1552},
numpages = {14}
}

@article{spitz,
author = {Zhang, Meihui and Xie, Zhongle and Yue, Cong and Zhong, Ziyue},
title = {Spitz: a verifiable database system},
year = {2020},
issue_date = {August 2020},
publisher = {VLDB Endowment},
volume = {13},
number = {12},
issn = {2150-8097},
url = {https://doi.org/10.14778/3415478.3415567},
doi = {10.14778/3415478.3415567},
month = aug,
pages = {3449–3460},
numpages = {12}
}

@article{intime,
author = {Sun, Weijie and Xu, Zihuan and Ni, Wangze and Chen, Lei},
title = {InTime: Towards Performance Predictability In Byzantine Fault Tolerant Proof-of-Stake Consensus},
year = {2025},
issue_date = {February 2025},
publisher = {Association for Computing Machinery},
address = {New York, NY, USA},
volume = {3},
number = {1},
url = {https://doi.org/10.1145/3709740},
doi = {10.1145/3709740},
journal = {Proc. ACM Manag. Data},
month = feb,
articleno = {47},
numpages = {27},
keywords = {PoS consensus, blockchain, incentive design}
}

@article{mu2024separation,
  title={Separation is good: A faster order-fairness Byzantine consensus},
  author={Mu, Ke and Yin, Bo and Asheralieva, Alia and Wei, Xuetao},
  year={2024},
  publisher={Loughborough University}
}

@inproceedings{stathakopoulou2021adding,
  title={Adding fairness to order: Preventing front-running attacks in bft protocols using tees},
  author={Stathakopoulou, Chrysoula and R{\"u}sch, Signe and Brandenburger, Marcus and Vukoli{\'c}, Marko},
  booktitle={2021 40th International Symposium on Reliable Distributed Systems (SRDS)},
  pages={34--45},
  year={2021},
  organization={IEEE}
}

@inproceedings{zarbafian2023lyra,
  title={Lyra: Fast and scalable resilience to reordering attacks in blockchains},
  author={Zarbafian, Pouriya and Gramoli, Vincent},
  booktitle={2023 IEEE International Parallel and Distributed Processing Symposium (IPDPS)},
  pages={929--939},
  year={2023},
  organization={IEEE}
}

@article{malkhi2022maximal,
  title={Maximal extractable value (mev) protection on a dag},
  author={Malkhi, Dahlia and Szalachowski, Pawel},
  journal={arXiv preprint arXiv:2208.00940},
  year={2022}
}

@article{yang2022sok,
  title={Sok: Mev countermeasures: Theory and practice},
  author={Yang, Sen and Zhang, Fan and Huang, Ken and Chen, Xi and Yang, Youwei and Zhu, Feng},
  journal={arXiv preprint arXiv:2212.05111},
  year={2022}
}

@inproceedings{weintraub2022flash,
  title={A flash (bot) in the pan: measuring maximal extractable value in private pools},
  author={Weintraub, Ben and Torres, Christof Ferreira and Nita-Rotaru, Cristina and State, Radu},
  booktitle={Proceedings of the 22nd ACM Internet Measurement Conference},
  pages={458--471},
  year={2022}
}

@article{zhang2024no,
  title={No fish is too big for flash boys! frontrunning on DAG-based blockchains},
  author={Zhang, Jianting and Kate, Aniket},
  journal={Cryptology ePrint Archive},
  year={2024}
}

@inproceedings{richricher,
author = {Huang, Yuming and Tang, Jing and Cong, Qianhao and Lim, Andrew and Xu, Jianliang},
title = {Do the Rich Get Richer? Fairness Analysis for Blockchain Incentives},
year = {2021},
isbn = {9781450383431},
publisher = {Association for Computing Machinery},
address = {New York, NY, USA},
url = {https://doi.org/10.1145/3448016.3457285},
doi = {10.1145/3448016.3457285},
pages = {790–803},
numpages = {14},
keywords = {PoS, PoW, blockchain, fairness, incentive},
location = {Virtual Event, China},
series = {SIGMOD '21}
}

@article{fairnessmatters,
author = {Sun, Weijie and Xu, Zihuan and Chen, Lei},
title = {Fairness Matters: A Tit-for-Tat Strategy Against Selfish Mining},
year = {2022},
issue_date = {September 2022},
publisher = {VLDB Endowment},
volume = {15},
number = {13},
issn = {2150-8097},
url = {https://doi.org/10.14778/3565838.3565856},
doi = {10.14778/3565838.3565856},
journal = {Proc. VLDB Endow.},
month = sep,
pages = {4048–4061},
numpages = {14}
}

\end{document}